\newtheorem{theorem}{Theorem}
\newtheorem{coro}[theorem]{Corollary}
\newtheorem{lemma}[theorem]{Lemma}
\newtheorem{proposition}[theorem]{Proposition}
\newtheorem{defn}[theorem]{Definition}
\newcommand{\route}{\text{\sc{ipp}}}
\newcommand{\IPP}{\route}
\newcommand{\flow}{\text{\emph{flow}}}
\newcommand{\opt}{\text{\textsc{opt}}}
\newcommand{\pmax}{p_{\max}}
\newcommand{\pmaxst}{p_{\max}^{st}}
\newcommand{\eqdf}{\triangleq}
\newcommand{\alg}{\text{\sc{alg}}}
\newenvironment{proof sketch}[1]{\noindent {\emph{Proof sketch of #1:}}}{\hfill \qed}
\newcommand{\cut}{\textit{cut}}
\newcommand{\diam}{\textit{diam}}
\newcommand{\dist}{\textit{dist}}
\newcommand{\far}{\text{\emph{Far}}}
\newcommand{\near}{\text{\emph{Near}}}
\newcommand{\algf}{\alg_{\far^+}}
\newcommand{\algn}{\alg_{\near}}
\newcommand{\RIPP}{\IPP(\far^+ \mid \pmax)}
\newcommand{\RIPPt}{\IPP^{\lambda}}
\newcommand{\Rinj}{\RIPPt_{1/4}}
\newcommand{\algfar}{\algf}
\newcommand{\hl}{\tau}
\newcommand{\vl}{Q}
\begin{document} 
\title{Online Packet-Routing in Grids with Bounded Buffers \thanks{Preliminary
    versions of this manuscript appeared in the proceedings of ICALP
    2010~\cite{DBLP:conf/icalp/EvenM10} and SPAA 2011~\cite{DBLP:conf/spaa/EvenM11}.}
}

\author{Guy Even
\thanks{School of Electrical Engineering, Tel-Aviv Univ., Tel-Aviv 69978, Israel. ({\tt guy@eng.tau.ac.il}).}
\and
Moti Medina
\thanks{School of Electrical Engineering, Tel-Aviv Univ., Tel-Aviv 69978, Israel. ({\tt medinamo@eng.tau.ac.il}).
Partially funded
by the Israeli Ministry of Science and Technology.}}

\maketitle
\begin{abstract}

We present deterministic and randomized algorithms for the problem
  of online packet routing in grids in the competitive network throughput
  model~\cite{AKOR}. In this model the network  has nodes with bounded buffers and bounded
  link capacities. The goal in this model is to maximize the throughput, i.e., the
  number of delivered packets.

  Our deterministic algorithm is the first online
  algorithm with an $O\left(\log^{O(1)}(n)\right)$
  competitive ratio for uni-directional grids (where $n$
  denotes the size of the network).  The deterministic
  online algorithm is centralized and handles packets
  with deadlines.  This algorithm is applicable to
  various ranges of values of buffer sizes and
  communication link capacities. In particular, it holds
  for buffer size and communication link capacity in the
  range $[3 \ldots \log n]$.

Our randomized algorithm achieves an expected competitive ratio of $O(\log n)$ for
the uni-directional line. This algorithm is applicable to a wide range of buffer
  sizes and communication link capacities. In particular, it holds also for unit size
  buffers and unit capacity links.  This algorithm improves the best previous
  $O(\log^2 n)$-competitive ratio of Azar and Zachut~\cite{AZ}.
\end{abstract}

\paragraph{Keywords.}
Online Algorithms, Packet Routing,  Bounded Buffers, Admission Control, Grid Networks
\thispagestyle{empty}
\section{Introduction}
Large scale communication networks partition messages into packets so that high
bandwidth links can support multiple sessions simultaneously. Packet routing is used
by the Internet as well as telephony networks and cellular networks. Thus, the
development of algorithms that can route packets between different pairs of nodes is
a fundamental problem in networks. In a typical setting, requests for routing packets
arrive over time, thus calling for the development of online packet routing
algorithms. The holy grail of packet routing is to develop online distributed
algorithms whose performance is competitive with respect to multiple criteria, such
as: throughput (i.e., deliver as many packets as possible), delay (i.e., guarantee
arrival of packets on time), stability (e.g., constant rate, avoid buffer overflow) ,
fairness (i.e., fair sharing of resources among users), etc.  From a theoretical
point of view, there is still a huge gap between known lower bounds and upper bounds
for packet routing even in the simple setting of directed paths and centralized
algorithms.

We study the ``Competitive Network Throughput Model'' introduced by~\cite{AKOR} for
dynamic routing on networks with bounded buffers.  The goal is to route packets
(i.e., constant length formatted data) in a network of $n$ nodes.  Nodes in this
model are switches with local memories called buffers.  An incoming packet is either
forwarded to a neighbor switch, stored in the buffer, or erased. The resources of a
packet network are specified by two parameters: $c$ - the capacity of links and $B$ -
the size of buffers.  The capacity of a link is an upper bound on the number of
packets that can be transmitted in one time step along the link.  The buffer size is
the maximum number of packets that can be stored in a node.

\subsection{Previous Work}
Algorithms for dynamic routing on networks with bounded
buffers have been studied both in theory and in practice.
The networks we study are uni-directional grids of $d$
dimensions. Such $2$-dimensional grids with or without
buffers serve as crossbars in networks
(see~\cite{ARSU,AKRR,T} for many references from the
networking community). Thus, even centralized algorithms
for this task are of interest since they can be used to
control a crossbar.

\paragraph{Online Algorithms for Uni-directional Lines.}
Our work on uni-directional line networks is based on a
sequence of papers starting with~\cite{AKOR}.
In~\cite{AKOR}, a lower bound of $\Omega(\sqrt{n})$ was
proved for the greedy algorithm on uni-directional lines if
the buffer size $B$ is at least two. For the case $B=1$ (in
a slightly different model), an $\Omega(n)$ lower bound for
any deterministic algorithm was proved by~\cite{AZ,AKK}.
Both~\cite{AZ} and~\cite{AKK} developed, among other
things, online randomized centralized algorithms for
uni-directional lines with $B>1$. In~\cite{AKK} an
$O(\log^3 n)$-competitive randomized centralized algorithm
was presented for buffer size $B$ at least $2$. For the
case $B\geq 2$, ~\cite{AKK} proved that nearest-to-go is
$\tilde{O}(\sqrt{n})$-competitive. For the case $B=1$,
~\cite{AKK} presented a randomized
$\tilde{O}(\sqrt{n})$-competitive distributed algorithm.
(This algorithm also applies to rooted trees when the
packet destinations are the root.) In~\cite{AZ}, an
$O(\log^2 n)$-competitive randomized algorithm was
presented for the case $B\geq 2$. (This algorithm also
applies to rings and trees.)

\paragraph{Online Algorithms for Uni-directional Grids.}
Angelov et al.~\cite{AKK} showed that the competitive ratio
of greedy algorithms in uni-directional $2$-dimensional
grids is $\Omega(\sqrt{n})$ and that nearest-to-go policy
achieves a competitive ratio of $\tilde{\Theta}(n^{2/3})$.

\paragraph{Other Related Results.}
Kleinberg and Tardos~\cite{KT} studied the disjoint path problem in undirected planar
graphs (see~\cite{KT} for a formal description of the family of graphs for which
their results hold).  They presented constant approximation randomized algorithm for
this problem as well as an online algorithm with logarithmic competitive ratio.

Leighton et al.~\cite{leighton1994packet} and subsequent
works~\cite{leighton1999fast,RT, srinivasan1997constant} deal with a different model
for packet routing. In this model, there are unbounded input queues and bounded
intermediate buffers. In addition, each packet comes with a path along which it is
sent. The latency of each packet is $O(C+D)$, where $C$ denotes the maximum
congestion and $D$ denotes the length of a longest path.

Offline algorithms for trees and meshes were studied in~\cite{AKRR} . They obtained a
logarithmic approximation ratio for unbounded buffers and a constant approximation
ratio for bufferless networks. Offline packet routing for uni-directional lines was
studied in~\cite{RR}.

\renewcommand{\arraystretch}{1.3}
\begin{table}
\begin{centering}
\begin{tabular}{|c|c|c|c|c|c|}
\hline
Paper & $d$ & Competitive Ratio  & Det. $\backslash$ Rand.& $B$ & Remarks \tabularnewline
\hline
\hline
\cite{AKK}  & $2$ & $\tilde{\Theta}(n^{2/3})$  & det. & $>1$ &  distributed, nearest-to-go,
$1$-bend routing\tabularnewline
\cite{AKK}  & $1$ & $\tilde{O}(\sqrt{n})$  &det.  & $>1$ & distributed, nearest-to-go\tabularnewline
\cite{AKK}  & $1$ & $\tilde{O}(\sqrt{n})$  &rand.  & $=1$ & shared randomness, distributed\tabularnewline
\cite{AKK}  & $1$ & $O(\log^3 n)$  &rand.  & $>1$ &  centralized\tabularnewline
\cite{AZ}   & $1$ & $O(\log^2 n)$ & rand. &  $>1$ &  centralized, FIFO buffers\tabularnewline
\hline
\end{tabular}
\par\end{centering}
\caption{Previous online algorithms for packet routing. The networks are
  uni-directional lines or two dimensional directed grids with
  unit link capacities.}
\label{table:previous work}
\end{table}
\renewcommand{\arraystretch}{1}

\subsection{Our Results}
We present online algorithms for packet routing in
$d$-dimensional uni-directional grids (for $d=O(1)$) as
follows.

\paragraph{Deterministic Online algorithm.}
We present a centralized \emph{deterministic} online
algorithm for packet routing in uni-directional grids with
$n$ nodes. Our algorithm achieves a polylogarithmic
competitive ratio for a wide combination of parameters
described below. (The buffer size is denoted by $B$ and the
link capacities are denoted by $c$.) The deterministic
packet-routing algorithm handles requests with deadlines,
allows  preemptions (i.e., packets may be dropped before
they reach their destination), and employs adaptive routing
(i.e., part of the route is computed while the packet is
traveling to its destination).

\begin{enumerate}[(i)]
\item For $B,c\in [3 \ldots \log n]$, the competitive
    ratio of the algorithm is $O(\log ^{d+4} n)$ for
    uni-directional grids of dimension $d$.
\item For $B=0$ and $c\geq 3$, the competitive ratio of
    the algorithm is $O(\log ^{d+2} n)$ for
    uni-directional grids of $d$ dimensions. In the
    trivial case of a uni-directional line (i.e., $d=1$),
    our algorithm is degenerated to the nearest-to-go
    policy~\cite{AKOR} and is optimal.
\item For $B,c \geq \log n$ and $B/c=n^{O(1)}$ the
    algorithm reduces to online integral path
    packing~\cite{BN06,AAP}. The competitive ratio of the
    algorithm is $O(\log n)$ for uni-directional grids,
    independent of the dimension $d$. In this algorithm,
    packets are either rejected or routed but not
    preempted.
\end{enumerate}
In the rest of the paper, we address the algorithm for
uni-directional grids as the `deterministic' algorithm.

\paragraph{A Randomized Algorithm for the One Dimensional Case.}
We present a centralized online \emph{randomized} packet
routing algorithm for maximizing throughput in
uni-directional lines\footnote{We remark that the
randomized
  algorithm can be generalized to $d$-dimensional grids to obtain competitive ratios
  that are $(O(\log n))^d$. In light of similar competitive ratios with the
  deterministic algorithm, we omit the description and analysis of the randomized
  algorithm for $d$-dimensional grids.}. Our algorithm is \emph{nonpreemptive};
rejection is determined upon arrival of a packet. Our algorithm is centralized and
randomized and achieves an $O(\log n)$-competitive ratio. In addition to handling the
case that $B=1$ and $c=1$, our algorithm improves over previous algorithms as
follows:
\begin{enumerate}[(i)]
\item The competitive ratio is $O(\log n)$ compared to the best  previous competitive ratio of $O(\log^2 n)$ by Azar and Zachut~\cite{AZ}.
\item Our algorithm works also for buffers of size $B = 1$ (with no restriction on
  the link capacities).
\item We consider also the parameter $c$ of the capacity of the links (\cite{AZ,AKK} considered only the case $c=1$).
\item The $O(\log n)$ competitive ratio applies for the following combination of parameters:
  (1)~$B\in [1,\log n]$ and $c\geq 1$, or (2)~$\log n\leq
  B/c\leq n^{O(1)}$ .
\end{enumerate}
In the rest of the paper, we address the algorithm for
uni-directional lines as the `randomized' algorithm.

\subsection{Techniques}\label{sec:tech}

\paragraph{Reduction of Packet-Routing to Circuit Switching.}
Packet routing is reduced to a circuit switching problem~\cite{KT,AAP} by applying a
\emph{space-time transformation}~\cite{AAF,ARSU,AZ,RR}.  We extend the space-time
transformation of~\cite{AZ} so that it also supports deadlines.

The reduction of packet routing to circuit switching relies on the ability to bound
the path lengths without losing too much throughput. In~\cite{AZ} a bound on the path
lengths that incurs only a constant fraction loss of throughput is proven for routing
in a uni-directional line. We extend the lemma of~\cite{AZ} to $d$-dimensional grids
and to general values of buffer sizes $B$ and link capacities $c$.

This implies that online packet-routing is reduced to the well studied problem of
online packing of paths~\cite{AAP, BN06}.  Algorithms for online packing of paths
either reject a request or assign a path to a request (i.e., perform call admission).  The
edge capacities of the space-time graph are $B$ and $c$.  If the capacities are
large, i.e., $B,c \geq \log n$, then the online path packing algorithm by Awerbuch
et. al~\cite{AAP} achieves a $\log n$ competitive ratio, where $n$ is number of
vertices of the (original) graph, as required.  In the case where the capacities are
small, i.e., $B,c < \log n$, the algorithm by~\cite{AAP} does not apply, hence we
coalesce groups of nodes by \emph{tiling}~\cite{KT,BL}. This induces a new graph,
called a \emph{sketch graph} in which the capacities are (again) large.  We apply the
online path packing algorithm over the sketch graph, but are left with the problem of
translating paths over the sketch graph to paths over the space-time graph. We refer
to this translation as \emph{detailed routing}. We use the framework of Buchbinder
and Naor~\cite{BN06,BNsurvey} for \emph{online path packing} because it helps us
point out the tradeoffs between the path lengths, the competitive ratios, and the
overloading of edges.

\paragraph{Detailed Routing.}
The path packing algorithm computes a path over the sketch graph, and the algorithm
must translate this sketch path to a detailed path over the space-time graph. The
detailed path traverses the same tiles that are traversed by the sketch path and
bends whenever the sketch path bends.  Detailed routing has been addressed before in
undirected graphs~\cite{KT,BL} as well as in space-time graphs of
the uni-directional line~\cite{RR}.

Detailed routing is not always successful; indeed, we need to bound the fraction of
the requests that are lost during detailed routing.  In the deterministic algorithm,
the detailed routing technique partitions each path in the sketch graph into three
parts, and reserves only a unit of capacity for each part. This is the reason why the
algorithm requires $B,c \geq 3$.  In some parts of the detailed routing, we reduce
the problem of detailed routing to \emph{online interval packing}.  This reduction
uses an online procedure for packing intervals on a line (which is, in fact, a
nearest-to-go routing policy). We apply an online distributed simulation of the
optimal interval packing algorithm~\cite{GLL}. The correctness of this simulation is
based on the ability of the packet-routing algorithm to preempt (i.e., drop) packets.

\paragraph{Classify and Select.}
Requests are categorized as \emph{near} or \emph{far}, and the algorithm randomly chooses to deal
with one category of requests.
The categorization is based on the tiles. A request that can be routed within a tile
is considered near; otherwise it is a far request.

Randomization is also employed to choose a random subset of the requests so as to
further weaken the adversary. We use \emph{random phase shifts} that determine the quadrants
within tiles from which paths may start.

\paragraph{Random Sparsification.}
Requests that are assigned
sketch paths by the online path packing algorithm are randomly sparsified. This
\emph{random sparsification} has two roles: (1)~Reduction of loads of sketch graph edges
incurred by the path packing algorithm to a small constant fraction with high
probability.  (2)~Solving the problem that the source nodes of requests may be
densely packed in an area $A$.  The capacity of the edges that enable routing paths
out of $A$ is proportional to the ``perimeter'' of $A$, while the number of source
nodes in $A$ is proportional to the ``area'' of $A$.  In a $d$ dimensional grid, the
area of a subregion can be as large as the perimeter of the subregion to the power
$d$.  By applying random sparsification, the number of remaining paths whose source
node is in a quadrant of a tile roughly equals the perimeter of the quadrant.

\subsection{Organization}
The formal definition of the problem is stated in Sec.~\ref{sec:problem}.
In Sec.~\ref{sec:prelim}, the reduction of packet-routing to path packing is presented.
In Sec.~\ref{sec:outline}, we outline the steps of the deterministic algorithm.
In Sec.~\ref{sec:alg}, we elaborate on each step of the deterministic algorithm with respect to uni-directional lines and prove that the algorithm is $O(\log^5 n)$-competitive, where $n$ is the number of nodes. %
In Sec.~\ref{sec:generalizations} we present a
generalization of the deterministic algorithm to the
$d$-dimensional case and extensions to special cases, such
as: bufferless grids, and grids with large buffers and
large link capacities. In Sec.~\ref{sec:randalg} we design
and analyze a randomized algorithm for uni-directional
lines. Our randomized algorithm achieves a competitive
ratio of $O(\log n)$.

\section{Problem Definition}\label{sec:problem}
\label{sect:problem}

\subsection{Store-and-Forward Packet Routing Networks}
We consider a synchronous store-and-forward packet
routing network~\cite{AKOR,AKK,AZ}.

Each packet is specified by a $4$-tuple $r_i=(a_i,b_i,t_i,d_i)$, where $a_i\in V$ is the source node of the packet, $b_i\in V$ is the destination node,  $t_i\in \mathbb{N}$ is the time step in which the packet is input to $a_i$, and $d_i$ is the deadline.  Since we consider an online setting, no information is known about a packet $r_i$ before time $t_i$.  Deadlines mean that the algorithm is only credited for delivering packet $r_i$ to its destination $b_i$ before time $d_i$.

The network is a directed graph $G=(V,E)$.
Each edge has a capacity $c$ that specifies the number of packets that can be transmitted along the edge in one time step.  Each node has a local buffer of size $B$ that can store at most $B$ packets.
Each node has a local input through which multiple packets may be input in each time step.  The network operates in a synchronous fashion with a delay of one time step for communication. This means that a single time step is needed for a packet to traverse a single link.

In each time step, a node $v$ considers the packets arriving via the local input, the packets arriving from incoming edges, and the packets stored in the buffer.  Packets destined to node $v$ (i.e., $b_i=v$) are removed from the network (this is considered a success provided that the deadline has not passed, and no further routing of the packet is required). As for the other packets, the node determines which packets are sent along outgoing edges (i.e., forwarded) and which packets are stored in the buffer. The remaining packets are \emph{deleted}.

The literature contains two different models of node functionality. We use the model used by~\cite{ARSU,RR}. The reader is referred to Appendix~\ref{sec:model} for a comparison between two different models of node functionality; this comparison is mostly of interest for the case $B=1$.

We use the following terminology.  A packet is \emph{rejected} if it is locally input to a node and the node deletes it. A packet that is locally input but not rejected is called an \emph{injected} packet. A packet is \emph{preempted} or \emph{dropped} if it was injected and deleted before it reached its destination.

The task of \emph{admission control} is to determine which packets are injected and which are rejected.  An algorithm that drops packets is a \emph{preemptive algorithm}; an algorithm that does not drop packets is called a \emph{non-preemptive algorithm}.

\subsection{Grid Networks}
A two dimensional $\ell_1\times \ell_2$ uni-directional
grid network is a directed graph $G=(V,E)$ defined as
follows (see Fig.~\ref{fig:grid}). The set of vertices is
$V\triangleq[\ell_1]\times [\ell_2]$, where $[\ell]$
denotes the set of integers $\{1,\ldots,\ell\}$.  We denote
the number of vertices by $n$ (i.e., $n=\ell_1\cdot
\ell_2)$.  There are two types of edges: horizontal edges
$(i,j)\rightarrow (i+1,j)$ and vertical edges
$(i,j)\rightarrow (i,j+1)$. For each packet, the source
node $a_i=(a_i(x),a_i(y))$ and the destination node
$b_i=(b_i(x),b_i(y))$ satisfy $a_i \leq b_i$ (i.e.,
$a_i(x)\leq b_i(x)$ and $a_i(y)\leq b_i(y)$).  We refer to
an $\ell_1\times \ell_2$ two dimensional directed grid
network simply as a grid.

A $d$-dimensional grid is defined analogously over a vertex set
$V\triangleq[\ell_1]\times \cdots \times [\ell_d]$. Our analysis applies to the case
that $d$ is a constant.

\paragraph{Capacities and Buffers.}
We assume uniform capacities and buffer sizes. Namely, (i)~all edges in the grid have
the same capacity, denoted by $c$; and (ii)~all nodes have the same buffer size,
denoted by $B$.

  \begin{figure}[H]
    \begin{center}
      \includegraphics[width=0.2\textwidth]{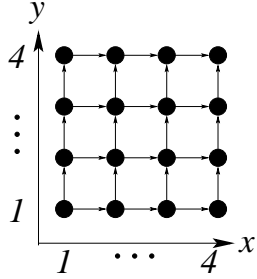}
    \end{center}
    \caption{A $4\times 4$ grid network.}
    \label{fig:grid}
  \end{figure}

\subsection{Online Maximum Throughput in Networks}
The \emph{throughput} of a packet routing algorithm is the number of packets that are delivered to their destination \emph{before their deadline}.
We consider the problem of maximizing the throughput of an online centralized deterministic packet-routing algorithm.

Let $\sigma$ denote an input sequence. Let $\alg$ denote a packet-routing
algorithm. Let $\alg(\sigma)$ denote the subset of requests in $\sigma$ that
are delivered on time by $\alg$. The throughput obtained by $\alg$ on input
$\sigma$ is the size of the set $\alg(\sigma)$, i.e., $|\alg(\sigma)|$. Let
$\opt(\sigma)$ denote the subset of requests in $\sigma$ that are delivered
by an optimal throughput routing. An online deterministic \alg\ is
\emph{$\rho$-competitive} if for every input sequence $\sigma$,
$|\alg(\sigma)| \geq \frac 1\rho \cdot |\opt(\sigma)|$. An online randomized
algorithm is $\rho$-competitive with respect to an oblivious adversary, if
for every input sequence $\sigma$, $\mathbb{E}[|\alg(\sigma)|] \geq \rho \cdot
|\opt(\sigma)|$, where the expected value is over the random choices made by
\alg~\cite{be}.

\subsection{Problem Statement}\label{sec:problem statement}
\paragraph{The Input.}
The online input is a sequence of packet requests $\sigma = \{r_i\}_i$. Each
packet request is  specified by a $4$-tuple $r_i=(a_i,b_i,t_i,d_i)$ over a
grid network $G=(V,E)$. We consider an online setting, namely, the requests
arrive one-by-one, and no information is known about a packet request $r_i$
before its arrival.

\paragraph{The Output.}
In each time step, the packet-routing algorithm decides what each of the packets in
the network should do. This decision can be either reject a new packet,
preempt an existing packet, store a packet in a buffer of the node which the
packet has reached, or forward the packet to a neighboring node.

\paragraph{The Objective.}
The goal is to maximize the number of packets that are
successfully routed (i.e., reach their destination before
the deadline expires).

\section{Reduction of Packet-Routing to Path Packing}
\label{sec:prelim}

\subsection{Space-Time Transformation}
\label{sec:spacetime}

A\emph{ space-time transformation} is a method to map traffic in a directed graph over time into a directed acyclic graph~\cite{AAF,ARSU,AZ,RR}. Consider a directed graph $G=(V,E)$ with edge capacities $c$ and buffer size $B$.
The space-time transformation of $G$ is the acyclic directed infinite graph $G^{st}=(V^{st},E^{st})$ with edge capacities $c^{st}(e)$, where:
\begin{inparaenum}[(i)]
\item $V^{st} \triangleq V\times \mathbb{N}$.
\item $E^{st}\triangleq E_0\cup E_1$ where $E_0\triangleq \{
  (u,t)\rightarrow(v,t+1)\::\: (u,v)\in E~,~t\in\mathbb{N}\}$ and $E_1
  \triangleq \{ (u,t)\rightarrow (u,t+1) \::\: u\in V, t\in
  \mathbb{N}\}$.
\item The capacity of all edges in $E_0$ is $c$, and all edges in $E_1$ have capacity $B$. Note that the space-time graph corresponding to a $d$-dimensional grid is a $(d+1)$-dimensional grid.
\end{inparaenum}
Figure~\ref{fig:st} depicts the space-time transformation in the one dimensional case.

\paragraph{Adding Sink Nodes.}
Following~\cite{AZ}, we add sink nodes to define a specific destination node for each request. For every vertex $v$ in the line, we define a sink node
$\hat v$ (see Figure~\ref{fig:stsink}).
A \emph{copy of a vertex} $v \in V$ in the space-time graph $G^{st}$ is a space-time vertex $(v,t)\in V^{st}$ for some $t$.
We add an incoming edge of infinite capacity to the sink node $\hat v$ from each tile $s$ that contains a copy $(v,t)$ of $v$.

\begin{figure}[h]%
  \centering
\includegraphics[width=0.5\textwidth]{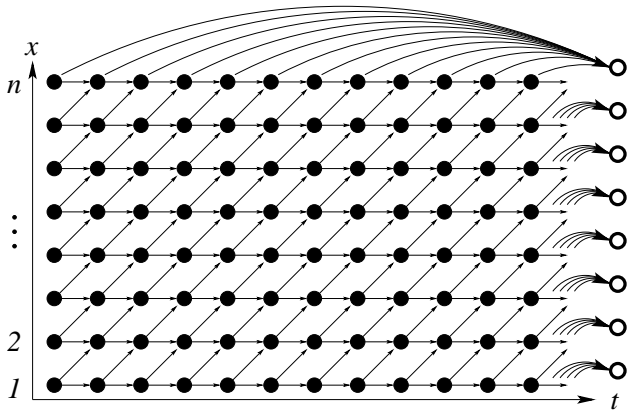}
\caption{The space-time graph $G^{st}$ with the new sink nodes (shown on the rightmost column).
}
\label{fig:stsink}
\end{figure}

\subsection{Untilting}
A standard drawing%
    \footnote{For $d=2$, the $G^{st}$ has a
    $3$-dimensional standard drawing in which: (i)~ a node $(i,j,t)\in
    V^{st}$ is mapped to the point $(i,j,t)$, and (ii)~edges are mapped
    to straight segments between their endpoints.}  %
of the space-time graph of a grid is a lattice generated by non-orthogonal vectors. This drawing is hard to depict and deal with, hence we apply a transformation called untilting defined as follows (see~\cite{RR} for untilting in two dimensions).

We rectify the drawing of the space-time graph of a grid by applying
an automorphism $q:\mathbb{Z}^{d+1} \rightarrow \mathbb{Z}^{d+1}$ defined by
$q(x_1,\ldots,x_d,t)\triangleq (x_1,\ldots,x_d,t-\sum_{i=1}^{d}x_i)$. We refer to this
transformation as \emph{untilting}. The sole purpose of applying
untilting is to obtain a drawing of the space-time graph of a grid in which the edges are axis parallel. Such an axis parallel drawing
simplifies the definition of tiles.
Note that the image of some of the vertices in $G^{st}$ is outside the positive quadrant.
Figure~\ref{fig:stuntilt} depicts the untilted space-time graph in the one dimensional case. (e.g., the node $(2,1)$ is mapped to $(2,-1)$.)

\subsection{Tiling}\label{sect:tiling}
The term \emph{tiling} refers to a partitioning of the nodes of the
space-time graph $G^{st}$ into finite sets with identical geometric
``shape''.

Tiling is obtained by a partitioning of $\mathbb{Z}^{d+1}$ by disjoint
$(d+1)$-dimensional cubes with side-length $k$.  (For the sake of simplicity
$\mathbb{Z}^{d+1}$ is partitioned to cubes. One can save a logarithmic factor in the
competitive ratio by a partitioning to boxes with unequal side
length. See Section~\ref{sect:prelimline} for an example of such a partitioning.)

A tile $s$ is a maximal subset of $V^{st}$ such that its image $q(s)$ (after untilting) is contained in a cube.
Formally, given a cube side-length $k$, a tile is defined by its \emph{lower corner}
$p\in \mathbb{Z}^{d+1}$, where the coordinates of $p$ are integral multiples of $k$.
The lower corner $p$ defines the tile $s_p \eqdf \{v\in V^{st} : p \leq q(v) <
p+k\cdot \vec{1}\}$, where $\vec{1}$ is the all ones vector. Note that some of the
tiles in $V^{st}$ are \emph{partial}, namely contain less than $k^d$ vertices (see
Figures~\ref{fig:dtile},~\ref{fig:dtiletilt}). In this case, we augment partial tiles
by dummy vertices so that they are complete.
Note that a dummy vertex is never an internal vertex in a path between non-dummy
vertices, and hence, this augmentation has no effect on routing.
\subsection{The Sketch Graph}\label{sect:sketchgraph}
The sketch graph is the graph obtained from the space-time graph after coalescing
each tile into a single node (sink nodes remain unchanged).  There is a directed edge
$(s_1,s_2)$ between two tiles $s_1,s_2$ in the sketch graph if there is a directed
edge $(\alpha,\beta)\in E^{st}$ such that $\alpha\in s_1$ and $\beta\in s_2$. The
capacity $c(s_1,s_2)$ of an edge $(s_1,s_2)$ in the sketch graph is simply the sum of
the capacities of the edges in $G^{st}$ from vertices in $s_1$ to vertices in $s_2$
(i.e., the capacity of a vertical edge between two tiles $c\cdot \hl$ and the capacity of a
horizontal edge is $B\cdot \vl$).  Figure~\ref{fig:dsketch} depicts an untilted sketch graph
of a space-time graph of a one dimensional grid.

The sketch graph also has node capacities for nodes that correspond to tiles (i.e.,
not sinks). The capacity of every node that corresponds to a tile is $c(s)=2\cdot
k^2\cdot (B+c)$.
\paragraph{Notation.}
We denote the sketch graph by $S=(V(S),E(S))$.
We abuse notation and often refer to the nodes of $S$ (that are not sinks) as tiles.

\begin{figure}%
  \centering
  \subfloat[Space-time graph $G^{st}$]{\label{fig:st}
\includegraphics[width=0.5\textwidth]{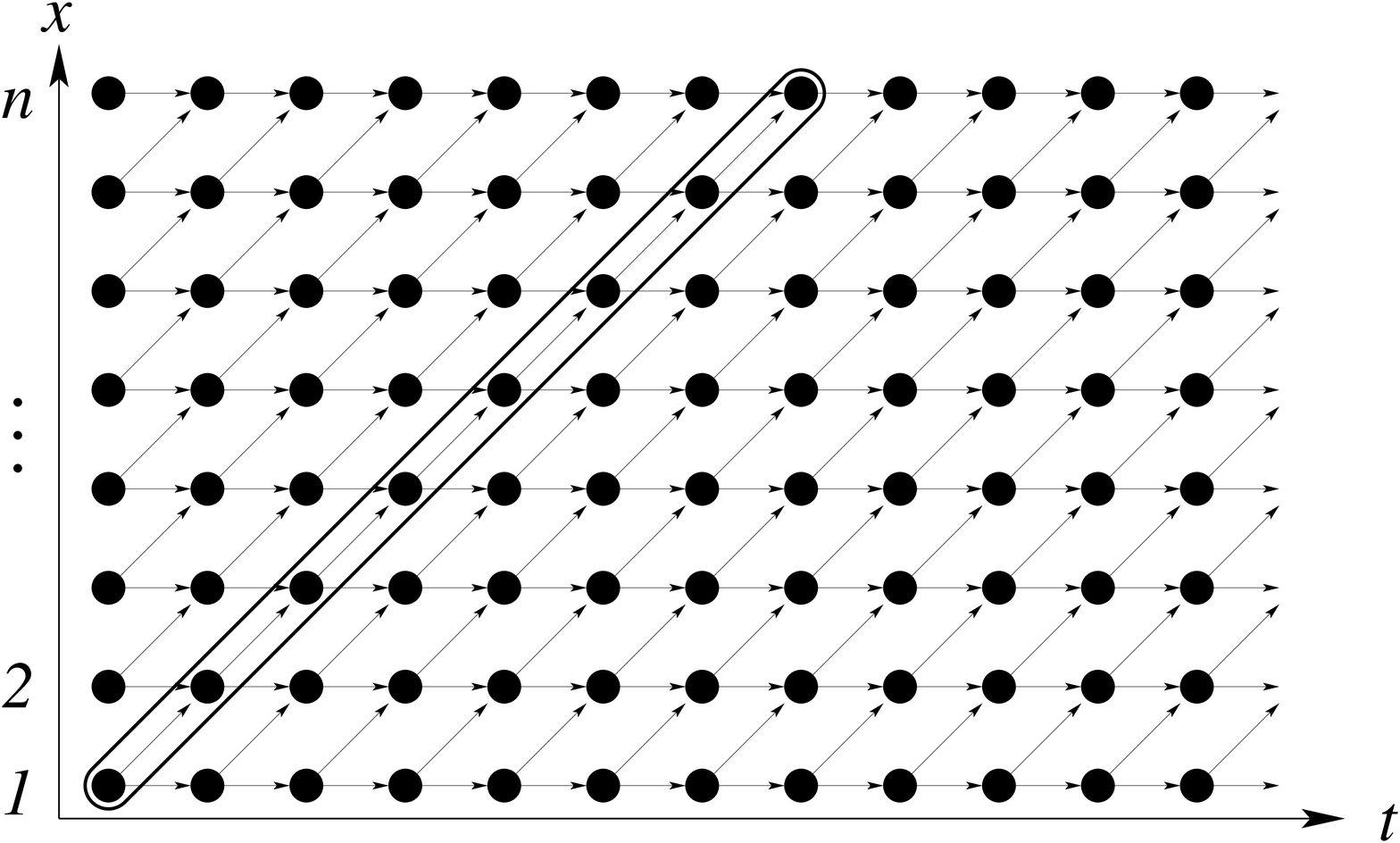}
} \qquad
  \subfloat[Untilted space-time graph $q(G^{st})$]{\label{fig:stuntilt}
\includegraphics[width=0.75\textwidth]{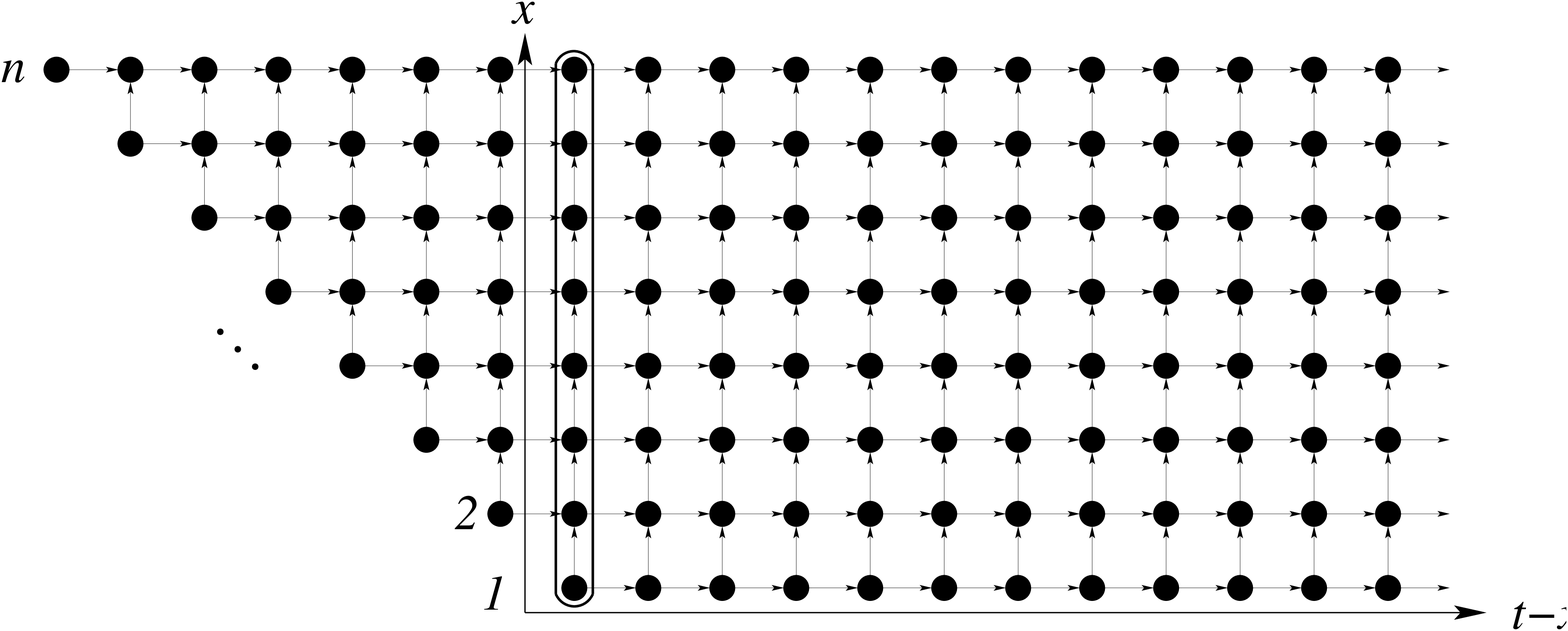}
}\\%
  \subfloat[Tiles in $G^{st}$]{\label{fig:dtiletilt}
\includegraphics[width=0.4\textwidth]{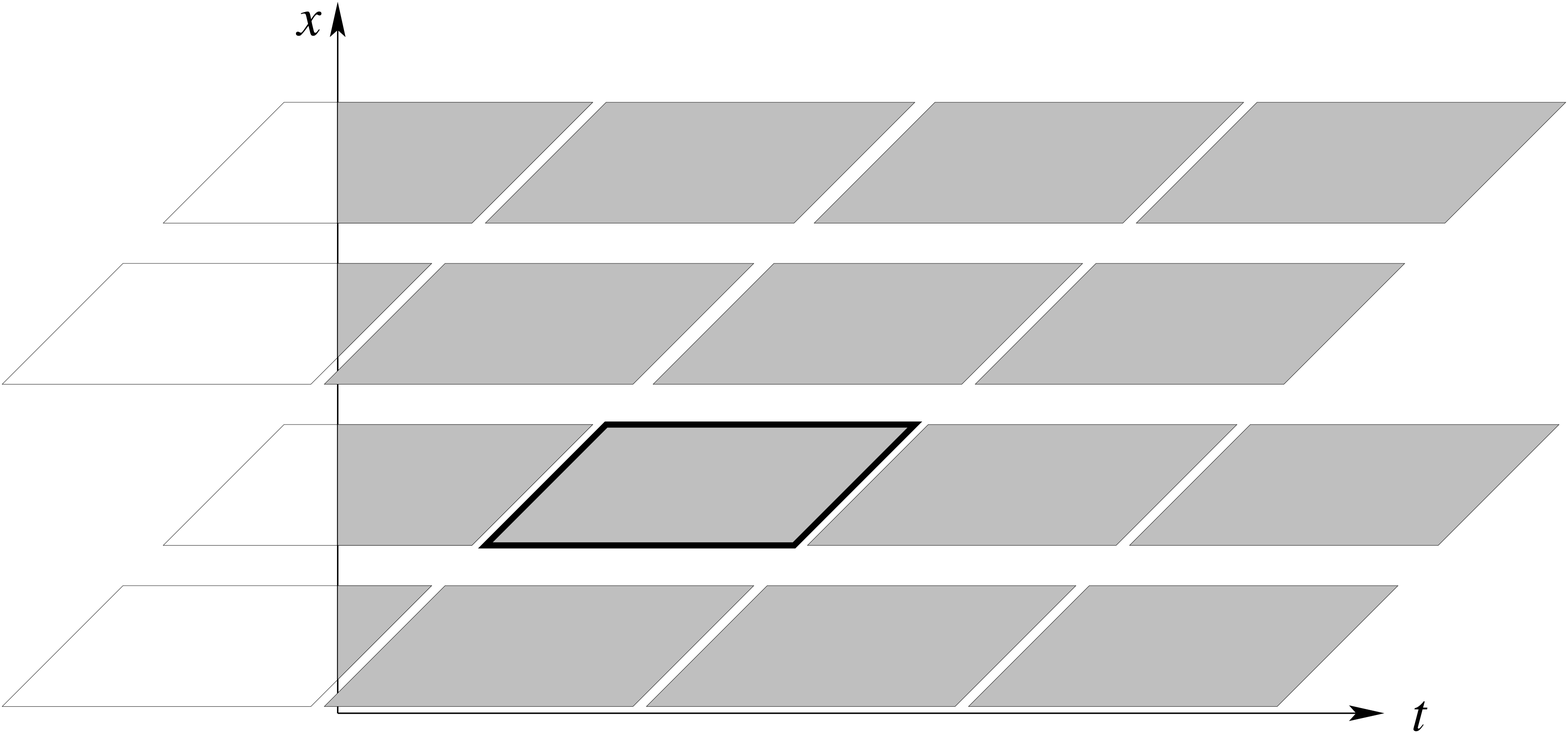}
} \quad
\subfloat[Tiles in $q(G^{st})$]{\label{fig:dtile}
\includegraphics[width=0.4\textwidth]{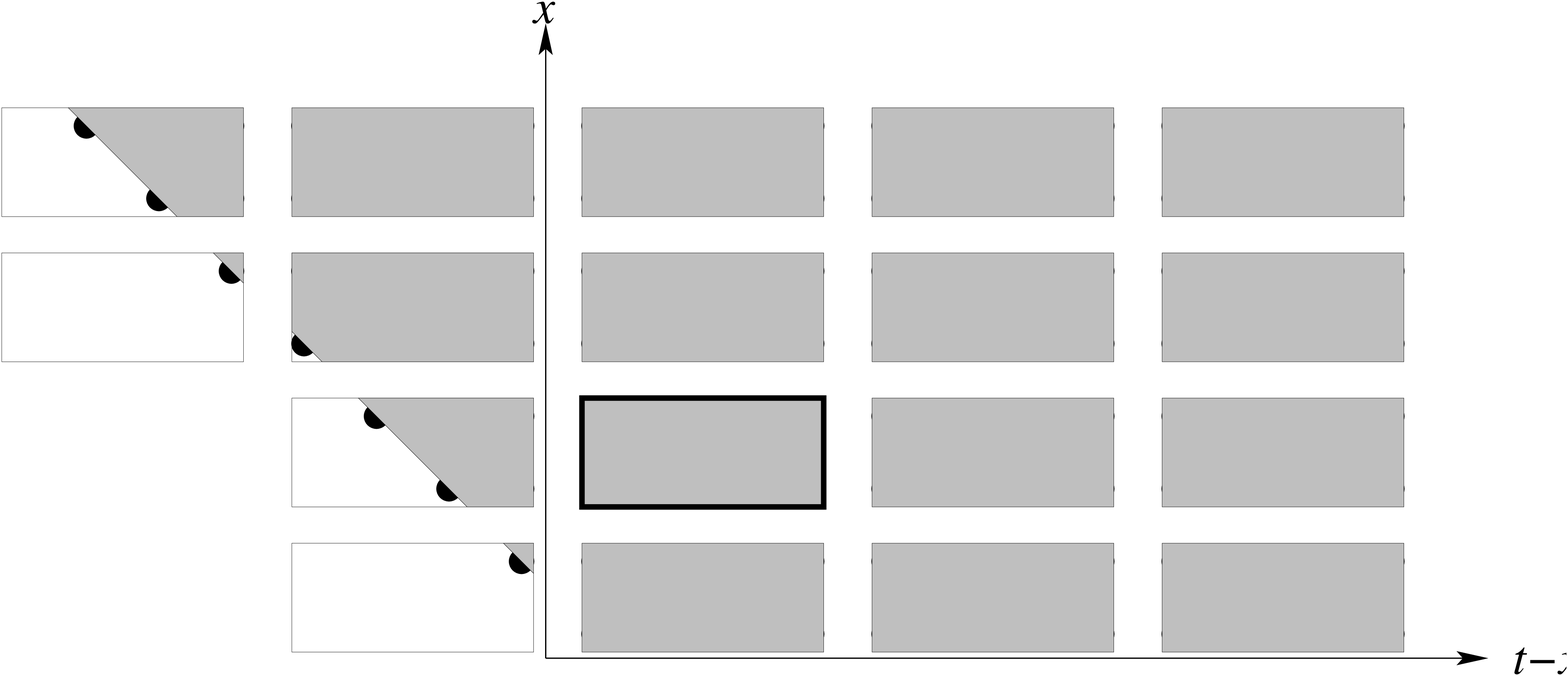}
}\quad
  \subfloat[Sketch graph $S$]{\label{fig:dsketch}
\includegraphics[width=0.4\textwidth]{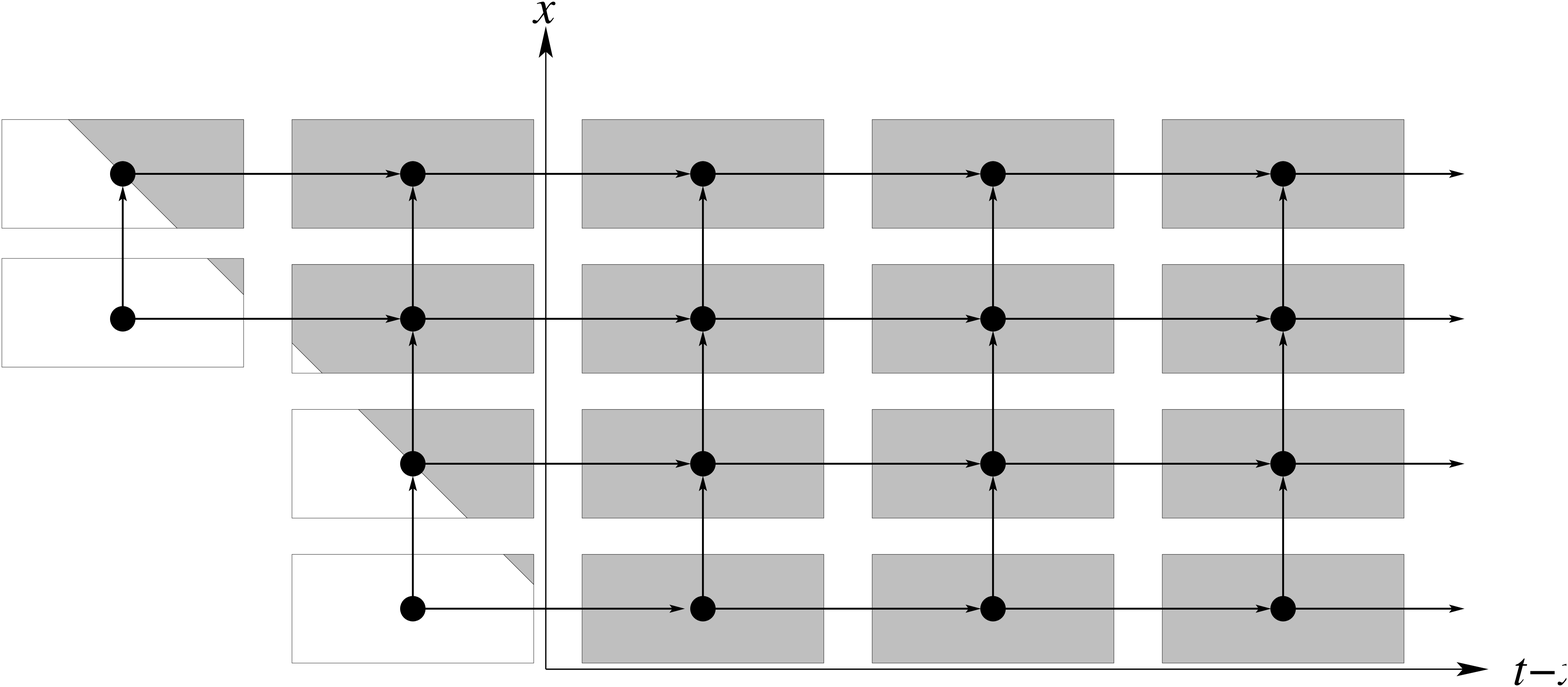}
}
\caption{\footnotesize(a) The tilted space-time graph $G^{st}$. The  horizontal axis is the (infinite) time axis and the vertical axis is the (finite) node axis.
(b) The untilted space-time graph $G^{st}$. The encapsulated path in (a) corresponds to the encapsulated path in (b).
Diagonal edges depict edges in $E_0$. Edges in $E_1$ are depicted by horizontal edges.
(c) The corresponding tiling of the (tilted) space-time graph $G^{st}$. The bolded parallelogram in (c) corresponds to the bolded rectangle in (d).
(d) Tiling of the untilted space-time graph $G^{st}$ by $2\times 4$ rectangles.
(e) The sketch graph over the tiles $S$.
}
\end{figure}

\subsection{Online Packing of Paths}
\label{sect:routing}

A reduction of packet routing to packing of paths is presented in Section~\ref{sec:reduce}. We briefly overview the topic of online packing of paths.

Consider a graph $G=(V,E)$ with edge capacities $c(e)$. Edges have soft capacity constraints (i.e., the capacity constraint may be violated, and one goal is to minimize the
violation).
The adversary introduces a sequence of connection requests
$\{r_i\}_i$, where each request is a source-destination pair
$(a_i,b_i)$. The online packing algorithm must either return a path
$p_i$ from $a_i$ to $b_i$ or reject the request.

Consider a sequence $R=\{r_i\}_{i\in I}$ of requests. A sequence
$P=\{p_i\}_{i \in J}$ is a (partial) \emph{routing} with respect to $R$ if
$J\subseteq I$ and each path $p_i$ connects the
source-destination pair $r_i$.  The \emph{load} of an edge $e$ induced
by a routing $P$ is the ratio $|\{p_j \in P: e\in p_j\}|/c(e)$.
A routing $P$ with respect to $R$ is called a \emph{$\beta$-packing} (or \emph{$\beta$-feasible}) if the
load of each edge is at most $\beta$. The \emph{throughput} of a packing
$P=\{p_i\}_{i\in J}$ is simply $|J|$.

An online path packing algorithm is \emph{$(\alpha,\beta)$-competitive} if it computes a $\beta$-packing $P$ whose throughput is at least $1/\alpha$ times the maximum   throughput over all $1$-packings.

If each request is served by a single path, then the routing is
\emph{nonsplittable}.

A \emph{fractional} packing is a multi-commodity flow. Each
demand can be (partly) served by a combination of fractions
of flows along paths. A sequence $P_f=\{P_i\}_{i \in I}$ is
a \emph{fractional (splittable) routing} with respect to
$R$ if each path $p_i \in P_i$ connects the
source-destination pair $r_i$, and the total flow allocated
by paths in $P_i$ is at most one. The \emph{throughput} of
a fractional splittable path packing $P_f=\{P_i\}_{i\in I}$
is the sum of the allocated flows along every path in
$P_f$. An optimal offline fractional packing can be
computed by solving a linear program. Obviously, the
throughput of an optimal fractional packing is an upper
bound on the throughput of an optimal integral packing.

The proof of the following theorem appears in Appendix~\ref{sect:routealg}.
The proof is based on techniques from~\cite{AAP, BN06}.
We refer to the online algorithm for online integral path packing by $\route$.

\begin{theorem}\label{thm:IPP}
  Consider an infinite graph with edge capacities such that $\min_{e}
  c(e) \geq 1$.  Consider an online path packing problem in which a
  path is legal if it contains at most $\pmax$ edges.
  Assume that there is an oracle, that given edge weights and a
  connection request, finds a
  lightest legal path from the source to the destination. Then, there
  exists a $(2,\log(1+ 3\cdot \pmax))$-competitive online
  integral path packing algorithm.  Moreover, the throughput is at least
  $1/2$ times the maximum throughput over all fractional packings.
\end{theorem}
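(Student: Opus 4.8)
The plan is to instantiate the online primal--dual framework of Awerbuch--Azar--Plotkin and Buchbinder--Naor, using an exponential edge-length function whose base is tied to the hop bound $\pmax$ rather than to the size of the (infinite) graph. First I would write down the natural packing LP and its dual: the primal maximizes $\sum_{i,p} x_{i,p}$ over legal paths subject to edge-capacity constraints $\sum_{p \ni e} x_{i,p} \le c(e)$ (dual variable $y_e$) and per-request constraints $\sum_p x_{i,p} \le 1$ (dual variable $z_i$); the dual minimizes $\sum_e c(e)\,y_e + \sum_i z_i$ subject to $\sum_{e\in p} y_e + z_i \ge 1$ for every request $i$ and every legal path $p$ for $i$. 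The algorithm $\route$ maintains the $y_e$ as edge lengths, all initially $0$, and upon each request invokes the oracle to obtain a lightest legal path $p^*$ with respect to the current lengths.

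The decision rule is a threshold test: accept (and route integrally on $p^*$) if and only if the length of $p^*$ is below $1$, and reject otherwise. On an acceptance, update $y_e \leftarrow y_e\bigl(1+\tfrac{1}{c(e)}\bigr) + \tfrac{1}{3\pmax\cdot c(e)}$ for every $e\in p^*$, and set $z_i := 1 - \mathrm{len}(p^*)$; rejected requests get $z_i:=0$. Since $y_e$ only grows, after $f$ uses an edge has length $\tfrac{1}{3\pmax}\bigl((1+1/c(e))^{f}-1\bigr)$, which is exponential in its relative load $f/c(e)$; I would then verify that once this relative load reaches $\log(1+3\pmax)$ the length already reaches $1$ (using $(1+1/c)^{c}\ge 2$ for $c\ge 1$), so no accepted path can drive any edge past that load. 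This is exactly the $\beta$-feasibility with $\beta=\log(1+3\pmax)$.

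For the throughput guarantee I would run the weak-duality argument. The key monotonicity observation is that lengths never decrease, so for an accepted request the length of $p^*$ at acceptance time lower-bounds the final length of every legal path for that request; hence $z_i = 1 - \mathrm{len}(p^*)\ge 0$ together with the final lengths keeps every dual constraint satisfied, and for a rejected request the lightest path already had length $\ge 1$, so $z_i=0$ suffices. Thus the produced dual solution is feasible and, by weak duality, its value upper-bounds the optimal \emph{fractional} throughput. It remains to bound that value by a constant times the number $|J|$ of accepted requests: a short calculation shows each acceptance raises $\sum_e c(e)y_e$ by $\mathrm{len}(p^*) + |p^*|/(3\pmax)$, which combines with $z_i = 1-\mathrm{len}(p^*)$ to a per-request increase of $1 + |p^*|/(3\pmax) \le 1+\pmax/(3\pmax) = 4/3 < 2$, since $|p^*|\le\pmax$. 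Hence the dual value is at most $2|J|$, giving throughput at least $\tfrac12$ of the optimal fractional packing, which in turn dominates any $1$-packing, as claimed.

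The main obstacle, I expect, is the simultaneous calibration: the single constant $3\pmax$ inside the length function must be large enough that the per-acceptance dual increase stays below $2$ (the throughput side, where the additive term contributes $|p^*|/(3\pmax)$), yet small enough that an edge's length crosses the threshold $1$ precisely at relative load $\log(1+3\pmax)$ (the congestion side). Getting both constants to land on the stated bounds, and handling the mild discretization in the load argument since loads advance in steps of $1/c(e)$, is the delicate part; everything else is bookkeeping within the primal--dual template. A secondary point worth checking is that the analysis never refers to the number of vertices, only to $\pmax$, which is exactly what lets the bound hold for the infinite space--time graph provided $\min_e c(e)\ge 1$.
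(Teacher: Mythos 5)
Your proposal follows the paper's own route in all essentials: the same packing/covering LP pair, the same accept-iff-the-lightest-legal-path-is-shorter-than-$1$ rule, an exponential edge-length function with growth rate $\Theta(1/c(e))$ and an additive seed of order $1/(\pmax\, c(e))$, weak duality for the throughput half, and the closed-form exponential weight for the congestion half. Your throughput argument is correct as written (your constants even yield a per-acceptance dual increase of $4/3$ rather than the paper's $2$), and your closing observation---that the analysis references only $\pmax$ and $\min_e c(e)\ge 1$, never the number of vertices---is exactly why the theorem applies to the infinite space-time graph.

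The one step that does not hold as stated is the congestion bound. From ``relative load at least $\log(1+3\pmax)$ implies length at least $1$'' you conclude that no accepted path can push an edge past that load. But acceptances occur while the length is still below $1$, i.e., while the relative load is still below $\log(1+3\pmax)$, and the last such acceptance then adds a further $1/c(e)$. So your update rule provably gives load $< \log(1+3\pmax)+1/c(e)$; equivalently, bounding the post-update weight by $(1+1/c(e))+\frac{1}{3\pmax c(e)}\le 7/3$ gives load $<\log(1+7\pmax)$. Neither is the stated $\log(1+3\pmax)$. This is precisely the ``mild discretization'' you flagged at the end, and it is not cosmetic relative to the statement being proved: in the paper, the constant $3$ inside the logarithm \emph{is} the overshoot allowance. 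The paper's update is $x_e \gets x_e\cdot 2^{1/c(e)}+\frac{1}{\pmax}\cdot(2^{1/c(e)}-1)$, whose closed form is $x_e=(2^{\flow(e)/c(e)}-1)/\pmax$ (denominator $\pmax$, not $3\pmax$); since $x_e<1$ just before the final acceptance on $e$, afterwards $x_e< 2\cdot 1+1=3$, and this yields $\flow(e)/c(e)<\log(1+3\pmax)$ with the overshoot already absorbed. The same repair works in your parametrization: take the additive term to be $\frac{1}{\pmax\, c(e)}$ instead of $\frac{1}{3\pmax\, c(e)}$; the congestion bound then lands exactly on $\log(1+3\pmax)$, while the per-acceptance dual increase rises to $1+|p^*|/\pmax\le 2$, which still gives the factor-$2$ throughput guarantee.
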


\subsection{Polynomial Path Lengths}
\paragraph{Notation.}
Consider a directed graph $G=(V,E)$ with edge capacities $c(e)$ and
buffer size $B$ in each vertex.  Let $G^{st}$ denote the space-time
graph of $G$ (see Section~\ref{sec:spacetime}).  Let $c_{\min} =
\min\{c(e) \mid e\in E\}$.  Let
$\dist_G(u,v)$ denote the length of a shortest path from $u$ to $v$ in
$G$. Let $\diam(G)$ denote the diameter of $G$ defined as follows
\begin{align*}
  \diam(G) &\eqdf \max \{ \dist_{G}(u,v) \mid \text{there is a path from $u$ to $v$ in $G$}\}.
\end{align*}

Consider a sequence $R=\{r_i\}_i$ of routing requests (without deadlines) over $G^{st}$, i.e., each request is a three-tuple $r_i = (a_i,b_i,t_i)$ that requires a path from $(a_i,t_i)$ to a copy of $b_i$ in $G^{st}$, that is, $(b_i,t)$ for $t \geq t_i$.

Let $\opt_f(R)$ denote an optimal fractional path packing in $G^{st}$ with respect to $R=\{r_i\}_i$.
Let $\opt_f(R \mid \pmax)$ denote an optimal fractional path packing in $G^{st}$ with respect to $R=\{r_i\}_i$ under the constraint that each request is routed along a path of length at most $\pmax$.
Let $|g|$ denote the throughput of a fractional path packing $g$.

The following lemma shows that bounding path lengths (in a fractional
path packing problem over a space-time graph) by a polynomial
decreases the throughput only by a constant factor.  The
lemma is an extension of a similar lemma from~\cite{AZ}.
The proof of Lemma~\ref{lemma:nB} appears in Appendix~\ref{sec:proofnB}.

\begin{lemma}\label{lemma:nB}
  Let $\alpha\eqdf\frac{c_{\min}}{2\cdot
    (\sum_{e\in E} c(e)+n\cdot B)}$, $\nu\eqdf1/\alpha$, and $\pmax\geq
  (\nu+2)\cdot \diam(G)$.  Then, $$|\opt_f(R \mid \pmax)| \geq \frac
  1{2} \cdot \left(1-\frac 1e\right) \cdot |\opt_f(R)|\:.$$
\end{lemma}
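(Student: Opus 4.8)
The plan is to construct a short-path packing directly from an optimal (long-path) packing and to show that restricting to short paths discards only a constant fraction of the throughput. First I would fix an optimal fractional packing $g\eqdf\opt_f(R)$ and decompose it into path-flows. Observe that every edge of $G^{st}$ advances time by exactly one step, so the length of a path equals its travel time, and the set of edges crossing the cut between time $t$ and $t+1$ consists of the copies of the $E_0$-edges (total capacity $\sum_{e\in E}c(e)$) together with the $n$ buffer edges of $E_1$ (total capacity $n\cdot B$). Hence the amount of flow crossing any single time-cut is at most $C\eqdf\sum_{e\in E}c(e)+n\cdot B$, and $\nu=2C/c_{\min}$. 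The crucial reduction is that the sub-flow $g_s$ carried by paths of length at most $\pmax$ is itself a feasible $1$-packing (it is a sub-flow of $g$) consisting only of legal short paths; therefore $|\opt_f(R\mid\pmax)|\ge|g_s|$, and it suffices to bound the flow $g_\ell\eqdf g-g_s$ on long paths.

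To bound $g_\ell$ I would charge long paths to buffering. A packet in transit always sits at a node within distance $\diam(G)$ of its destination, so a path that never waited would arrive within $\diam(G)$ steps; consequently a path of length $|p|$ contains at least $|p|-\diam(G)$ buffer edges (taking the spatial walk to be a shortest path, which is automatic in uni-directional grids). For a long path, $|p|>\pmax\ge(\nu+2)\diam(G)$, so at least $(\nu+1)\diam(G)$ of its edges are buffer edges; thus waiting, not forward motion, dominates every long path. The goal is then to show that the total amount of ``wait-flow'' forces $|g_\ell|$ down to at most a $\bigl(1-\tfrac12(1-1/e)\bigr)$-fraction of $|g|$.

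Here lies the main obstacle: the volume $\sum_e f_g(e)$ of an arbitrary optimal packing is unbounded (a single unit may idle for an arbitrarily long time), so a naive Markov bound on long paths fails. To circumvent this I would either pass to a \emph{minimum-volume} optimal packing --- in which flow never waits while forward capacity toward its destination is free --- or, more robustly, average over a random time-offset that partitions the time axis into windows and keep only flow confined to a single window; confined flow is automatically short and feasible, and distinct windows use disjoint time-cuts. In either case the accounting is organized into $\nu=1/\alpha$ consecutive phases of length $\diam(G)$: the per-cut capacity bound $C$ limits, in each phase, the flow that can still be buffered rather than delivered, so that a fresh $\alpha$-fraction of the still-in-transit flow is captured as short in each phase. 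Summing the geometric capture over the $\nu$ phases yields a $1-(1-\alpha)^{1/\alpha}\ge 1-1/e$ fraction, while the factor $2$ in $\alpha=c_{\min}/(2C)$ accounts for the remaining $\tfrac12$, giving $|g_s|\ge\tfrac12(1-1/e)\,|g|$. Making this phase-by-phase capacity charge precise --- in particular bounding, via the buffer capacity $n\cdot B$ and the link capacities, how much flow can survive a phase without arriving --- is the technical heart of the proof and the step I expect to be hardest.
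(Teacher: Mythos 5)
There is a genuine gap, and it sits exactly where you predicted it would. Your central reduction --- fix an optimal packing $g$, keep its sub-flow $g_s$ on paths of length at most $\pmax$, and bound the long-path flow $g_\ell = g - g_s$ --- cannot work, because $|g_s|$ can be zero while $|g| = |\opt_f(R)|$: a packet may idle at a node for arbitrarily many steps before moving, which changes neither throughput nor feasibility (buffer edges at distinct times are distinct edges of $G^{st}$), so there exist optimal packings in which \emph{every} path is longer than $\pmax$. You correctly flag this obstacle, but neither of your fixes closes it. The random-window fix fails because a path longer than the window length crosses a window boundary for every choice of offset, so if all of $g$ is long you again retain nothing; averaging over offsets does not help. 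The minimum-volume fix is not developed: the assertion that a minimum-volume optimal packing carries a constant fraction of its flow on short paths is essentially the lemma itself, and your per-cut capacity bound cannot prove it, since the cut capacity $C$ bounds the flow crossing one \emph{fixed} cut, whereas long paths spread over time can carry unbounded total flow (growing with $|g|$).

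The missing idea in the paper's proof is that the flow crossing each cut is not discarded or kept --- it is \emph{rerouted}. Start from $f^*/2$ (the halving is what lets two flows coexist within the capacities), place cuts $\cut_i$ at integer multiples of $\diam(G)$, and whenever a path of the surviving flow $g$ crosses $\cut_i$, transfer an $\alpha$-fraction of it into a second flow $h$ that completes the packet along a shortest path \emph{with no buffering}, hence arriving before $\cut_{i+1}$; cancel whatever residue of a path remains after it has crossed $\nu$ cuts. Feasibility of $h$ is exactly where your cut-capacity bound enters, but as a bound on the \emph{rerouted} flow, not on $g_\ell$: the flow handed to $h$ at any one cut is at most $\alpha\cdot C = c_{\min}/2$, so $h$ loads every edge by at most $1/2$, and together with $g$'s load of at most $1/2$ all capacities hold. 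The throughput loss is only the cancelled residues, a $(1-\alpha)^{\nu} \le 1/e$ fraction, and every path of $g+h$ has at most $\diam(G)$ edges of $E_0$ and fewer than $(\nu+1)\cdot\diam(G)$ edges of $E_1$, hence length at most $\pmax$. So your parameters, geometric decay, and even the role of the factor $2$ are all the right ingredients; what is absent is the construction of $h$, and without it the inequality you aim for, $|g_s| \ge \tfrac12\left(1-\tfrac1e\right)|g|$, is simply false --- the correct statement compares $|\opt_f(R\mid\pmax)|$ against a \emph{new} packing $g+h$ whose short paths do not appear in $g$ at all.
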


\subsubsection{Remarks}
\begin{inparaenum}[(1)]
\item If $G^{st}$ is the space-time graph of a
    uni-directional line, then we set  $$\pmax\triangleq
    2n  \cdot \left(1+n \cdot
    \left(\frac{B}{c}+1\right)\right)\:.$$
\item If $G^{st}$ is the space-time graph of a
    $d$-dimensional uni-directional grid, then we set
    $$\alpha\eqdf\frac{c_{\min}}{2\cdot (\sum_{e\in E}
    c(e)+n\cdot B)} = \frac{1}{2n\cdot (d +B/c)}\:,$$
    and $$\pmax\triangleq 2 \cdot \diam(G)\cdot \left(1+n\cdot \left(\frac{B}{c} + d \right)\right) \:.$$
\item
 A trivial lower bound on the path lengths is $\Omega(B/c)$ if we want to be
  able to route a constant fraction of the optimal throughput. Indeed, if $B$ packets
  are injected simultaneously to the same node in a line, then at most $c$ packets
  can be forwarded in each step. Hence $\Omega(B/c)$ steps are required to forward a
  constant fraction of the packets. This justifies the term $B/c$ in the definition
  of the maximum path length (see Lemmas~\ref{lemma:nB} and \ref{lemma:nBline}).
\end{inparaenum}

\section{Outline of the Deterministic Algorithm}\label{sec:outline}
The listing of the deterministic framework appears in
Algorithm~\ref{alg:algDet}.  Upon arrival of a request $r_i$, the
algorithm reduces the packet request to an online integral path
packing over the sketch graph with bounded paths. The algorithm then
executes the online algorithm for online integral path packing (\IPP)
with respect to this path request. If the path request is rejected by
the \IPP\ algorithm, then the algorithm rejects $r_i$.  Otherwise, let
$\hat p_i$ denote the sketch path assigned to the request $r_i$. The
algorithm injects the request $r_i$ with its sketch path $\hat p_i$
and performs detailed routing in the space-time graph $G^{st}$.
Detailed routing in $G^{st}$ may fail (see
Section~\ref{sec:detailed}). In case of failure, the algorithm
preempts $r_i$.

To simplify the description, we begin in, Sec.~\ref{sec:alg}, by
presenting a detailed description and proof for the one-dimensional
case.  The required modifications for higher dimensions are described
in Sec.~\ref{sec:algd}.  We also assume that there are no deadlines
(i.e., $d_i = \infty$), hence each packet is specified by a $3$-tuple
$r_i=(a_i,b_i,t_i)$; we reintroduce deadlines in
Section~\ref{sec:d_i}.

\begin{algorithm}[H]
    \textbf{Upon arrival} of a packet request $r_i = (a_i,b_i,t_i)$, for $i\geq 1$ (if $r_i$ is rejected or preempted in any step, then the algorithm does not continue with the next steps), the algorithm proceeds as follows:
        \begin{enumerate}
            \item \label{item:reduce}Reduce $r_i = (a_i,b_i,t_i)$ to a path request $\hat r_i$ in the $\{1,2,\infty\}$-sketch graph $\hat S$ as follows:
                \begin{enumerate}
                  \item The source of the path request $\hat r_i$ is the half  tile $s_{in}$, where the tile $s$ contains the vertex $(a_i,t_i)$.
                  \item The destination of the path request $\hat r_i$ is simply the sink $\hat b_i$.
                \end{enumerate}
            \item Execute the \IPP\ algorithm over $\hat S$ with respect to the reduced path request $\hat r_i$.
             \begin{enumerate}
               \item If the \IPP\ algorithm rejects the $\hat r_i$ then \textbf{reject} $r_i$.
               \item Else, let $\hat p_i$ denote the path output by \IPP, i.e., the sketch path assigned to $\hat r_i$.
             \end{enumerate}
            \item \label{item:I} \textbf{Inject} the request $r_i$ (the request ``includes'' its sketch path  $\hat p_i$) and perform detailed routing in the space-time graph $G^{st}$.
                Detailed routing proceeds by processing the \emph{first segment} of $\hat p_i$, the \emph{internal segments} of $\hat p_i$, the \emph{last segment} of $\hat p_i$, and finally the \emph{last tile} of $\hat p_i$.
                Failure in one of these parts causes a \textbf{preemption} of $r_i$.
            \item Packet request $r_i$ \textbf{arrives} to its destination $b_i$ if it is not rejected or preempted.
        \end{enumerate}
\caption{The deterministic framework. The algorithm receives a sequence of packet requests over the network $G=(V,E)$ and it either rejects, injects, or preempts these packet requests. A packet arrives to its destination if it is not rejected or preempted. The deterministic algorithm executes the \IPP\ algorithm as a sub-procedure.}\label{alg:algDet}
\end{algorithm}

\section{The One Dimensional Case}\label{sec:alg}
In this section we present the details of Algorithm~\ref{alg:algDet} for $d=1$.
We refer to Algorithm~\ref{alg:algDet} by \alg.

\paragraph{Parameters.}
The parameters of the uni-directional line network $G$ are:
$n$ nodes, buffer size $B$ in each node, and the capacity
of each link is $c$. We assume that $B,c\in [3,\log n]$.
Let $\pmax=2n  \cdot \left(1+n \cdot
\left(\frac{B}{c}+1\right)\right) = O(n^2\cdot \log n)$.
Let $k\eqdf \lceil \log (1+3\pmax) \rceil$. The length of a
tile's side is $k$.

\begin{proposition}\label{prop:tiling}
If $B,c\leq \log n$, then
\begin{inparaenum}[(i)]
\item$k=O(\log n)$, and
\item the capacity of each edge in the sketch graph is at most
  $k\cdot \max\{B,c\} = O(\log^2 n)$.
\end{inparaenum}
\end{proposition}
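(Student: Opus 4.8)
The plan is to read both claims directly off the definition $k \eqdf \lceil \log(1+3\pmax)\rceil$ together with the value of $\pmax$ fixed in the Parameters paragraph, so essentially no new machinery is needed.

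First I would dispatch part~(i). The Parameters paragraph already records $\pmax = 2n\cdot\bigl(1+n\cdot(B/c+1)\bigr) = O(n^2\log n)$; the only thing to verify is that this $O(n^2\log n)$ estimate is legitimate under the hypothesis, which it is because $c\ge 1$ and $B\le\log n$ force $B/c\le\log n$, hence $1+n(B/c+1) = O(n\log n)$ and $\pmax = O(n^2\log n)$. Substituting into the definition of $k$ gives
$$k = \bigl\lceil \log\bigl(1+3\pmax\bigr)\bigr\rceil = \bigl\lceil \log\bigl(O(n^2\log n)\bigr)\bigr\rceil,$$
and since $\log(n^2\log n) = 2\log n + \log\log n = O(\log n)$, I conclude $k = O(\log n)$, which is part~(i).

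For part~(ii) I would invoke the sketch-graph capacity rule from Section~\ref{sect:sketchgraph}: a vertical sketch edge has capacity $c\cdot\hl$ and a horizontal sketch edge has capacity $B\cdot\vl$, where $\hl$ and $\vl$ are the side lengths of a tile. Here the tiles are squares of side $k$, so $\hl=\vl=k$, and therefore every sketch edge has capacity at most $\max\{c\cdot k,\, B\cdot k\} = k\cdot\max\{B,c\}$. Combining this with part~(i) and the hypothesis $\max\{B,c\}\le\log n$ gives $k\cdot\max\{B,c\} = O(\log n)\cdot\log n = O(\log^2 n)$, as required.

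I do not expect a genuine obstacle here: the statement is an asymptotic-bookkeeping consequence of the definition of $k$ and the sketch-graph capacity formula. The one place to stay honest is the passage $\pmax = O(n^2\log n)$, which is exactly where the upper bound $B,c\le\log n$ (rather than an unrestricted $B/c$) is used; once that estimate is granted, both parts are immediate.
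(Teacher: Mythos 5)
Your proof is correct and follows exactly the reasoning the paper leaves implicit: the paper states this proposition without proof, since it is an immediate consequence of the definition $k=\lceil \log(1+3\pmax)\rceil$, the bound $\pmax = O(n^2\log n)$, and the sketch-edge capacity rule ($c\cdot\hl$ for vertical edges, $B\cdot\vl$ for horizontal edges, with $\hl=\vl=k$). Your careful note that $B/c\le\log n$ requires $c\ge 1$ (guaranteed here since $c\ge 3$ in the Parameters paragraph) is exactly the right point to check, and nothing further is needed.
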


\subsection{Reduction to Online Integral Path Packing}\label{sec:reduce}

\paragraph{Downscaling of Capacities.}
We regulate the number of paths that traverse each edge and node in the sketch graph
by downscaling capacities.  There are three types of capacities: (1)~edges between
tiles are assigned unit capacities, (2)~incoming edges to sink nodes are unchanged
and remain with infinite capacities, and (3)~each tile is assigned two units of
capacity\footnote{In the case of $d$-dimensional grid, the capacity of a tile is
  $d+1$. This saves a factor of $d$ in the competitive ratio.}.

To apply a reduction to integral path packing, we reduce node capacities to edge
capacities. Namely, each node $s \in V(S)$ is split to two ``halves'' $s_{in}$ and
$s_{out}$. After the split, edges are ``redirected'' as follows: the incoming edges
of $s$ enter $s_{in}$ and the outgoing edges of $s$ emanate from $s_{out}$. We add an
additional edge called an \emph{interior edge} between $s_{in}$ and $s_{out}$.  All
interior edges are assigned two units of capacity (see Figure~\ref{fig:dsketchcaps}).  We refer to the augmented sketch
graph with these capacities as the \emph{$\{1,2,\infty\}$-sketch graph}. We denote
the $\{1,2,\infty\}$-sketch graph by $\hat S$.
Let $\hat c:E(\hat S) \rightarrow \{1,2,\infty\}$ denote the downscaled capacity function of the $\{1,2,\infty\}$-sketch graph $\hat S$.

Note that, since nodes are split and sinks are added, we need to increase the maximum path length to $\pmax\leftarrow 2\cdot \pmax+1$.

\begin{figure}[h]
  \centering
\includegraphics[width=0.45\textwidth]{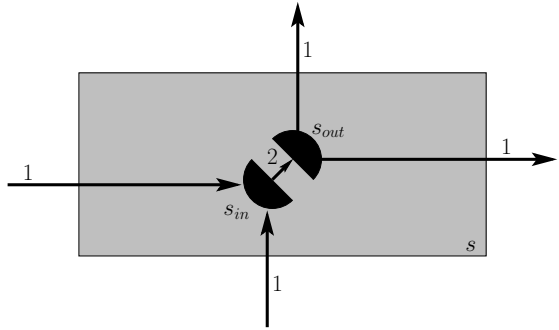}
\caption{
Capacity assignment in the $\{1,2,\infty\}$-sketch graph $\hat S$. Unit capacities are assigned to sketch edges and capacity of $2$ is assigned to interior edges.}
\label{fig:dsketchcaps}
\end{figure}

\paragraph{The Reduction.}
A request $r_i=(a_i,b_i,t_i)$ to deliver a packet is reduced to a path request $\hat r_i$ in the $\{1,2,\infty\}$-sketch graph $\hat S$. The source of the path request $\hat r_i$ is the vertex  $s_{in}$, where the vertex $(a_i,t_i)$ is in tile $s$. The destination of the path request is simply the sink node $\hat b_i$.

The sole purpose of the sink node is for a clean reduction to path packing. Once the \IPP\ algorithm returns the sketch path $\hat p_i$, the sink node is removed from $\hat p_i$, and the last tile in the sketch path is regarded as the end of the sketch path.

Theorem~\ref{thm:IPP} implies that the \IPP\ algorithm returns an integral packing of paths in $\hat S$ that is
$(2,k)$-competitive with respect to the optimal fractional path packing in $\hat S$. The length of each path in the packing is at most $\pmax$.

\subsection{Detailed Routing}\label{sec:detailed}
This section deals with the translation of paths in the sketch graph to paths in the space-time graph. This translation, called \emph{detailed routing}, is adaptive and computed in a distributed on-the-fly fashion.
The detailed path respects the sketch path in the sense that it traverses the same tiles and bends only where the sketch path bends.
Note that, some of the packets are dropped during detailed routing.

More formally, the goal in detailed routing is to compute a (detailed) path $p_i$ in
the space-time graph $G^{st}$ given a sketch path $\hat p_i$ in the $\{1,2,\infty\}$-sketch graph $\hat S$.
The projection of $p_i$ on
$\hat S$ equals $\hat p_i$.

\subsubsection{Preliminaries}

\paragraph{Terminology.}
A \emph{bend} in the sketch path is a node in which the sketch path changes direction, i.e., vertical to horizontal or horizontal to vertical.

A \emph{segment} of a path in a grid is a maximal subpath, all the vertices of which belong to the same row or column of the grid.
A segment is \emph{special} if it is the first or the last segment of a path. Otherwise, it is an \emph{internal} segment.

We refer to the side through which the detailed path enters a tile as the \emph{entry side}.
Similarly, we refer to the side through which the detailed path exits a tile as the \emph{exit side}.

\paragraph{Packing Intervals Online.}\label{sec:intervalp}
The problem of packing intervals in a line is defined as follows.
  \begin{enumerate}
  \item Input: A set $I=\{p_i\}_{i=1}^r$, where each $p_i$ is an open interval
    $(a_i,b_i) \subseteq (1,n)$.\footnote{We consider open intervals rather than
      closed intervals. One could define the problem with respect to closed
      intervals, but then instead of requiring disjoint intervals in the packing,
      one would need to require that intervals may only share endpoints.}
          \item Output: A maximum cardinality subset $I' \subseteq I$ of pairwise
            disjoint intervals.
    \end{enumerate}
    In the online setting, we assume that the
    intervals appear one by one, and that $a_1\leq a_2\leq \cdots \leq a_r$. The online algorithm must maintain a maximum subset $I'$ such that
    (i)~$I'$ is a subset of the prefix of the intervals input so far, and (ii)~the
    intervals in $I'$ are pairwise disjoint.

    The online algorithm is based on an optimal algorithm for maximum independent
    sets in interval graphs~\cite{GLL}.  Upon arrival of an interval $p_i=(a_i,b_i)$,
    the algorithm proceeds as follows: (1)~If $p_i$ does not intersect the intervals
    in $I'$, then $p_i$ is added to $I'$.  (2)~Else, $p_i$ intersects an interval
    $p_j=(a_j,b_j)$. If $b_i>b_j$, then $p_i$ is rejected (namely, $I'$ remains
    unchanged). Otherwise, if $b_i\leq b_j$, then $p_i$ preempts $p_j$ (namely,
    $I'=(I'\cup\{p_i\}) \setminus \{p_j\}$).

    Note, that this online algorithm can be executed in a distributed fashion in a
    line. Namely, the local input of each processor $a_i$ is the interval $p_i=(a_i,b_i)$ (or the empty input). Additionally, $a_i$ receives  $I'$ from its neighbor $a_{i-1}$. Now, $a_i$ can verify by itself whether to preempt an interval from $I'$ and accept $p_i$ or to reject $p_i$. After $a_i$ completes his local computation, $a_i$ sends $I'$ to its neighbor $a_{i+1}$.


\paragraph{Partitioning of Detailed Routing.}
Detailed routing is partitioned into at most three parts%
\footnote{Degenerate cases of detailed routing consist of two parts or just a single part; for example, detailed routing of requests whose sketch path is a single tile consists only of part (III).
}
, as follows (See Figure~\ref{fig:detailedpath}).
\begin{enumerate}[(I)]
\item Special segments,
\item Internal segments, and
\item Last tile: detailed routing in the last tile deals with routing the request from the point that it enters the last tile till a copy of the  destination vertex within the tile.
\end{enumerate}
Preemptions may occur in parts (I) and (III) of the detailed routing. Preemptions are caused by conflicts between detailed routing of packets that belong to the same part. Namely, a special segment can only preempt another special segment. Similarly, detailed routing in the last tile preempts only routes that end in the same tile.

\paragraph{Reservation of Capacities.}
The algorithm reserves one unit of capacity in each edge $e\in
E^{st}$ for each part of detailed routing. This is the reason for the requirement that $B,c \geq 3$.
Note that the algorithm is wasteful in the sense that it only uses $3$ units of capacity in each edge.
We refer to each of these $3$ units of capacity as a \emph{track}, i.e., each part uses a different track.

\subsubsection{Detailed Routing in Special Segments}\label{sec:first detailed}
Consider the first segment of a sketch path $\hat{p}_i$ (see
Fig.~\ref{fig:sketchpath}).  The detailed routing corresponding to this segment is a
straight path that starts in the source-vertex $(a_i,t_i)$ and ends in the tile in
which $\hat{p}_i$ bends for the first time. As there may be contention for capacity
allocated for special segments, detailed routing needs to decide which request is
dropped. We reduce the problem of routing the first segment of detailed paths to the
problem of packing intervals in a line (described in detail in
Section~\ref{sec:intervalp}).

A separate reduction to interval packing in a line takes place for every row and column of the
untilted space-time grid.

Detailed routing in the last segment of $\hat{p}_i$ (before the last tile) is similar.
Consider a last segment of a sketch path $\hat{p}_i$ that starts in tile $s_1$ and
ends in tile $s_2$.  The detailed routing of a last segment
begins in the entry side of $s_1$ that is reached by the detailed routing of
the previous segment, and ends in the entry side of $s_2$. Between these two
endpoint, detailed routing is along a straight path. As in the case of detailed
routing of the first segment, routing in the last segment is reduced to interval
packing in a line.

Consider a sketch path $\hat p_i$ whose first bend is in tile $s$.  If the detailed
routing of the first segment of $p_i$ is not preempted before it enters the tile $s$,
then $r_i$ is not preempted before the first bend. Indeed, there are two types of
conflicting requests whose first segment conflicts with the first segment of $r_i$
depending on the location of the source vertex (either before or after the entry to
tile $s$). If the source vertex of $r_j$ appears before the entry to $s$, then $r_i$
``wins'' and $r_j$ is preempted. If the source vertex of $r_j$ appears after the entry
to $s$, then $r_i$ ``wins'' again because $r_j$ requests an interval that ends
outside the tile $s$ while $r_i$ requests an interval that ends in tile $s$. We also
need to consider a conflict with a last segment of a request $r_j$: (1)~If $r_j$ ends
inside $s$, then it must also begin in $s$ (because it is not possible for $r_i$ and
$r_j$ to enter the tile through the same edge). If $r_j$ begins and ends $s$, then it
is routed using only the third track (reserved for detailed routing in the last tile)
and $r_j$ does not conflict with the first segment of $r_i$. (2)~If $r_j$ ends outside
$s$, then it is preempted by $r_i$ because $r_i$ requests an interval that ends
inside $s$.

\begin{figure}
  \centering
  \subfloat[The sketch path $\hat{p}_i$]{
    \includegraphics[width=0.47\textwidth]{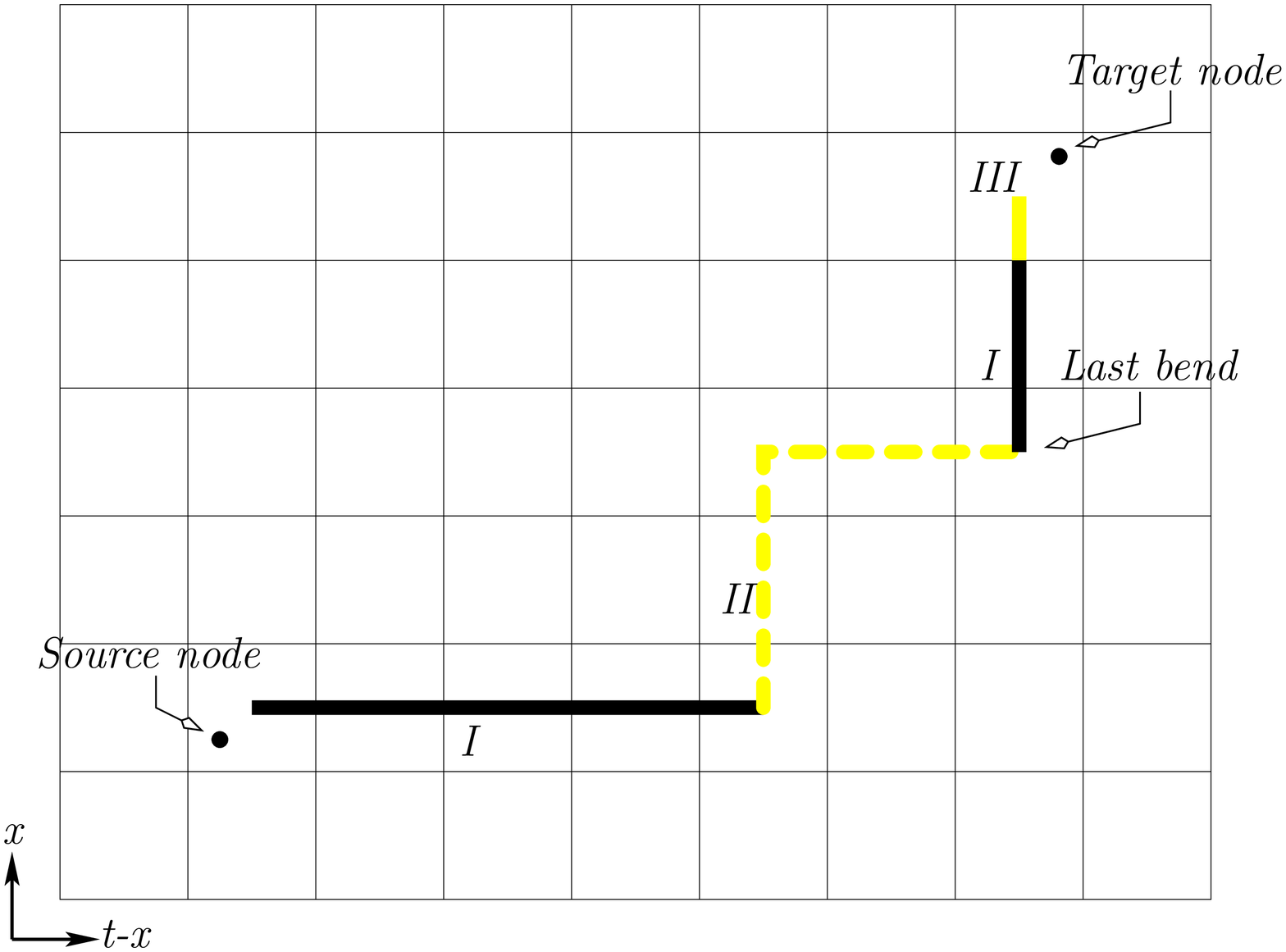}
  \label{fig:sketchpath}}
  \qquad
  \subfloat[The detailed path $p_i$]{
    \includegraphics[width=0.44\textwidth]{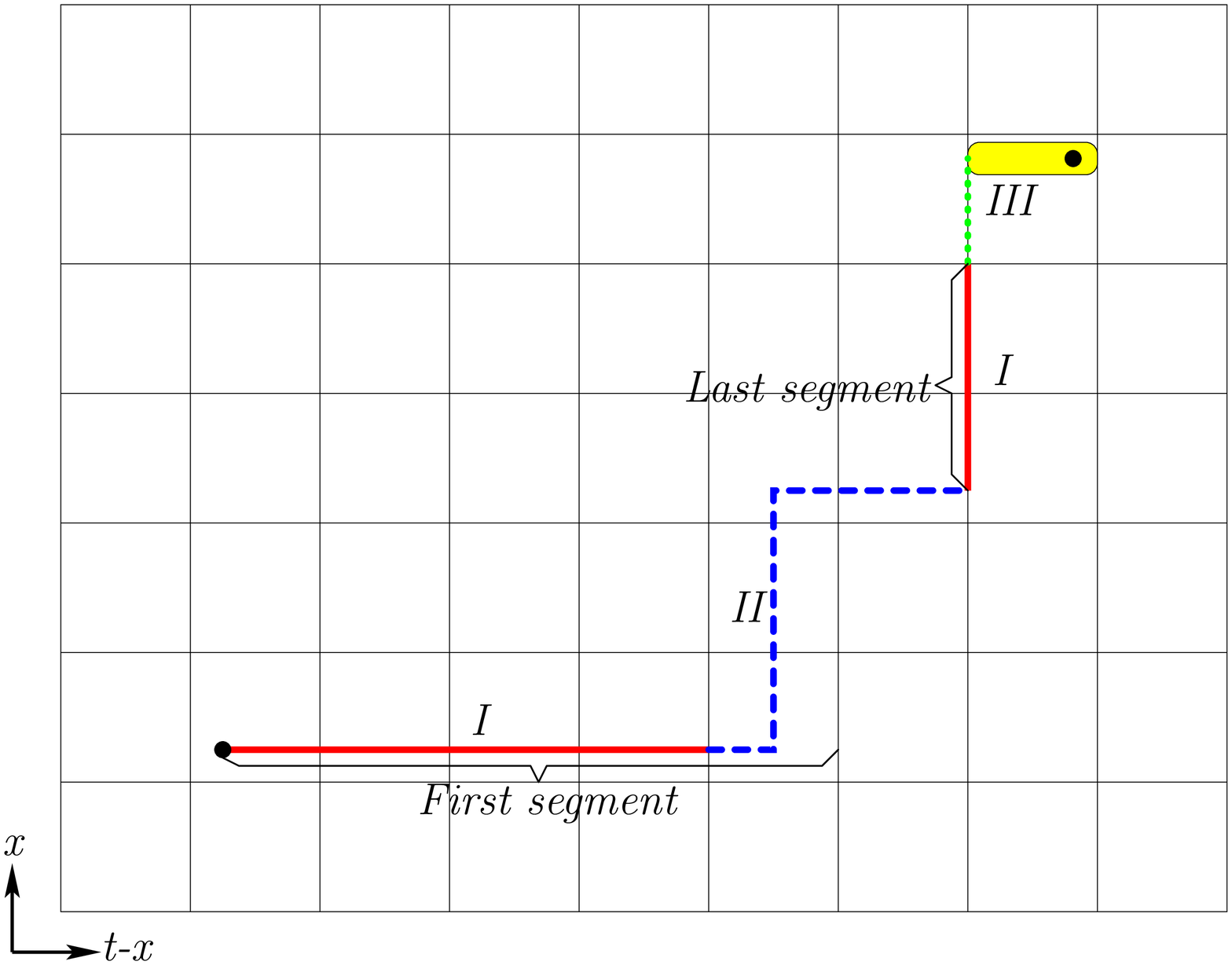}
  \label{fig:detailedpath}}
  \caption{The untilted space-time graph $G^{st}$ is partitioned into tiles depicted by square rectangles. These tiles are the vertices of the $\{1,2,\infty\}$-sketch graph $\hat S$, in fact, two neighboring squares correspond to two neighboring vertices in $\hat S$. (a) The sketch path $\hat{p}_i$ is overlayed on $G^{st}$. We partition $\hat{p}_i$ into three parts: (I)~first and last segments, which are depicted by solid segments, (II)~internal segments, which are depicted by dashed segments, and (III) routing in the last tile, which is depicted by a grey line. The source node of the packet request is in the first tile of the sketch path, the target node of the packet request is in the last tile of the sketch path. (b) The detailed path $p_i$ is depicted by a thin line that traverses the same tiles traversed by the sketch path $\hat{p}_i$. The detailed routing  of the first segment is depicted by the horizontal line emanating from the source  node. The dashed line depicts the detailed routing after the first segment. The detailed routing of the last segment takes a turn on the entry side of the tile that contains the last bend. The detailed routing in the last tile is depicted by an straight dotted thin line.  The space-time copies of $b_i$ are depicted by the grey rectangle that surrounds the target node.
  The intervals that are input to the interval packing algorithm are depicted by braces.}
  \label{fig:ddetail}
\end{figure}

\subsubsection{Detailed Routing in Internal Segments}\label{sec:detailed internal}
Detailed routing of internal segments takes place in a tile as follows. Fix a node
$v$. The node $v$ has two incoming edges and two outgoing edges. We denote these
edges by $horz_{in}, vert_{in}$ and $horz_{out},vert_{out}$.  We refer to the
request that traverses an edge $e$ by $e.r$.  For example, $horz_{in}.r$ is the name
of the request that enters $v$ via the horizontal edge. If an edge $e$ is not
assigned to a request, then we set $e.r$ to null. The rules for detailed routing of
these paths are as follows:
\begin{enumerate}
\item If one of the incoming edges $e$ is not assigned to a request, then the other
  edge $e'$ (if $e'.r$ is not null) chooses the outgoing edge according to its
  exit side.
\item (Precedence to straight traffic.) Else, if the exit side of $horz_{in}.r$ is east or the
  exit side of $vert_{in}.r$ is north, then the paths continue without a bend,
  namely, $horz_{out}.r\gets horz_{in}.r$ and $vert_{out}.r\gets vert_{in}.r$.
\item (Simultaneous bends.) Else, a knock-knee bend takes place, namely,
  $horz_{out}.r\gets vert_{in}.r$ and $vert_{out}.r\gets horz_{in}.r$. (see
  Figure~\ref{fig:knockknee}).
\end{enumerate}

We claim that detailed routing in an internal segment always succeeds.  If the
detailed path is headed towards its exit side (e.g., traverses the tile without a
bend), then detailed routing gives it priority so that it reaches its exit
side.  If the sketch path bends in the tile, then the detailed path must encounter
either a null path or another detailed path that also bends in the tile (in which
case the path takes the required turn). This is true because, otherwise, there would
be more than $k$ paths that exit the tile from the same side, contradicting the
congestion guarantee by the \IPP\ algorithm (that at most $k$ paths traverses the edges
between tiles).

\begin{figure}[H]
  \centering
    \includegraphics[width=0.4\textwidth]{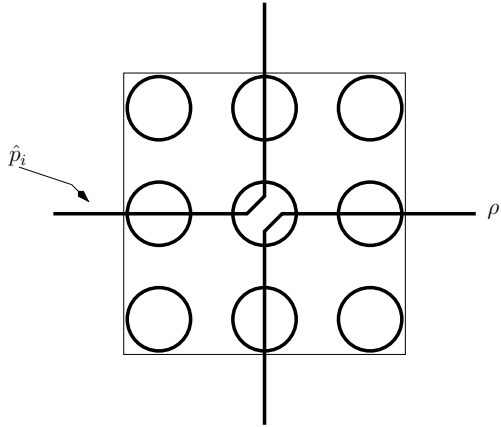}
  \caption{A knock-knee bend in detailed routing in $G^{st}$. Space-time nodes are depicted by white circles. The detailed route of $\hat p_i$ makes a turn in the vertical direction, thus freeing the suffix of the row $\rho$. The conflicting detailed route takes a turn in horizontal direction, thus freeing the suffix of the column in the vertical direction.}
    \label{fig:knockknee}
\end{figure}

We now deal with transitions from part (I) to part (II) of detailed routing.
Recall, that each part of the detailed path uses a different track.
Consider a sketch path $\hat p_i$ whose first bend is in tile $s$.
If the detailed routing of $\hat p_i$ reaches $s$, then it is not preempted by
another special segment (see Sec.~\ref{sec:first detailed}).
As in detailed routing in internal segments, the detailed route of $\hat p_i$ in tile
$s$ bends when it meets a null path or a detailed path that also wants to bend.
The same argument shows that such a bend is always successful.
After the bend, the path transitions from the first track to the second track.

We conclude that detailed routing is always successful in internal
segments.

\subsubsection{Detailed Routing in the Last Tile}
We refer to requests whose sketch path is a single tile as \emph{near} requests.
Note that detailed routing of a near request consists only of part (III).

Detailed routing in the last tile routes a path along a straight vertical path from
the entry point to the row in the tile that corresponds to the destination node.
Note that if the destination vertex of $r_i$ is $b_i$, then it suffices to route the
path to one of the space-time copies of $b_i$. Hence, every copy of $b_i$ in the tile
is a valid destination. Contentions occur only in each column, and a path with a
closest destination preempts the conflicting paths.

\subsection{Analysis of the Algorithm for $d=1$}\label{sec:analysis}

Recall that the length of a tile's side is $k = \lceil \log
(1+3\pmax) \rceil$. Moreover, in the case where  $B,c \in
[3,\log n]$, it follows that $k = O ( \log n)$.

\begin{theorem}\label{thm:alg}
  The competitive ratio of the algorithm for uni-directional line networks is $O(\log^5 n)$ provided
  that $B,c \in [3,\log n]$.
\end{theorem}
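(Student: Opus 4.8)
The plan is to express the competitive ratio as a product of the throughput losses along the reduction chain of Algorithm~\ref{alg:algDet} --- (optimal packet routing) $\to$ (integral path packing in $G^{st}$) $\to$ (length-bounded fractional packing) $\to$ (fractional packing in the $\{1,2,\infty\}$-sketch graph $\hat S$) $\to$ (integral packing produced by \IPP) $\to$ (detailed routing in $G^{st}$) --- and to show that the two reductions cost only constant factors while the downscaling and the detailed routing each cost a polylogarithmic factor, their product being $O(\log^5 n)$. Two of the links are immediate: by the space-time reduction (Section~\ref{sec:reduce}) an optimal packet routing yields a capacity-feasible integral path packing in $G^{st}$, so $|\opt(\sigma)| \le |\opt_f(R)|$; and the line specialization of Lemma~\ref{lemma:nB}, with $\pmax = 2n\left(1+n\left(\frac{B}{c}+1\right)\right)$, gives $|\opt_f(R)| \le \frac{2}{1-1/e}\,|\opt_f(R\mid\pmax)|$.

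For the downscaling, I would project an optimal length-$\pmax$ fractional packing in $G^{st}$ onto the original-capacity sketch graph; this preserves throughput and, since the $G^{st}$ packing is $1$-feasible and each sketch edge (resp. tile) aggregates $G^{st}$ edges of total capacity at most $k\cdot\max\{B,c\}$ (resp. $2k^2(B+c)$), it is $1$-feasible there as well. Rescaling to the $\{1,2,\infty\}$ capacities of $\hat S$ loses a factor $D$ equal to the largest capacity ratio, dominated by the interior (tile) edges, so $D \le k^2(B+c) = O(\log^3 n)$ by Proposition~\ref{prop:tiling} together with $B,c\le\log n$; hence $|\opt_f(R\mid\pmax)| \le D\cdot|\opt_f(\hat S)|$. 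Because $k=O(\log n)$, Theorem~\ref{thm:IPP} then guarantees that \IPP\ outputs an integral set $\hat P$ of length-$\pmax$ sketch paths with $|\hat P|\ge\frac12|\opt_f(\hat S)|$ and load at most $k$ on every edge of $\hat S$.

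The heart of the argument is to show that detailed routing delivers at least $|\hat P|/O(k^2)$ of these paths. Internal segments are routed losslessly (Section~\ref{sec:detailed internal}), so preemptions occur only in the special segments (part~I) and in the last tile (part~III), and I would bound each by a factor $O(k)$. The key observation is that the \IPP\ congestion guarantee --- at most $k$ paths cross any inter-tile edge --- forces the straight detailed subpaths competing in any fixed row or column to overlap in at most $O(k)$ places, so their conflict graph is an interval graph of clique number $O(k)$; since interval graphs are perfect, the optimal online interval-packing procedure of Section~\ref{sec:intervalp} (whose correctness relies on the ability to preempt) retains a $1/O(k)$ fraction of them. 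Combining the retention in part~I with the column contention of part~III yields the $O(k^2)=O(\log^2 n)$ bound. Multiplying the losses, $O(1)\cdot O(\log^3 n)\cdot 2\cdot O(\log^2 n)$, gives $|\opt(\sigma)| = O(\log^5 n)\cdot|\alg(\sigma)|$, as claimed.

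I expect the detailed-routing bound to be the main obstacle. One must verify that a path lost in one part is not double counted, handle the fact that first and last segments share a single reserved track and may conflict with each other, and confirm that the per-resource $1/O(k)$ retention composes --- across all rows and columns and across the two lossy parts --- into a single global $1/O(k^2)$ retention of $\hat P$ rather than degrading into a larger power of $k$.
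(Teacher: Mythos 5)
Your overall accounting matches the paper's: constant losses for the length bound (Lemma~\ref{lemma:nB}) and for the $(2,k)$-competitiveness of \IPP\ (Theorem~\ref{thm:IPP}), a factor $k^2(B+c)=O(\log^3 n)$ for downscaling to the $\{1,2,\infty\}$-sketch graph (Propositions~\ref{prop:opt} and~\ref{prop:scaled}), and a factor $O(k^2)$ for detailed routing. The projection/rescaling arguments you give for the first three factors are essentially the paper's. The problem is the step you yourself flag as the main obstacle: you never establish that \emph{all} special-segment losses (first and last segments together) cost only a single factor $O(k)$, and the argument you propose cannot deliver this. A per-row perfectness argument shows that, in each row, the interval-packing procedure retains a maximum independent set, hence a $1/O(k)$ fraction \emph{of the intervals presented to that row}. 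But a request must survive both its first segment (in some row) and its last segment (in some column), and the intervals presented for last-segment packing are only those of requests that already survived; the two per-resource retentions therefore compose multiplicatively, giving $1/O(k^2)$ for the special segments, hence $1/O(k^3)$ for detailed routing overall and a final bound of $O(\log^6 n)$ --- short of the theorem. Matters are worse still because first and last segments share one track, so a single row's packing mixes segments of both kinds and the clean ``first then last'' composition is itself not available.

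The paper closes exactly this gap with a global charging argument rather than a per-resource fraction argument (Proposition~\ref{prop:preemptions}). One considers the forest of preemptions and proves, by induction on depth, that every descendant of an interval $I_i=(a_i,b_i)$ contains the edge $(b_i-1,b_i)$; the load bound of $k$ from Theorem~\ref{thm:IPP} then caps the number of proper descendants of any surviving interval at $k-1$ per special segment. In the bipartite graph between the survivors $\IPP'$ and the preempted requests $\IPP\setminus\IPP'$, each preempted request has degree one (it is charged to the unique never-preempted root of its preemption tree), while each survivor has degree at most $2(k-1)$ (two special segments, $k-1$ descendants each). Counting edges gives $|\IPP'|\geq\frac{1}{2k}|\IPP|$ in one shot, jointly over first and last segments and over all rows and columns; combined with the per-tile bound $|\alg_s|\geq\frac{1}{2k}|\IPP'_s|$ of Proposition~\ref{prop:last} (at most $2k$ sketch paths end in a tile, and at least one is delivered), this yields the $O(k^2)$ detailed-routing factor your outline needs. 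Without this (or an equivalent) charging argument, your proof does not establish the stated $O(\log^5 n)$ bound.
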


\begin{proof sketch}{Theorem~\ref{thm:alg}}
The algorithm starts with the path packing algorithm \IPP\ over the $\{1,2,\infty\}$-sketch graph.
This means that capacities are reduced by a factor of at most $O(k^2 \cdot \max\{B,c\})=O(k^3)$ (by the capacity assignment ``inside'' a tile and ``between'' tiles).
The fact that path lengths are bounded by $\pmax$ reduces the throughput only by a constant factor. The throughput of algorithm \IPP\ is $O(1)$-competitive.

Detailed routing succeeds in routing at least a $k^2$ fraction of the sketch paths. There are two causes for loss of packets: routing of special segments and routing in the last tile.
Routing of special segments (i.e., first and last segment) succeeds for a fraction of $1/k$.
we show that the success rate is not multiplied and that the success rate for special segments is $1/2k$. Routing in the last tile succeeds for a fraction of $1/2k$ per tile.
Putting things together we get a competitive ratio of $O(k^5)$, as required.
\end{proof sketch}

Note that the Theorem~\ref{thm:alg} actually applies for $B,c \in [3,O(\log
n)]$. The constant in the $O(\log n)$ linearly affects the constant in the competitive ratio of the algorithm.

\paragraph{Notation.}
Let $R$ be a fixed sequence of packet requests introduced by the adversary.
Let $R_s \subseteq R$ denote the set of requests whose sketch path ends in tile $s$.
For every $X \subseteq R$ and for every tile $s$ let $X_s \eqdf X \cap R_s$.
We interpret requests in $R$ as path requests in $G^{st}$.  Let $\opt$ (respectively
$\opt_f$) denote a maximum integral (respectively fractional) packing of paths from
$R$ in $G^{st}$.
Let $\IPP(R)$ denote the set of requests that algorithm \IPP\ injected when
given input $R$.  For brevity, we denote $\IPP(R)$ simply by $\IPP$.
Similarly, let $\alg$ denote the set of requests that \alg\ routed to their
destination.  Let $\IPP'\subseteq \IPP$ denote the set of requests that are not
preempted before they reach the entry side of their last tile.
(Note that $\alg \subseteq \IPP' \subseteq \IPP \subseteq R$.)
Let $f^*$ denote an optimal fractional flow with respect to $R$ over the sketch graph $S$.
Let $f^*_{\{1,2,\infty\}}$ denote an optimal fractional flow with respect to $R$
over the $\{1,2,\infty\}$-sketch graph $\hat S$.
(Note that $\opt$ and $\opt_f$ are packings of paths in $G^{st}$, while $f^*$ and
$f^*_{\{1,2,\infty\}}$ are packings in sketch graphs.)
Let $\opt_f(R \mid \pmax)$ denote an  optimal fractional path packing in $G^{st}$ with respect to $R$ under the constraint that each request is routed along a path of length at most $\pmax$.
Let $f^*(R \mid \pmax)$ denote an
optimal fractional flow in the sketch graph $S$ with respect to
$R$ under the constraint that flow paths have a length of at most $\pmax$.
Let $f^*_{\{1,2,\infty\}}(R \mid \pmax)$ denote an
optimal fractional flow in the $\{1,2,\infty\}$-sketch graph $\hat S$ with respect to
$R$ under the constraint that flow paths have a length of at most $\pmax$.
Let $|g|$ denote the throughput of flow $g$.

\medskip
\noindent
We now present a detailed proof of Theorem~\ref{thm:alg}, based on the following propositions.
\begin{proposition}\label{prop:opt}
$|f^*(R\mid\pmax)| \geq |\opt_f(R\mid\pmax)|$.
\end{proposition}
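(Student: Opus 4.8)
The plan is to exhibit a capacity-respecting projection from the space-time graph $G^{st}$ onto the sketch graph $S$ that contracts each tile to its corresponding node, to push an optimal bounded-length packing $\opt_f(R\mid\pmax)$ forward through this projection, and to verify that the image is a feasible flow in $S$ of the same throughput whose flow paths still have length at most $\pmax$. Since $f^*(R\mid\pmax)$ is by definition a maximum such flow, the inequality $|f^*(R\mid\pmax)| \geq |\opt_f(R\mid\pmax)|$ then follows immediately.

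Concretely, I would define $\pi : V^{st} \to V(S)$ sending each space-time vertex to the unique tile containing it, and each sink to itself. First I would record the geometric fact underlying everything: after untilting, every edge of $G^{st}$ is axis parallel and advances exactly one coordinate (space edges $E_0$ advance a spatial coordinate, hold edges $E_1$ advance the time coordinate), so every directed path in $G^{st}$ is coordinate-monotone. Because a tile is an axis-parallel box, a monotone path meets it in a single contiguous subpath; hence such a path enters and leaves any given tile at most once and crosses between two adjacent tiles $s_1,s_2$ at most once.

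Next I would push the packing forward. For each flow path $p$ of $\opt_f(R\mid\pmax)$ carrying flow $\phi_p$ from $(a_i,t_i)$ to the sink $\hat b_i$, let $\pi(p)$ be the walk obtained by applying $\pi$ vertex-by-vertex and deleting the self-loops created by edges internal to a tile. By the monotonicity observation $\pi(p)$ revisits no tile, so it is a simple path in $S$ from the source tile of $r_i$ to $\hat b_i$ whose length is at most that of $p$, hence at most $\pmax$; moreover its endpoints are exactly the source tile and sink prescribed for $r_i$ in the sketch graph. Assigning flow $\phi_p$ to each $\pi(p)$ defines a flow $g$ in $S$, and since contraction preserves the amount of flow that each path delivers to its sink, $|g| = |\opt_f(R\mid\pmax)|$. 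It remains to check feasibility. For a sketch edge $(s_1,s_2)$, a path uses it precisely when it uses one of the underlying $G^{st}$ edges from $s_1$ to $s_2$, and by monotonicity it uses exactly one; thus the load of $(s_1,s_2)$ under $g$ equals the total $\opt_f$-flow on those $G^{st}$ edges, which is at most the sum of their capacities, namely $c(s_1,s_2)$, by feasibility of $\opt_f$ and the definition of the sketch-edge capacity. The infinite-capacity sink edges are trivially respected. For a tile node $s$, the total flow that $g$ routes out of $s$ (to neighboring tiles or to sinks) equals the $\opt_f$-flow leaving the tile boundary plus the $\opt_f$-flow terminating at destination copies inside the tile; the former is bounded by the outgoing boundary capacity $k\cdot(B+c)$ and the latter by the incoming capacity $k^2\cdot(B+c)$ of the tile's $k^2$ vertices, so the total is at most $2\cdot k^2\cdot(B+c) = c(s)$.

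The step I expect to be the most delicate is the node-capacity verification. One must be careful to account not only for flow entering through the tile boundary but also for flow sourced inside the tile and flow terminating at destination copies inside it, and then confirm that the whole amount is genuinely dominated by the chosen node capacity $2\cdot k^2\cdot(B+c)$. The monotonicity and contiguity of paths is exactly what keeps this bookkeeping honest: it guarantees each path contributes to each sketch edge and to the boundary of each tile at most once, so no single path is double-counted when comparing the pushed-forward load against the summed capacities defining $S$.
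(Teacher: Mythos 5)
Your proposal is correct and follows essentially the same route as the paper's proof: both push the optimal bounded-length fractional packing in $G^{st}$ forward through the tile-contraction map onto $S$, note that throughput and the length bound $\pmax$ are preserved, verify sketch-edge capacities by linearity (the sketch-edge capacity being the sum of the coalesced $G^{st}$ edge capacities), and verify the tile node capacity $2k^2(B+c)$ by bounding the flow through a tile against the capacities of the edges incident to it. Your explicit coordinate-monotonicity argument (paths meet each tile in a contiguous subpath) and your split of the node-capacity bound into boundary out-flow plus terminating flow are just more careful renderings of steps the paper asserts implicitly.
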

\begin{proof}
  Consider a fractional packing $h$ of paths in $G^{st}$ in which
  paths lengths are bounded by $\pmax$.  Let $g$ denote the flow in
  sketch graph $S$ where $g(e)$ is simply the sum of the flows of
  $h$ along the edges in $G^{st}$ that are coalesced to $e$ in
  $S$. Clearly, $|g|=|h|$. We claim that $g$ is a feasible
  fractional flow in the sketch graph $S$ whose flow paths are not
  longer than the flow paths in $h$. (In fact, they are shorter by a factor of $k$.)

  We show that the flow $g$ satisfies the capacity constraints in $S$
  as follows. If $e$ is a sketch edge between tiles, then, by
  linearity, the capacity constraint is satisfied.  We now focus on
  interior edges.  The amount of flow in $h$ that traverses a tile in
  $G^{st}$ is bounded by the sum of the capacities of the
  edges in the tile, namely, it is at most $(B+c)\cdot k^2$. It
  follows that the amount of flow in $g$ that traverses a node (that
  corresponds to a tile) in the sketch graph is bounded by the node's
  capacity (which equals $2\cdot k^2 \cdot (B+c)$). We conclude that $g$ is a feasible flow in $S$, and the proposition follows.
\end{proof}

\begin{proposition}\label{prop:scaled}
    $k^{2}\cdot(B+c)\cdot |f^*_{\{1,2,\infty\}}(R\mid\pmax)| \geq |f^*(R\mid\pmax)| \geq |f^*_{\{1,2,\infty\}}(R\mid\pmax)|$
\end{proposition}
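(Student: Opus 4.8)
The plan is to prove the two inequalities by exhibiting capacity-respecting maps between feasible fractional flows on the two sketch graphs, one map for each direction. The key observation is that $\hat S$ is obtained from $S$ by (a)~splitting each tile-node $s$ into $s_{in}\to s_{out}$ joined by an interior edge, and (b)~replacing the capacities of $S$ with the uniform values in $\{1,2,\infty\}$. Accordingly I treat the flow on the interior edge of $\hat S$ as the flow through the node $s$ in $S$, so that a flow path in $\hat S$ projects to a flow path in $S$ (and conversely) simply by contracting (resp.\ expanding) interior edges. This correspondence never lengthens a path when contracting and lengthens it only by the per-node split already absorbed into the convention $\pmax\leftarrow 2\pmax+1$; I will use the same symbol $\pmax$ on both sides throughout and rely only on the fact that the contraction does not increase path length, so the length constraint is preserved in both transformations.

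For the right inequality $|f^*(R\mid\pmax)| \geq |f^*_{\{1,2,\infty\}}(R\mid\pmax)|$, I start from an optimal flow $f^*_{\{1,2,\infty\}}(R\mid\pmax)$ on $\hat S$ and project it to a flow $g$ on $S$ by contracting interior edges, so that $|g|=|f^*_{\{1,2,\infty\}}(R\mid\pmax)|$. I then verify that the $\{1,2,\infty\}$-capacities are everywhere no larger than the capacities of $S$: every tile-to-tile edge of $S$ has capacity $c\cdot\hl$ or $B\cdot\vl$, which is at least $1$ since $B,c,\hl,\vl\geq 1$; every tile-node of $S$ has capacity $2k^2(B+c)\geq 2$, dominating the interior-edge capacity $2$; and sink edges carry capacity $\infty$ in both graphs. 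Hence $g$ is feasible in $S$ with path lengths still at most $\pmax$, and optimality of $f^*(R\mid\pmax)$ gives $|f^*(R\mid\pmax)|\geq |g|$, as required.

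For the left inequality $k^2(B+c)\cdot|f^*_{\{1,2,\infty\}}(R\mid\pmax)| \geq |f^*(R\mid\pmax)|$, I start from an optimal flow $f^*(R\mid\pmax)$ on $S$, expand each node back into its interior edge, and scale the whole flow down uniformly by $\lambda\eqdf 1/(k^2(B+c))$, obtaining $g\eqdf\lambda\cdot f^*(R\mid\pmax)$. I claim $g$ is feasible in $\hat S$. For a tile-to-tile edge, the flow is at most its $S$-capacity, which by Proposition~\ref{prop:tiling} is at most $k\cdot\max\{B,c\}$, so its scaled load is at most $\lambda\cdot k\cdot\max\{B,c\} = \max\{B,c\}/(k(B+c))\leq 1$; for an interior edge, the node flow is at most $2k^2(B+c)$, so its scaled load is at most $\lambda\cdot 2k^2(B+c)=2$; sink edges are unconstrained. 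Scaling does not lengthen paths, so $g$ still obeys the bound $\pmax$. Optimality of $f^*_{\{1,2,\infty\}}(R\mid\pmax)$ then yields $|f^*_{\{1,2,\infty\}}(R\mid\pmax)|\geq |g| = \lambda\cdot|f^*(R\mid\pmax)|$, which is the claimed inequality after multiplying through by $k^2(B+c)$.

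The only real care needed, and hence the main obstacle, is choosing the single scaling factor $\lambda$ that simultaneously handles both edge types: the node capacity $2k^2(B+c)$ is the binding constraint and forces the factor $k^2(B+c)$, so I must confirm that this same factor also suffices for the tile-to-tile edges, i.e.\ that $k^2(B+c)\geq k\cdot\max\{B,c\}$, which holds since $k\geq 1$ and $B+c\geq\max\{B,c\}$. Once this comparison is verified, the remainder is routine bookkeeping about the node-splitting correspondence between $S$ and $\hat S$.
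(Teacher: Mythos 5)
Your proposal is correct and follows essentially the same route as the paper: the paper's proof simply records the two capacity comparisons between $S$ and $\hat S$ (tile-to-tile edges satisfy $\hat c(e) \leq c(e) \leq k\cdot(B+c)\cdot \hat c(e)$, and tile nodes versus interior edges satisfy $c(s) = k^2\cdot(B+c)\cdot\hat c(e)$) and declares the proposition a direct consequence, which is exactly the contraction/expansion-plus-uniform-scaling argument you spell out, with the binding factor $k^2(B+c)$ coming from the node capacities. Your explicit treatment of the path-length bookkeeping (the $\pmax\leftarrow 2\pmax+1$ convention) is a detail the paper glosses over, and you handle it correctly.
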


\begin{proof}
  Recall that $f^*$ is a maximum flow in the sketch graph $S$ while
  $f^*_{\{1,2,\infty\}}$ is a maximum flow in $\hat{S}$.
The proof is a direct consequence of the following bounds between capacities in $S$
and in $\hat S$.

  For every edge $e$ that is both in $S$ and in $\hat S$, we have
  \begin{equation}\label{eq:cap S}
    k\cdot (B+c) \cdot \hat c(e) \geq c(e) \geq \hat c(e).
  \end{equation}

 For every node $s$ that corresponds to a tile, we have
  \begin{equation}\label{eq:cap S}
  c(e)= k^2 \cdot (B+c)\cdot \hat c(e).
  \end{equation}
\end{proof}

\begin{proposition}\label{prop:fipp}
    $|\IPP| \geq \left(\frac{1}{2\cdot k^{2}\cdot(B+c)}\right) \cdot |f^*(R\mid \pmax)|$
\end{proposition}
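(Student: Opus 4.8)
The plan is to obtain the bound by chaining two facts that are already available: the performance guarantee of the \IPP\ algorithm (Theorem~\ref{thm:IPP}) and the capacity comparison between the sketch graph $S$ and the $\{1,2,\infty\}$-sketch graph $\hat S$ (Proposition~\ref{prop:scaled}).

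First I would apply Theorem~\ref{thm:IPP} to the graph on which \IPP\ is actually executed, namely $\hat S$, with the legal-path-length bound $\pmax$. I need to verify the hypotheses: the edge capacities of $\hat S$ take values in $\{1,2,\infty\}$, so $\min_e \hat c(e) = 1 \geq 1$, and a lightest-legal-path oracle exists since shortest paths in $\hat S$ under arbitrary nonnegative edge weights can be computed. Theorem~\ref{thm:IPP} then guarantees that the throughput of \IPP\ is at least $\frac12$ of the maximum throughput over all fractional packings of legal (length $\leq \pmax$) paths in $\hat S$. Since $f^*_{\{1,2,\infty\}}(R\mid\pmax)$ is by definition exactly this constrained fractional optimum, this yields $|\IPP| \geq \frac12 \cdot |f^*_{\{1,2,\infty\}}(R\mid\pmax)|$.

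Next I would invoke the left inequality of Proposition~\ref{prop:scaled}, namely $k^2\cdot(B+c)\cdot |f^*_{\{1,2,\infty\}}(R\mid\pmax)| \geq |f^*(R\mid\pmax)|$, rewritten as $|f^*_{\{1,2,\infty\}}(R\mid\pmax)| \geq \frac{1}{k^2\cdot(B+c)}\cdot |f^*(R\mid\pmax)|$. Substituting this into the previous inequality gives $|\IPP| \geq \frac{1}{2\cdot k^2\cdot(B+c)}\cdot |f^*(R\mid\pmax)|$, which is precisely the claimed bound.

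There is no genuine obstacle here; the proposition is essentially a one-line consequence of composing the two cited results. The only point requiring care is the bookkeeping: one must make sure that the ``maximum fractional throughput'' appearing in Theorem~\ref{thm:IPP} is interpreted over length-bounded paths in $\hat S$, so that it coincides with $f^*_{\{1,2,\infty\}}(R\mid\pmax)$ rather than with an unconstrained or differently-scaled optimum, and that the factor $k^2\cdot(B+c)$ lost in passing from $\hat S$ back to $S$ is exactly the downscaling factor quantified in Proposition~\ref{prop:scaled}.
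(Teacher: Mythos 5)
Your proposal is correct and follows essentially the same two-step argument as the paper: the paper's proof likewise applies the $(2,k)$-competitiveness of \IPP\ (Theorem~\ref{thm:IPP}) over $\hat S$ to get $|\IPP| \geq \frac12 \cdot |f^*_{\{1,2,\infty\}}(R\mid\pmax)|$, and then uses the capacity-downscaling bound (the left inequality of Proposition~\ref{prop:scaled}) to conclude. Your additional check of the hypotheses of Theorem~\ref{thm:IPP} (unit minimum capacity in $\hat S$ and the existence of a lightest-legal-path oracle) is a sensible bit of care that the paper leaves implicit.
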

\begin{proof}
By    Theorem~\ref{thm:IPP} (i.e., $(2,k)$-competitiveness of \IPP),
\begin{align*}
  |\IPP| &\geq \frac {1}{2} \cdot f^*_{\{1,2,\infty\}} (R \mid \pmax ).
\end{align*}
Downscaling of capacities implies
\begin{align*}
f^*_{\{1,2,\infty\}}(R \mid \pmax)
  &\geq \left(\frac{1}{k^{2}\cdot(B+c)}\right) \cdot |f^*(R\mid \pmax)|,
\end{align*}
and the proposition follows.
\end{proof}

The following proposition proves that a fraction of at most $(1-\frac{1}{2k})$ of the
requests in \IPP\ are preempted before they reach their last tile.
\begin{proposition}\label{prop:preemptions}
    $|\IPP'| \geq \frac{1}{2k} \cdot |\IPP|$
\end{proposition}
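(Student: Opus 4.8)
The plan is to first reduce the claim to a statement purely about the special (first and last) segments. Recall that detailed routing of internal segments always succeeds, so a request $r_i \in \IPP$ fails to reach the entry side of its last tile only if its first segment or its last segment is preempted during part~(I) of detailed routing. Hence $\IPP \setminus \IPP'$ consists exactly of the requests whose first or last special segment is dropped, and it suffices to show that at least a $\frac{1}{2k}$ fraction of the requests survive the special-segment routing.

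Next I would exploit two facts about special-segment routing. First, on every row and column of the untilted space-time grid the routing is exactly the optimal online interval-packing procedure of Section~\ref{sec:intervalp}, so on each line the set of retained segments is a \emph{maximum} independent set of intervals. Second, the $(2,k)$-competitiveness of $\IPP$ bounds the load of every (unit-capacity) sketch edge by $k$; since each such edge corresponds to a single tile-to-tile boundary in $G^{st}$, and each special segment passing through a given point of a line is charged to a distinct sketch path crossing the nearest such boundary, at most $k$ special segments pass through any point of any line. Interval graphs are perfect, so their chromatic number equals their clique number; a congestion bound of $k$ therefore lets me $k$-color the segments on each line into independent sets, whence the maximum independent set kept by the packer on that line retains at least a $\frac{1}{k}$ fraction of the intervals lying on it. Summing over all lines shows that the packer keeps at least a $\frac{1}{k}$ fraction of all special segments.

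The main obstacle is that a request survives only if \emph{both} its first and its last segment are kept, and these two segments generally lie on different lines handled by independent interval-packing instances; a naive argument would multiply the two $\frac{1}{k}$ factors and yield only $\frac{1}{k^2}$. To obtain the stronger $\frac{1}{2k}$ bound I would avoid this multiplication by accounting the two losses additively rather than as a product. Concretely, I would use the priority structure established in Section~\ref{sec:first detailed}: a first segment that reaches the tile in which it bends is never preempted, because in every conflict it requests an interval that \emph{ends inside} that tile and hence ``wins'' against any competing first or last segment that continues past the tile boundary. This lets me treat the first-segment survivors as a fixed population and then bound, within a single shared factor-$k$ congestion budget for the special-segment track, the additional loss caused by last-segment conflicts by a constant factor $2$ rather than by another factor of $k$. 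The delicate point — and where I expect the real work to lie — is formalizing this shared-budget charging argument so that first- and last-segment preemptions are counted against the same congestion of $k$, giving $|\IPP \setminus \IPP'| \le \left(1 - \frac{1}{2k}\right)\cdot|\IPP|$ and hence $|\IPP'| \ge \frac{1}{2k}\cdot|\IPP|$.
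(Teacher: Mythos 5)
Your reduction to special segments, the per-line interval-packing view, and the congestion bound of $k$ all match the paper's setup, and your diagnosis of the central difficulty (naively composing the two per-segment losses gives $1/k^2$) is exactly right. But the step you defer --- the ``shared-budget charging argument'' --- \emph{is} the proof, and the substitute you sketch does not supply it. The paper's mechanism is a \emph{forest of preemptions}: the intervals preempted by $I_i$ are its children, and the packer's preemption rule (the preemptor overlaps the preempted interval and ends no later than it) yields, by induction along the forest, that every descendant of $I_i=(a_i,b_i)$ contains the edge $(b_i-1,b_i)$; consequently $I_i$ and all of its descendants cross a common tile boundary, and the load-$k$ guarantee of Theorem~\ref{thm:IPP} caps the number of proper descendants of any interval at $k-1$. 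The proposition then follows from a bipartite count: each preempted request is preempted exactly once, hence sits in exactly one preemption tree and is charged to the request of $\IPP'$ above it; each request in $\IPP'$ has at most two special segments and therefore absorbs at most $2(k-1)$ charges, giving $|\IPP\setminus\IPP'|\le 2(k-1)\cdot|\IPP'|$ and hence $|\IPP'|\ge \frac{1}{2k}\cdot|\IPP|$.

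Your proposed route cannot be completed as stated. The perfectness/maximum-independent-set bound is an \emph{aggregate} per-line statement: it controls how many intervals survive on each line, but not \emph{which} requests they belong to, so it cannot capture the conjunctive requirement that both segments of the same request survive. Moreover, the claim that last-segment conflicts cost only ``a constant factor $2$'' among first-segment survivors is unjustified: under any aggregate accounting, last-segment preemptions could a priori eliminate all but a $1/k$ fraction of the first-segment survivors. The factor $2$ in the paper's $2k$ has a different origin --- it counts the two special segments of each \emph{surviving} request, each of which can be the root cause of at most $k-1$ preemptions --- not a constant bound on last-segment loss. The priority fact you invoke (a first segment that reaches its bend tile is never preempted) is used in the paper only to guarantee the hand-off from part (I) to part (II) of detailed routing; it does not bound how many requests a survivor displaces. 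Finally, a small inaccuracy in your congestion step: segments confined to a single tile place no load on any sketch edge, so ``at most $k$ segments through any point'' requires the observation that single-tile requests are routed entirely on the third track and never enter the special-segment packing.
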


\begin{proof}
  Consider a row or a column $L$ of nodes in $G^{st}$. Let $R\cap L$ denote the set of requests that contain special
  segments that compete over edges in $L$. From the point of view of
  $L$, each request $r_i\in R\cap L$ is a request for an interval
  $I_i\subseteq L$.
  As described in Section~\ref{sec:intervalp}, the detailed routing of the requests $R\cap L$ along  $L$ simulates an optimal interval packing algorithm. In
  particular, the simulation has the property that if an interval
  $I_i=(a_i,b_i)$ preempts an interval $I_j=(a_j,b_j)$, then the intervals overlap
  and $b_i \leq
  b_j$. Hence, the edge $(b_i-1,b_i)$ is in $I_j$.

  Focus on preemptions that occur during the detailed routing of first segments (the
  case of last segments is similar). Consider the ``forest of preemptions'' over the
  intervals, where the set of intervals that were preempted by $I_i$ are children of
  $I_i$.  We claim that if interval $I_j$ is a descendant of $I_i$ in this forest,
  then the edge $(b_i-1,b_i)$ is in $I_j$.  The proof is by induction on the distance
  between $I_i$ and $I_j$ in the forest of preemptions. The induction basis holds for
  a child $I_j$ by the discussion above.  Suppose that $I_k$ preempted $I_j$ (hence
  $b_k\leq b_j$). Since $I_k$ is a descendent of $I_i$, by the induction hypothesis
  $(b_i-1,b_i)$ is an edge in $I_k$. Because $I_j$ is preempted by $I_k$ in a vertex
  to the left of $b_i$, it follows that the edge $(b_{i}-1,b_i)$ is in $I_j$, as
  required.  By Theorem~\ref{thm:IPP}, the load induced by $\IPP$ on each
  $\{1,2,\infty\}$-sketch edge is at most $k$.  Therefore, the maximum number of
  proper descendants of $I_i$ in the forest is $(k-1)$ (not including $I_i$).

  Consider a bipartite graph of preemptions over $\IPP'\cup (\IPP\setminus \IPP')$
  (now we consider both first segments and last segments). There is an edge
  $(r_i,r_j)$ if the request $r_i\in \IPP'$ is an ancestor of the request $r_j\in
  (\IPP\setminus \IPP')$ in the forest of preemptions corresponding to detailed
  routing. Since a preempted request is preempted only once, the degree of the nodes
  in $\IPP\setminus \IPP'$ is one.  Recall that each sketch path contains at most $2$
  special segments.  By the discussion above, the degree of a node in $\IPP'$ is
  bounded by $2\cdot(k-1)$. 
  By counting edges in the bipartite graph, we conclude that $|\IPP'| \cdot 2
  \cdot(k-1) \geq |\IPP\setminus \IPP'|$, and the proposition follows.
\end{proof}

The following proposition states that a fraction of at least $1/(2k)$ of the requests
that reach their last tile are successfully routed.
\begin{proposition}\label{prop:last}\label{prop:Rs}
    $|\alg| \geq \frac{1}{2k} \cdot |\IPP'|$
\end{proposition}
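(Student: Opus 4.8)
The plan is to bound the loss incurred by detailed routing within the last tile by reducing, column-by-column within each tile, to the online interval-packing simulation already analyzed in Proposition~\ref{prop:preemptions}. Recall that detailed routing in the last tile (part (III)) sends each request along a straight vertical path from its entry point to a copy of the destination row, and that contentions occur only within a single column, where the path with the closest destination preempts the others. The key observation is that this is structurally identical to the interval-packing situation: within a fixed column of the last tile, each request $r_i \in \IPP'_s$ that ends in tile $s$ corresponds to a vertical interval from its entry point to its destination row, and the preemption rule (closest destination wins) is precisely the rule that an interval with a nearer endpoint preempts one with a farther endpoint.

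First I would fix a tile $s$ and partition $\IPP'_s$ according to the column of $s$ in which each request terminates. For a single column, I would set up the forest-of-preemptions exactly as in the proof of Proposition~\ref{prop:preemptions}: whenever a request preempts another, the two intervals overlap and the surviving one has the closer (smaller) destination, so the terminating edge of the survivor lies in every descendant interval. The crucial quantitative input is the congestion guarantee from Theorem~\ref{thm:IPP}: the load induced by $\IPP$ on each $\{1,2,\infty\}$-sketch edge is at most $k$, which bounds the number of detailed paths that can enter the last tile through any single entry edge, and hence bounds the number of proper descendants of any surviving request in the forest by $(k-1)$. Because routing in the last tile uses its own dedicated third track, these preemptions do not interact with those counted in part (I), so the analysis is genuinely independent.

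Next I would assemble the bipartite-graph counting argument in direct analogy with Proposition~\ref{prop:preemptions}. I would form a bipartite graph of preemptions between $\alg_s$ (the requests that survive to their destination in tile $s$) and $\IPP'_s \setminus \alg_s$ (those that reach the last tile but are preempted inside it), placing an edge when a surviving request is an ancestor of a preempted one in the forest. Each preempted request is preempted exactly once, so its degree is one; each surviving request has at most $(k-1)$ proper descendants across the columns it could affect. Counting edges gives $|\alg_s| \cdot 2(k-1) \geq |\IPP'_s \setminus \alg_s|$ (the factor accounting for the at-most-two relevant endpoints/columns), whence $|\alg_s| \geq \frac{1}{2k}\cdot|\IPP'_s|$. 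Summing over all tiles $s$, and using that the sets $\IPP'_s$ partition $\IPP'$ (each request ends in exactly one tile) while the $\alg_s$ partition $\alg$, yields $|\alg| \geq \frac{1}{2k}\cdot|\IPP'|$, as claimed.

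The main obstacle I anticipate is making the per-column reduction airtight: I must verify that within the last tile the contention truly decomposes into independent columns with no cross-column interference, and that ``closest destination preempts'' produces a forest with exactly the overlap-and-nearer-endpoint structure that drives the descendant bound. I would also need to confirm that a request reaching the entry side of its last tile (the defining property of $\IPP'$) indeed enters the vertical routing phase and can only be lost to a same-column competitor, so that no additional losses sneak in between the part (I)/(II) boundary and part (III). Once that structural correspondence is pinned down, the counting is essentially a transcription of the argument in Proposition~\ref{prop:preemptions}.
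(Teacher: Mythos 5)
Your overall skeleton (per-tile decomposition, then a per-column preemption forest with a bipartite counting argument) is a legitimate strategy, but the quantitative step at its heart has a genuine gap, and it is exactly the point where the paper uses a different, much simpler fact. You bound the number of proper descendants of a survivor by $(k-1)$ by appealing to the load of at most $k$ on each sketch edge entering the last tile. This does not cover all the requests that compete in a column of the last tile: (a)~\emph{near} requests, whose sketch path is the single tile $s$, consist only of part~(III) of detailed routing and never cross any entry edge, so no entry-edge load bounds how many of them start in (and hence route vertically in) one column; and (b)~in the westmost column, requests entering from the west and requests entering from the south at that column are superimposed, so even ignoring near requests the contention there can reach $2k$ rather than $k$. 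Moreover, your factor of $2$ (``at-most-two relevant endpoints/columns'') has no counterpart here: in Proposition~\ref{prop:preemptions} the $2$ comes from each request having two special segments, whereas in the last tile each request owns a single vertical path in a single column, so it can only be an ancestor in one column's forest. As argued, the degree bound $2(k-1)$ is unsubstantiated (the true worst-case per-column contention is $2k-1$, which your bound even undercounts).

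The fact that actually controls all of this---and the one your proof never invokes---is the \emph{interior edge} of the tile in $\hat S$: every request whose sketch path ends in tile $s$, near or far, traverses the interior edge of $s$, which has capacity $2$ and whose load is exceeded by at most a factor of $k$ by Theorem~\ref{thm:IPP}; hence $|\IPP'_s|\leq 2k$. Once you have this, the forest and the bipartite counting become unnecessary: the closest-destination-wins rule guarantees that detailed routing in the last tile delivers at least one request of $\IPP'_s$ whenever $\IPP'_s\neq\emptyset$, so $|\alg_s|\geq 1\geq \frac{1}{2k}\cdot|\IPP'_s|$, and summing over tiles (your partition step, which is also the paper's) finishes the proof. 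That is the paper's entire argument; the reduction to interval packing is not needed for this proposition, and without the interior-edge bound your version cannot be made airtight.
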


\begin{proof}
  Since $\{\IPP'_s\}_{s \in V(S)}$ is a partition of $\IPP'$ and  $\{\alg_s\}_{s \in V(S)}$ is a partition of $\alg$, it suffices to prove
  that $|\alg_s| \geq \frac{1}{2k}\cdot |\IPP'_s|$ for every tile $s$.

  Fix a tile $s$.  Every sketch path of a request in $\IPP'_s$ traverses the interior
  edge of $s$ in $\hat S$ whose capacity is $2$. Theorem~\ref{thm:IPP} implies that
  this capacity is violated by at most a factor of $k$, hence $|\IPP'_s| \leq 2k$.

  Detailed routing in the last tile successful routes at least one request from
  $\IPP'_s$ if $\IPP'_s \neq \emptyset$, and the proposition follows.
\end{proof}

\noindent
We now put things together to complete the proof of Theorem~\ref{thm:alg}.
\begin{proof}[proof of Theorem~\ref{thm:alg}]
The proof is as follows.
  \begin{align*}
    |\alg|  &\geq  \frac{1}{2k} \cdot |\IPP'|&\text{(by Prop. ~\ref{prop:last})}\\
     &\geq  \frac{1}{2k} \cdot \frac{1}{2k} \cdot |\IPP|&\text{(by Prop. ~\ref{prop:preemptions})}\\
    &\geq \left(\frac{1}{8\cdot k^{4}\cdot (B+c)}\right) \cdot  |f^*(R\mid \pmax)|&\text{(by Prop. ~\ref{prop:fipp})}\\
    &\geq \left(\frac{1}{8\cdot k^{4}\cdot (B+c)}\right) \cdot  |\opt_f(R\mid \pmax)| &\text{(by Prop. ~\ref{prop:opt})}\\
    &\geq \left(\frac{1}{8\cdot k^{4}\cdot (B+c)}\right)  \cdot \frac 12 \cdot \left(1-\frac 1e \right) \cdot |\opt_f(R)|&\text{(by Lemma. ~\ref{lemma:nB})}\\
    &\geq  \Omega\left(\frac{1}{k^{4}\cdot(B+c)}\right) \cdot
    |\opt|\:.
  \end{align*}
The last line holds because every integral path packing is also a fractional one.
The theorem follows.
\end{proof}

\subsection{Requests With Deadlines}\label{sec:d_i}
In this section we present the modification needed to deal with packet
requests with deadlines.  The change to the algorithm is in the
reduction to online integral path packing (see Section~\ref{sec:reduce}), i.e., we need to change the sink node in the reduction as described below.

\newcommand{\sink}{\textit{sink}}

\paragraph{Adding Sink Nodes for Requests with Deadlines.}
A request to deliver a packet is of the form $r_i=(a_i,b_i,t_i,d_i)$,
where $d_i$ is the deadline.  In terms of a path request in the space-time
graph $G^{st}$, this means that we need to assign a path from
$(a_i,t_i)$ to a vertex $(b_i,t')$, where $t_i \leq t'\leq d_i$.
Thus, the destination is a set of vertices rather than one specific
vertex.
We connect this set of destinations to a new sink. Formally, for every
request $r_i$, introduce a new vertex $\sink_i$ and connect every vertex
in $\{(b_i,t')\}_{t'=t_i}^{d_i}$ to $\sink_i$ with an edge of infinite
capacity.

Now, a packet request $r_i=(a_i,b_i,t_i,d_i)$ is reduced to a path
request in the $\{1,2,\infty\}$-sketch graph from the half-tile
$s_{in}$ (where the tile $s$ contains $(a_i,t_i)$) to $\sink_i$.
A path from $(a_i,t_i)$ to $\sink_i$ contains at most $d_i-t_i+1$ edges. We still bound the path length by $\pmax$, as before, to obtain a load of $O(\log \pmax)$ by $\IPP$.

We claim that a request that is not preempted by detailed routing reaches its destination on time.
To see this fix a packet request $r_i$ that is not preempted by detailed routing, and
let $\hat p_i$ denote its sketch path. Let $s$ denote the tile in which $\hat
p_i$ ends.  We now show that the detailed path $p_i$ ends in a vertex $(b_i,t)$ such
that $t \leq d_i$.  There are $3$ cases (see
Figure~\ref{fig:dltile}): (1)~$p_i$ enters $s$ via a last segment from
the south-west corner of $s$, (2)~$p_i$ enters $s$ via a first segment from the west,
or (3)~$p_i$ enters $s$ via a first segment from the south\footnote{ Note that
  cases (2) and (3) are degenerate cases in the sense that the detailed routing
  consists only of the a first segment and routing in the last tile.}.  In the first two cases, $p_i$ enters $s$ and moves north
until it reaches a copy of $b_i$.  The copy $(b_i,t')$ of $b_i$ that is reached must
satisfy $t' \leq d_i$ if $(b_i,d_i)$ is in the tile.  Indeed, because $s$ is the last
tile of $\hat p_i$, the copy of $b_i$ in the leftmost column of $s$ lies below the
``time-zone'' $\{(x,d_i-x)\}_x$ in the untilted space-time graph. Moreover, the entry point
of $p_i$ to tile $s$ lies below this copy of $b_i$ (if it were above this copy of
$b_i$, then it has already reached $b_i$).  In the third case, $p_i$ enters via the
south side. This means that (before entering $s$) $p_i$ consists only of a first
segment, i.e., starting from its arrival the packet was forwarded and was not
buffered at all.  Since the deadlines are ``feasible'', i.e., the deadline $d_i \geq
t_i + dist(a_i,b_i)$, where $dist(a_i,b_i)$ is the distance between $a_i$ to $b_i$.
The packet keeps moving north and reaches the copy of $b_i$ at time $t_i +
dist(a_i,b_i)$. It follows that the packet reaches its destination on time in this
case as well.  We conclude that requests that are not preempted reach their
destination on time, as required.

\begin{figure}[H]
  \centering
    \includegraphics[width=0.5\textwidth]{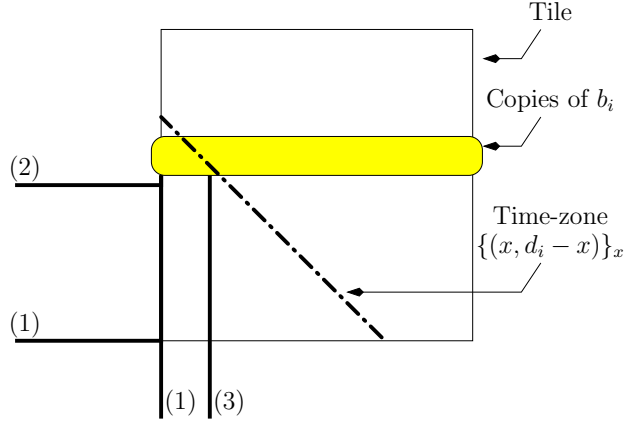}
  \caption{The $3$ possible starting points of detailed routing in a tile.}
    \label{fig:dltile}
\end{figure}

\section{Generalizations}\label{sec:generalizations}
In this section we present a generalization of the algorithm to the $d$-dimensional case as well as extensions to the special cases: bufferless grids and grids with large buffers\textbackslash capacities.

\paragraph{The $d$-Dimensional Case.}\label{sec:algd}
The following modifications are needed to extend the algorithm to $d$-dimensional grids.
\begin{enumerate}[(1)]
\item $k = \lceil \log (1+3\pmax) \rceil$, where in the
    $d$-dimensional case
  $$\pmax\triangleq 2 \cdot \diam(G)\cdot \left(1+n\cdot \left(\frac{B}{c} + d \right)\right) \:.$$
    In the case where $B,c \in [3,\log n]$, it follows that $k = O(\log
    n)$.

\item Apply tiling with side length $k$, e.g., a face of a cube contains $k^d$ vertices.

\item Similarly to the $1$-dimensional case, the sketch graph also has node capacities for nodes that correspond to tiles (i.e.,
not sinks). The capacity of every node that corresponds to a tile is $c(s)=(d+1)\cdot k^{d+1}\cdot (B+d\cdot c)$.
Edges in the sketch path have unit capacities.

\item Similarly to the definition of $\{1,2,\infty\}$-sketch graph, we define the
  $\{1,d+1,\infty\}$-sketch graph by assigning a capacity of $d+1$ (instead of $2$)
  to the interior edges.

\item Detailed routing of internal segments is generalized as follows. Each node has
  $d+1$ incoming edges and $d+1$ outgoing edges. Fix a node $v$. Let
  $in_1,\ldots, in_{d+1}$ denote edges that enter $v$.  Similarly, let
  $out_1,\ldots,out_{d+1}$ denote edges that exit $v$.  Detailed routing in $v$
  proceeds as follows:
For every $j\in[1,d+1]$, let $\ell_j$ denote the exit side of request
    $in_j.r$ in the tile $s$ that contains $v$.
  \begin{enumerate}
  \item (Precedence to straight paths.) If $\ell_j=j$, then $out_j.r=in_j.r$.
  \item (Try next crossing.) Else, if the exit side of $in_{\ell_j}.r$ is not $j$ or null, then
    $out_j.r=in_j.r$.
  \item Else, if $in_{\ell_j}.r=j$ or ($in_{\ell_j}.r=null$ and $j$ is the smallest index $j'$ for which $in_{j'}.r=\ell_j$), then a knock-knee takes place: $out_{\ell_j}.r=in_j.r$ and $out_{j}.r=in_{\ell_j}.r$.
  \item (Try next crossing.) Else, $out_j.r=in_j.r$.
  \end{enumerate}

  The key observation for detailed routing in an internal segment is that if a
  request $r_i$ fails to bend at node $v$, then another request proceeds in $v$
  toward its exit side (in the tile that contains $v$). Thus, as a request $r_i$
  continues to try to turn in the next crossing, it crosses a new request that will
  exit the tile successfully. Since the number of requests in $\IPP$ that traverse the same sketch edge is at most $k$, it follows that $r_i$ is bound to find a crossing in which it turns
  toward its exit side.

\end{enumerate}
The following theorem bounds the competitive ratio of the algorithm for general dimensionality $d$. The proof of Theorem~\ref{thm:algd} is outlined in Appendix~\ref{sec:proofdd}.

\begin{theorem}\label{thm:algd}
  The competitive ratio of the algorithm for $d$-dimensional grid
networks is $$O\left(k^{d+3} \cdot(B+d\cdot c) \right)=O\left(
\log^{d+4} n \right)$$ provided that $B,c \in [3,\log n]$.
\end{theorem}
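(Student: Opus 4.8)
The plan is to lift the chain of inequalities that established Theorem~\ref{thm:alg} (the $d=1$ case) to general dimension $d$, tracking carefully how each constant factor in the competitive ratio scales with $d$ and $k$. The overall competitive ratio is a telescoping product of losses incurred at each reduction stage, so I would establish $d$-dimensional analogues of Propositions~\ref{prop:opt}, \ref{prop:scaled}, \ref{prop:fipp}, \ref{prop:preemptions}, and \ref{prop:last}, and then multiply them together exactly as in the proof of Theorem~\ref{thm:alg}.

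First I would verify the capacity-comparison bounds. In $d$ dimensions a tile contains $k^d$ vertices on each face, the node capacity is $c(s)=(d+1)\cdot k^{d+1}\cdot(B+d\cdot c)$, and the interior edges of the $\{1,d+1,\infty\}$-sketch graph carry capacity $d+1$. The analogue of Proposition~\ref{prop:opt} goes through unchanged: coalescing a fractional packing in $G^{st}$ into a flow $g$ on the sketch graph $S$ preserves throughput, and the flow through any tile is at most the sum of the capacities of its internal edges, which is bounded by the node capacity $c(s)$. The downscaling step (Proposition~\ref{prop:scaled}) then loses a factor of roughly $k^{d}\cdot(B+d\cdot c)$ rather than $k^2\cdot(B+c)$, since the node capacity in the detailed sketch graph is scaled down from $c(s)$ to $d+1$. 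Combining this with the $(2,k)$-competitiveness of \IPP\ from Theorem~\ref{thm:IPP} (here $k=\lceil\log(1+3\pmax)\rceil$ still bounds the path length, so the congestion guarantee is $k$) gives the $d$-dimensional analogue of Proposition~\ref{prop:fipp}, losing a factor of $O(k^{d}\cdot(B+d\cdot c))$.

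The two detailed-routing propositions require the most care. For Proposition~\ref{prop:preemptions} I would reuse the interval-packing forest-of-preemptions argument verbatim along each line $L$ of the space-time grid, since special-segment routing is still reduced to interval packing on a line and the \IPP\ load bound of $k$ still caps the number of descendants of any interval at $k-1$; the only change is that a sketch path in $d$ dimensions may now have more than two special segments, but since each path still has exactly one first and one last segment, the bound $|\IPP'|\cdot 2(k-1)\geq|\IPP\setminus\IPP'|$ is unaffected and the factor remains $1/(2k)$. For Proposition~\ref{prop:last} the interior edge of the last tile now has capacity $d+1$, so the \IPP\ load bound gives $|\IPP'_s|\leq(d+1)k$, and detailed routing in the last tile (still a straight path to a copy of the destination) routes at least one request per nonempty tile, yielding $|\alg_s|\geq\frac{1}{(d+1)k}|\IPP'_s|$.

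The main obstacle I expect is the correctness of detailed routing in internal segments, which underlies the claim that Proposition~\ref{prop:preemptions} captures \emph{all} preemptions (i.e., that part~(II) never drops a packet). In $d$ dimensions the simple ``precedence to straight traffic / simultaneous knock-knee bend'' rule is replaced by the four-case crossing rule, and I must show the key invariant: whenever a request $r_i$ fails to turn toward its exit side at a node $v$, some other request advances toward its own exit side, so that along an internal segment $r_i$ encounters a strictly increasing sequence of distinct requests. Because at most $k$ requests in \IPP\ share any sketch edge, $r_i$ is guaranteed to find a crossing at which it turns within $k$ attempts, so internal routing never fails. Once this invariant is established, assembling the final bound
\begin{align*}
|\alg|
&\geq \frac{1}{(d+1)k}\cdot\frac{1}{2k}\cdot\frac{1}{2\cdot k^{d}\cdot(B+d\cdot c)}\cdot\frac12\cdot\Bigl(1-\frac1e\Bigr)\cdot|\opt_f(R)|\\
&= \Omega\!\left(\frac{1}{k^{d+2}\cdot(B+d\cdot c)}\right)\cdot|\opt|
\end{align*}
is routine, and substituting $k=O(\log n)$ and $B,c\leq\log n$ gives the stated $O(k^{d+3}\cdot(B+d\cdot c))=O(\log^{d+4}n)$ bound.
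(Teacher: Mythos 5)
Your proposal follows the same skeleton as the paper's proof (which proceeds via $d$-dimensional analogues of Propositions~\ref{prop:opt}--\ref{prop:last} and multiplies the losses), but it contains a genuine quantitative error in the downscaling step. You correctly record the node capacity $c(s)=(d+1)\cdot k^{d+1}\cdot(B+d\cdot c)$, yet then assert that downscaling the interior edge from $c(s)$ to $d+1$ loses only ``roughly $k^{d}\cdot(B+d\cdot c)$''. The ratio is $c(s)/(d+1)=k^{d+1}\cdot(B+d\cdot c)$, one factor of $k$ larger than you claim. The conceptual source of the slip is the dimension of a tile: a tile is a cube in the \emph{space-time} graph, which is $(d+1)$-dimensional, so it contains $k^{d+1}$ vertices, each contributing $B+d\cdot c$ units of outgoing capacity. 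Moreover this loss is not an artifact of a loose bound: if $\Theta(k^{d+1}(B+d\cdot c))$ near requests are concentrated in a single tile, the flow in $S$ routes them all through the interior edge ($s_{in}\to s_{out}\to$ sink), while in the $\{1,d+1,\infty\}$-sketch graph at most $d+1$ units survive. This is exactly the content of the paper's Proposition~\ref{prop:scaledd}, whose factor is $\frac{1}{d+1}\cdot k^{d+1}\cdot(B+d\cdot c)$.

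The error propagates to your final display, which yields a success fraction of $\Omega\bigl(1/(k^{d+2}(B+d\cdot c))\bigr)$, i.e., a competitive ratio of $O(k^{d+2}(B+d\cdot c))=O(\log^{d+3}n)$ --- strictly stronger than the theorem, and then silently reconciled with the stated $O(k^{d+3}(B+d\cdot c))$ in your last sentence. A sanity check at $d=1$ exposes the inconsistency: your formula gives $O(k^{3}(B+c))=O(\log^{4}n)$, contradicting Theorem~\ref{thm:alg}'s $O(\log^{5}n)$, which is precisely the chain of inequalities you claim to be lifting. With the corrected factor $k^{d+1}(B+d\cdot c)$, the product of your remaining (correct) losses --- the constant losses for bounded path lengths and for \IPP, the $2k$ preemption loss via the interval-packing forest (note that by the paper's definition a path has at most two special segments, first and last, in any dimension, so your worry there is moot), and the $(d+1)\cdot k$ last-tile loss --- gives exactly $O(k^{d+3}(B+d\cdot c))$, matching the paper's Propositions~\ref{prop:preemptionsd} and~\ref{prop:lastd} and completing the proof as the paper does.
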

\label{sec:extend}
\paragraph{Bufferless Grids.}
For the case $B=0$ and $c\geq 3$ (no upper bound on $c$), we obtain
the following result. The proof of the following theorem is sketched in Appendix~\ref{sec:proofs}.
\begin{theorem}\label{thm:bufferless}
  There exists an online deterministic preemptive algorithm for packet
  routing in bufferless $d$-dimensional grids with a competitive ratio of $O(\log ^{d+2}
  n)$.
\end{theorem}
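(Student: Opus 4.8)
The plan is to exploit the special structure that bufferless grids acquire after the space-time transformation. First I would observe that when $B=0$ there are no buffer edges in $E_1$, so the only edges of $G^{st}$ are the forwarding edges of $E_0$. Under the untilting map $q(x_1,\dots,x_d,t)=(x_1,\dots,x_d,t-\sum_{j} x_j)$ every forwarding edge keeps the quantity $\tau\eqdf t-\sum_{j} x_j$ constant, since it raises $t$ and exactly one $x_j$ by one. Hence the untilted space-time graph decomposes into independent \emph{slices}, one per value of $\tau$, each of which is an isomorphic copy of the $d$-dimensional grid $G$ with edge capacities $c$. A bufferless packet cannot wait, so a request $(a_i,b_i,t_i)$ traverses a monotone path of length exactly $\dist(a_i,b_i)\le\diam(G)$ and is confined to the single slice $\tau=t_i-\sum_{j}a_i(j)$, whose copies of $b_i$ lie in the same slice. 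Because no two slices share an edge, and because $\opt$ is itself bufferless and therefore also respects the slice partition, both $\alg$ and $\opt$ decompose additively over slices. It therefore suffices to design, per slice, an online $O(\log^{d+2}n)$-competitive algorithm for packing monotone paths in the $d$-dimensional grid $G$ with uniform edge capacity $c\ge 3$, and to run it independently on each slice.

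Second, within a slice I would reuse the machinery of Sections~\ref{sec:reduce}--\ref{sec:detailed}, but now applied to the \emph{$d$-dimensional} grid $G$ itself rather than to a $(d+1)$-dimensional space-time graph, and with no buffer term. Since slice paths have length at most $\diam(G)$, I set $\pmax\eqdf\diam(G)$ and $k\eqdf\lceil\log(1+3\pmax)\rceil=O(\log n)$; bounding path lengths by $\pmax$ now incurs \emph{no} loss, which replaces the appeal to Lemma~\ref{lemma:nB}. I then tile $G$ into $d$-dimensional cubes of side $k$, build the sketch graph with node capacities reflecting the flow that can cross a tile, downscale to an interior-edge/sketch-edge capacity assignment, and run \IPP\ followed by detailed routing of special segments, internal segments, and the last tile, exactly as before. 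For $d=1$ the slice is a line and the entire procedure collapses to online interval packing, i.e.\ nearest-to-go, which is optimal; this matches the claim that the line case is degenerate.

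Third, I would track the competitive losses so that the capacity $c$ is \emph{absorbed}. Following the chain in the proof of Theorem~\ref{thm:alg}, the tiling/downscaling step loses a factor $O(k^d)$, because the cube now has $k^d$ rather than $k^{d+1}$ vertices and the interior-edge capacities are chosen proportional to $c$, so that the ratio node-capacity/interior-capacity is $O(k^d)$ independent of $c$. Detailed routing loses $O(k)$ for special segments (the argument of Proposition~\ref{prop:preemptions}) and $O(k)$ for the last tile (the argument of Proposition~\ref{prop:last}), and the reservation of three tracks is exactly what forces $c\ge 3$. Algorithm \IPP\ contributes only the constant factor of Theorem~\ref{thm:IPP}. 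Chaining these yields $|\alg|\ge\Omega\!\left(1/k^{d+2}\right)\cdot|\opt|$ per slice, hence overall, i.e.\ a competitive ratio $O(k^{d+2})=O(\log^{d+2}n)$.

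The main obstacle is the last point: keeping the ratio \emph{independent of $c$} although $c$ is unbounded. Two places need care. When $c\ge k$ one downscales the real capacities to $c/k\ge 1$ and lets \IPP\ produce a feasible packing directly, giving $O(\log n)$ with no dimension factor; when $3\le c<k$ one must instead verify that the sketch-edge capacities can be set to $1$ while the downscaling factor stays $O(k^d)$, which uses $c<k$. Moreover, in both detailed-routing parts up to $\Theta(ck)$ paths may contend inside a single tile, yet an $\Omega(1/k)$ fraction must still survive; this holds because each real edge, respectively each destination column, has capacity $c$ and can carry $\Theta(c)$ of the contending paths, so the two factors of $c$ cancel. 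Establishing this cancellation cleanly, so that neither the downscaling nor the detailed-routing loss ever multiplies by $c$, is the crux of the argument.
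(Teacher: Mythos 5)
Your proposal follows essentially the same route as the paper's proof: with $B=0$ the untilted space-time graph splits into disjoint copies of the $d$-dimensional grid (so paths have length at most $\diam(G)$, Lemma~\ref{lemma:nB} and the sink nodes become unnecessary), the deterministic machinery is run independently on each slice, and the capacity-assignment loss drops from $k^{d+1}\cdot(B+d\cdot c)$ to $k^{d}$ because interior capacities are taken proportional to $c$, which combined with the two $O(k)$ detailed-routing losses gives $O(k^{d+2})=O(\log^{d+2} n)$. Your added care in keeping the ratio independent of $c$ (the case split around $c\geq k$ and the cancellation of $c$ between contention and per-edge capacity in detailed routing) addresses precisely the details that the paper's terse proof glosses over, and your resolution of them is sound.
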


In the one dimensional case without buffers, the optimality of online interval
packing implies that the nearest-to-go policy~\cite{AKOR} is optimal.
\begin{proposition}
Nearest-to-go is an optimal policy for packet routing in a line when $B=0$.
\end{proposition}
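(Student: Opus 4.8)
The plan is to exploit the correspondence between bufferless routing on a line and the online interval-packing problem already set up in Section~\ref{sec:intervalp}. The first step is to observe that when $B=0$ a packet cannot wait: once injected at $(a_i,t_i)$ it must advance one node per time step, traversing the space-time edge $(a_i+j,t_i+j)\to(a_i+j+1,t_i+j+1)$ for each $j$, until it either reaches $b_i$ at time $t_i+(b_i-a_i)$ or is dropped. Thus each injected packet has a single deterministic trajectory, and it is delivered exactly when that whole trajectory is realized without violating edge capacity.

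Next I would apply the untilting map $q(x,t)=(x,t-x)$. Under $q$, the trajectory of request $r_i$ becomes a horizontal segment at height $y_i\triangleq t_i-a_i$ spanning the first-coordinate range $[a_i,b_i]$. Two requests can share a space-time edge only if they lie on the same \emph{diagonal}, i.e.\ $t_i-a_i=t_j-a_j$; hence the routing problem decomposes into independent subproblems, one per value of $t-a$. Within a single diagonal the deliverable requests are precisely a family of intervals whose multiplicity over every edge is at most $c$; for $c=1$ this is exactly a set of pairwise non-overlapping intervals, so maximizing throughput on a diagonal is the maximum-independent-set problem in interval graphs, and the overall offline optimum is the sum over diagonals (the capacitated case $c>1$ is the analogous interval-packing-with-capacity generalization).

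The crux is to identify nearest-to-go with the optimal online interval rule of Section~\ref{sec:intervalp}. On a fixed diagonal, consider two requests with $a_i<a_j$ (so $t_j=y+a_j>y+a_i=t_i$). At time $t_j$ request $i$ is at node $a_i+(t_j-t_i)=a_j$, which is exactly where request $j$ is injected; therefore conflicting requests on the same diagonal meet at a single node and time, namely node $a_j$ at time $t_j$. There the two packets contend for the one outgoing edge, and nearest-to-go forwards the packet with the nearer destination, i.e.\ the smaller $b$ (the smaller right endpoint), dropping the other. This is precisely the rule of the online interval-packing algorithm: the newly arriving interval (larger left endpoint) preempts the incumbent iff its right endpoint is no larger, and is rejected otherwise. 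Consequently nearest-to-go simulates that algorithm on each diagonal.

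Finally I would invoke the optimality of the online interval-packing algorithm (based on~\cite{GLL}), which maintains a maximum independent set over every prefix of the arriving intervals. Since nearest-to-go reproduces this algorithm independently on each diagonal, and the diagonals are mutually non-interfering, nearest-to-go attains the offline optimum on every diagonal and hence overall; thus $|\alg(\sigma)|=|\opt(\sigma)|$, i.e.\ it is $1$-competitive. The main obstacle is the third step: verifying rigorously that the physical, time-driven forwarding dynamics of nearest-to-go reproduce exactly the preemption/rejection decisions of the interval algorithm — in particular that conflicting same-diagonal packets meet at the unique node $a_j$ at time $t_j$ and that ``forward the nearer destination'' coincides with ``keep the smaller right endpoint.'' Once this identification is established, optimality transfers directly from the interval-packing result.
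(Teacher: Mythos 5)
Your proposal is correct and takes essentially the same route as the paper: the paper's (one-sentence) justification is precisely that bufferless routing on the line reduces to online interval packing, whose GLL-based online algorithm---which the paper already identifies as a nearest-to-go policy in Section~\ref{sec:intervalp}---is optimal. Your diagonal decomposition of the untilted space-time graph and the verification that conflicting same-diagonal packets meet at a single node, where ``forward the nearer destination'' coincides with ``keep the smaller right endpoint,'' simply fills in the details the paper leaves implicit.
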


\paragraph{Large Buffers \& Large Link Capacities.}\label{sec:largeBc}
In this section we consider the case that the size of the buffers and the capacities
of the links are at least logarithmic.

Redefine the parameter $\nu$, by $$\nu \triangleq
n^{O(1)}.$$  This of course influences $\pmax$ and $k$
because $\pmax\triangleq \pmax\geq
  (\nu+2)\cdot \diam(G)$ and $k\eqdf \lceil \log (1+3\pmax) \rceil$.
However, in this setting $\pmax$ is polynomial in $n$ and $k=\Theta(\log n)$.

The following theorem shows that it is easy to achieve a logarithmic competitive
ratio if $B/c=n^{O(1)}$ and $B,c\geq k$.
\begin{theorem}\label{thm:largeBc}
  There exists an online deterministic algorithm for packet routing in
  $d$-dimensional grids with a competitive ratio of $O(\log n)$ if $B/c=n^{O(1)}$,
  and $B,c\geq k$. In this algorithm, packets are either rejected or routed but not
  preempted.
\end{theorem}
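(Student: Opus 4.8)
The plan is to bypass tiling and detailed routing altogether. When $B,c \geq k$ the native capacities of the space-time graph $G^{st}$ are already large enough to run the online integral path packing algorithm $\IPP$ (Theorem~\ref{thm:IPP}) directly on $G^{st}$, so none of the machinery used for small capacities is needed. First I would fix the parameters. The assumption $B/c = n^{O(1)}$ guarantees that $\nu = 1/\alpha = 2n\,(d + B/c) = n^{O(1)}$, hence $\pmax = (\nu+2)\cdot\diam(G)$ is polynomial in $n$ and $k = \lceil \log(1+3\pmax)\rceil = \Theta(\log n)$. I would then apply the space-time transformation with sink nodes encoding deadlines (exactly as in Section~\ref{sec:d_i}), reducing each request $r_i$ to a bounded-length path request from $(a_i,t_i)$ to the sink of $b_i$ in $G^{st}$. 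Since $G^{st}$ is acyclic, the lightest legal-path oracle required by Theorem~\ref{thm:IPP} is realizable by dynamic programming over path lengths, so the theorem applies.

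The crux is to turn the $k$-overloaded packing produced by $\IPP$ into a genuinely conflict-free routing, so that no preemption is ever required. To this end I would run $\IPP$ not on the original capacities but on the downscaled capacities $c'(e) \eqdf \lfloor c(e)/k \rfloor$. Because $B,c \geq k$, every edge satisfies $c'(e) \geq 1$, so the hypothesis $\min_e c'(e) \geq 1$ of Theorem~\ref{thm:IPP} holds. The theorem then produces a packing whose load with respect to $c'$ is at most $k$; consequently the number of paths through any edge $e$ is at most $k\cdot c'(e) = k\cdot\lfloor c(e)/k\rfloor \leq c(e)$. Thus the output is a $1$-feasible integral packing for the original capacities, and a $1$-feasible integral path packing in $G^{st}$ is by construction a valid packet routing with no edge conflicts. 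This is exactly why every request is either rejected by $\IPP$ or routed without ever being preempted.

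It then remains to chain the throughput bounds. Let $\opt_f^{c'}(R\mid\pmax)$ denote the optimal fractional packing with respect to $c'$ and path lengths bounded by $\pmax$. Theorem~\ref{thm:IPP} gives that the routed throughput is at least $\tfrac12\cdot|\opt_f^{c'}(R\mid\pmax)|$. Since $c'(e)=\lfloor c(e)/k\rfloor \geq c(e)/(2k)$ whenever $c(e)\geq k$, scaling any packing feasible for $c$ by the factor $1/(2k)$ yields one feasible for $c'$, so $|\opt_f^{c'}(R\mid\pmax)| \geq \tfrac{1}{2k}\cdot|\opt_f(R\mid\pmax)|$. Lemma~\ref{lemma:nB} then gives $|\opt_f(R\mid\pmax)| \geq \tfrac12(1-\tfrac1e)\cdot|\opt_f(R)|$, and $|\opt_f(R)| \geq |\opt|$ since every integral packing is fractional. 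Composing these inequalities yields routed throughput $\geq \Omega(1/k)\cdot|\opt| = \Omega(1/\log n)\cdot|\opt|$, an $O(\log n)$ competitive ratio independent of the dimension $d$.

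The main obstacle is precisely the feasibility conversion of the second paragraph: $\IPP$ only guarantees a $k$-overloaded packing, which in general is not realizable as a physical routing. The idea that rescues the non-preemptive guarantee is downscaling by exactly the factor $k$, which converts the $k$-fold overload into exact feasibility, and this conversion is available only because $B,c \geq k$. Everything else is routine factor-$O(k)$ accounting.
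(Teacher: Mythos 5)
Your proposal is correct and follows essentially the same route as the paper's own proof: the paper likewise scales the capacities to $B' \gets \lfloor B/k \rfloor$ and $c' \gets \lfloor c/k \rfloor$, runs \IPP\ directly on $G^{st}$, and observes that the $(2,k)$-competitive packing with respect to the scaled capacities is a $1$-feasible (hence preemption-free) routing with respect to $B$ and $c$, at a cost of $O(k)=O(\log n)$ in the competitive ratio. Your write-up merely makes explicit the throughput accounting (via Lemma~\ref{lemma:nB} and the $\lfloor c(e)/k\rfloor \geq c(e)/(2k)$ comparison) that the paper leaves implicit.
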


\begin{proof}
  Scale $B$ and $c$ by setting $B' \gets \lfloor{\frac Bk}\rfloor$ and $c' \gets
  \lfloor{\frac ck}\rfloor $.  Run the \IPP\ algorithm over the space-time graph
  $G^{st}$ with the scaled capacities $B'$ and $c'$ to decide which requests are
  rejected and which are routed. We claim that the routes computed by the \IPP\
  algorithm are a valid routing. Indeed, \IPP\ is $(2,k)$-competitive with respect to
  $B'$ and $c'$. Hence, the same packing of paths is $(O(k),1)$-competitive
  with respect to $B$ and $c$.
  The theorem follows since $k=O(\log n)$.
\end{proof}

\section{A Randomized Algorithm for the One Dimensional Case}
\label{sec:randalg} In this section we design and analyze a
randomized algorithm for routing packets in uni-directional
line networks. Our randomized algorithm achieves a
competitive ratio of $O(\log n)$.

The randomized algorithm applies only to the setting in which requests are {without} deadlines (i.e., $d_i = \infty$),
hence each packet is specified by a $3$-tuple $r_i=(a_i,b_i,t_i)$.

The randomized algorithm deals with all values of buffer sizes and communication link
capacities in the range $[1,O(\log n)]$. We do not require that $B,c\geq 3$ as in the
deterministic algorithm.

In particular, it holds also for unit buffers.  In
Sec.~\ref{sec:preprocess}-\ref{sec:together} we deal with
the case that both $B$ and $c$ are in $[1,\log n]$. We
consider this case to be the most interesting one. In
Sec.~\ref{sec:largeB} we deal with the case of $\log n\leq
B/c\leq n^{O(1)}$.  In Sec.~\ref{sec:smallBlargec} we deal
with the case of $B \in [1,\log n]$ and $c \in [\log n,
\infty)$.

\renewcommand{\arraystretch}{2}
\begin{table}[H]
\begin{centering}
\begin{tabular}{|c|c|c|}
\hline
$B$ & $c$ & Sections\tabularnewline
\hline
\hline
$[1,\log n]$ & $[1,\log n]$ & \ref{sec:preprocess}-\ref{sec:together} \tabularnewline
\hline
$[\log n, \infty)$ & $\left[\lceil \frac{B}{n^{O(1)}}\rceil, \frac{B}{\log n}\right]$ & \ref{sec:largeB}\tabularnewline
\hline
$[1,\log n]$ & $[\log n, \infty)$ & \ref{sec:smallBlargec}\tabularnewline
\hline
\end{tabular}
\par\end{centering}
\caption{Values of $B$ and $c$ in which our algorithm achieves logarithmic competitive ratio. In particular, it holds also for unit buffers, i.e., $B=1$. We consider the first case to be the most interesting one.}
\label{table:discussion}
\end{table}
\renewcommand{\arraystretch}{1}
\subsection{Outline of Modifications}\label{sec:comparison}
Our goal is to reduce the  $O(\log^5 n)$ competitive ratio of the deterministic algorithm (see
Theorem~\ref{thm:alg}) to a logarithmic competitive ratio with
the help of randomization.  In this section we outline the techniques that are
employed to achieve this goal.

In the randomized algorithm, the online integral packing algorithm is
applied to the sketch graph (without downscaling of capacities). To
simplify the discussion assume that $B=c=1$.  Since the load on every
edge in the sketch graph is at most $k$, and $k$ also equals the length of
the tile side, this implies that $O(k^2)$ paths traverse each tile side.

The ratio between the area and the perimeter of a tile is $\Theta(k)$.
As the number of requests that start in a tile is proportional to the
area of a tile, and the number of requests that can enter or exit a tile
is proportional to the perimeter of a tile, we need to avoid losing a
factor of $\Theta(k)$ in the competitive ratio. We do this by
\emph{randomly sparsifying} the requests. The goal of this
sparsification is to leave a $\Theta(1/k)$ fraction of the requests so
that a constant fraction of the remaining requests can be routed out
of their starting tile.

To facilitate detailed routing, we consider three (non-disjoint) areas within each
tile: (1)~a part in which new requests may start, (2)~a part dedicated to routing,
and (3)~a part in which requests reach their destination.  The tiles are randomly
shifted so that a constant fraction of the requests ``agree'' with the designated
parts in the tiles.

Detailed routing of requests not rejected by the \IPP\
algorithm or by random sparsification is simpler and always
succeeds.

\subsection{Preliminaries}\label{sect:prelimline}

\paragraph{Tiling.}
The untilted space-time graph $G^{st}$ is partitioned into rectangular
tiles.  We denote length of each tile by $\hl$ and the height by $\vl$
(we also require that $\hl$ and $\vl$ are even).  Note that tiles may
not be squares as in the deterministic algorithm.  Dummy nodes are
added to the space-time graph $G^{st}$ so that all the tiles are
complete.

\paragraph{Random Shifting.}
The tiling is specified by two additional parameters $\phi_{\hl}\in [0,(\hl-1)]$ and $\phi_{\vl}\in [0,(\vl-1)]$, called the \emph{phase shifts}.  The phase shifts determine the position of the ``first'' rectangle; namely, the node $(\phi_{\hl},\phi_{\vl})$ is the bottom left corner of the first rectangle.

\medskip\noindent
Recall that the sketch graph has a node for every tile in the space-time graph (see
Section~\ref{sect:sketchgraph}).
Each horizontal edge has a capacity of $\vl\cdot B$, and
each vertical edge has a capacity of $\hl \cdot c$,

\paragraph{Near and Far Requests.}
A request $r_i=(a_i,b_i,t_i)$ is classified as a \emph{near request} if the tile that
contains $(a_i,t_i)$ also contains a copy of $b_i$ (namely, the tile contains a
vertex $(b_i,t')$ for some $t'$). A request that is not a near request is classified
as a \emph{far request}. We denote the set of near and far requests by \near\ and
\far, respectively.

A routing of a request $r_i\in \far$ cannot be confined to a single tile. A
routing of a request $r_i \in \near$ may be within a tile or may span more than one
tile (our algorithm attempts to route near requests only within a single tile).

\paragraph{SW-Far requests.}
We partition each tile of the untilted space-time graph into four ``quadrants'' as depicted
Fig.~\ref{fig:quad}.
\begin{figure}[h]
  \centering
    \includegraphics[width=0.3\textwidth]{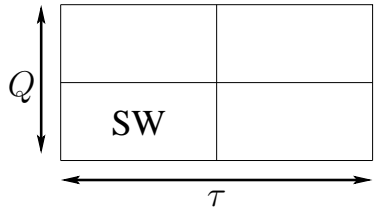}
  \caption{The south-west (SW) quadrant of a tile.}
\label{fig:quad}
\end{figure}

The tiling and random shifting defines the following random subset of the requests. Let
$R^+\subseteq R$ denote the subset of requests whose source vertex is in SW-quadrant of a tile.
The subset $\far^+$ is defined by
\begin{align*}
  \far^+ &\triangleq R^+ \cap \far.
\end{align*}

\paragraph{Online Integral Packing of Paths of Far Requests.}
The \IPP\ algorithm is applied only to $\far^+$ requests
over the sketch graph $S$ (see Line~\ref{line:IPP} in
Algorithm~\ref{alg:algRand}).

\paragraph{Multiple Simultaneous Requests from The Same Node.}
If multiple requests arrive simultaneously to the same
node, then even the optimal routing can serve at most $c+B$
packets among these packets. Since this limitation is
imposed on the optimal solution, the path packing algorithm
can abide this limitation as well without decreasing its
competitiveness. The online algorithm chooses $c+B$ packets
whose destination is closest to the source node, as
formalized in the following proposition.

\begin{proposition}\label{prop:filter}
    W.l.o.g. each node injects at most the closest $c+B$ requests at each time step.
\end{proposition}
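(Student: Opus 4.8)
The plan is to prove the proposition in two stages, and then package them into the stated reduction. First I would establish an absolute upper bound: \emph{any} routing, including an optimal one, delivers at most $c+B$ of the requests that share a common source node and injection time. Second I would give an exchange argument showing that, among such simultaneous requests, one may always take the $c+B$ delivered ones to be those with closest destinations. Together these show that forcing the online algorithm to reject all but the closest $c+B$ simultaneous requests at each node cannot reduce its competitive ratio.

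For the first stage I would fix a node $v$ and a time step $t$ and let $S$ be the set of requests $r_i$ with $a_i=v$ and $t_i=t$ that require at least one hop (requests with $b_i=a_i$ are delivered for free at injection and are irrelevant to the analysis). At time $t$ every packet of $S$ sits at $v$, and in a uni-directional line a packet advances by at most one edge per step; hence at time $t+1$ each surviving packet of $S$ is either stored in the buffer of $v$ or has crossed the outgoing edge $(v,v+1)$. Since the buffer holds at most $B$ packets and the edge carries at most $c$ packets in one step, at most $B$ packets of $S$ remain at $v$ and at most $c$ reach $v+1$, so at most $c+B$ packets of $S$ survive past time $t$. As the surviving subset only shrinks thereafter, at most $c+B$ packets of $S$ are ever delivered; in particular $|\opt\cap S|\le c+B$.

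The second stage is an exchange argument exploiting the monotonicity of paths in a uni-directional line. Suppose an optimal routing delivers a request $r_i\in S$ but leaves undelivered some $r_j\in S$ with $b_j\le b_i$. Because $a_j=a_i=v$ and $t_j=t_i=t$, the source of $r_j$ in $G^{st}$ is the same vertex $(v,t)$ as that of $r_i$, and the (monotone, forward) detailed path of $r_i$ from $(v,t)$ to a copy of $b_i$ necessarily passes through a copy of $b_j$. I would therefore reroute $r_j$ along the prefix of $r_i$'s path up to that copy of $b_j$ and discard $r_i$ entirely. This swap keeps the throughput fixed, respects all edge and buffer capacities at every time step (the new path is a subset of $r_i$'s discarded path, so no edge's load increases), and does not increase the total delivered $\dist(v,\cdot)$ over $S$. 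Iterating until no such violating pair remains yields an optimal routing whose delivered requests in $S$ are $c+B$ requests of smallest destination. The main obstacle I anticipate is termination in the presence of ties $b_j=b_i$, where the delivered distance does not strictly decrease: to handle this I would fix an arbitrary total order on requests and use the lexicographic measure $(\text{total delivered distance},\ \text{ordered index multiset})$, which strictly decreases under each swap and is bounded below, guaranteeing finiteness.

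Finally I would conclude the ``without loss of generality'' as follows. Let $\alg'$ denote the modified instance in which every node, at each time step, rejects all but the closest $c+B$ of its simultaneously arriving requests (breaking ties by the fixed order). By the two stages above there is an optimal routing of the original instance that delivers only requests surviving this restriction, so the optimal throughput is unchanged; hence any $\rho$-competitive algorithm on the restricted instance is $\rho$-competitive on the original one. This justifies restricting attention to at most the closest $c+B$ injected requests per node and time step, as claimed.
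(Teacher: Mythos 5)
Your proof is correct and takes essentially the same approach as the paper, whose entire justification is the informal observation that the buffer (at most $B$ packets) and the single outgoing edge (at most $c$ packets per step) bottleneck any routing, including the optimal one, to at most $c+B$ survivors among simultaneously injected same-source requests. Your exchange argument---rerouting an undelivered closer request along a prefix of the space-time path of a delivered farther one---rigorously supplies the ``closest'' part of the claim, which the paper asserts without proof.
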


\subsection{Randomized Algorithm: Preprocessing}\label{sec:preprocess}

\paragraph{Tiling parameters.}
The tile side lengths are set so that the trivial greedy routing algorithm is $O(\log
n)$-competitive for requests classified as near. Each tile has length $\hl$ and
height $\vl$.
Recall that $B,c\leq \log n$.
\begin{defn}\label{def:xy}
    \begin{enumerate}[(i)]
    \item
    If $B\cdot c < \log n$, then
    $\hl=2\lceil (\log n)/c \rceil$ and $\vl=2\cdot \lceil (\log n)/B\rceil$.
    \item
    If $B\cdot c \geq \log n$, then
     $\hl=2B$ and $\vl=2c$.
    \end{enumerate}
\end{defn}

\begin{proposition}\label{prop:tiling}
The choice of the tiling parameters implies the following:
  \begin{enumerate}
  \item $\hl+\vl = O(\log n)$.
  \item The capacity of each sketch edge is at least $\log n$.
  \item The ratio of maximum capacity to minimum capacity in the sketch graph is bounded by $2$.
 \end{enumerate}
\end{proposition}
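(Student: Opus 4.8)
The plan is to verify each of the three claims directly from Definition~\ref{def:xy}, treating its two cases separately. Throughout I would use the standing assumption $B,c\le\log n$ together with the elementary ceiling bound $x\le\lceil x\rceil\le x+1$. Recall that the relevant (finite-capacity) edges of the sketch graph are the inter-tile edges: a horizontal edge has capacity $\vl\cdot B$ and a vertical edge has capacity $\hl\cdot c$; the edges into sink nodes carry infinite capacity and are therefore excluded from the min/max ratio in item~(3).

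For item~(1), in case~(i) I would use $c,B\ge 1$ to bound $\hl=2\lceil(\log n)/c\rceil\le 2((\log n)/c+1)\le 2\log n+2$ and likewise $\vl\le 2\log n+2$, giving $\hl+\vl=O(\log n)$; in case~(ii) I would simply invoke $B,c\le\log n$ to get $\hl+\vl=2B+2c\le 4\log n$. For item~(2), the lower-bound half of the ceiling inequality does the work: in case~(i) the horizontal capacity is $\vl\cdot B=2B\lceil(\log n)/B\rceil\ge 2B\cdot\frac{\log n}{B}=2\log n$, and symmetrically the vertical capacity $\hl\cdot c\ge 2\log n$; in case~(ii) both capacities equal $2Bc\ge 2\log n$, using $Bc\ge\log n$. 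Hence every inter-tile edge has capacity at least $\log n$ (in fact at least $2\log n$).

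For item~(3) I would bound the ratio $\max/\min$ over the two edge types. Case~(ii) is immediate: the horizontal capacity $\vl\cdot B=2Bc$ equals the vertical capacity $\hl\cdot c=2Bc$, so the ratio is exactly $1$. Case~(i) is the one place where the upper half of the ceiling bound and the hypothesis $\max\{B,c\}\le\log n$ are both genuinely needed: using $\lceil x\rceil\le x+1$ I would place the horizontal capacity in $[2\log n,\,2\log n+2B]$ and the vertical one in $[2\log n,\,2\log n+2c]$ (combining item~(2) with the upper-bound computation). Consequently the ratio is at most
$$\frac{2\log n+2\max\{B,c\}}{2\log n}=1+\frac{\max\{B,c\}}{\log n}\le 2,$$
where the last step uses $\max\{B,c\}\le\log n$.

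I expect no real obstacle in any of the three parts; the only point demanding care is item~(3) in case~(i), where the bound of exactly $2$ is precisely the slack the ceiling leaves together with $B,c\le\log n$ — a larger tile choice, or dropping the assumption $\max\{B,c\}\le\log n$, would break it. As a sanity check I would also confirm that the even-length requirement on $\hl,\vl$ (used elsewhere for the quadrant construction) holds, which it does, since all four definitions have the form $2(\cdots)$.
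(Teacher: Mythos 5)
Your proof is correct and follows essentially the same route as the paper's: a direct case analysis on Definition~\ref{def:xy}, using $\lceil x\rceil \ge x$ to get the capacity lower bound of $2\log n$, and $\lceil x\rceil \le x+1$ together with $B,c\le\log n$ to bound the capacity ratio by $1+\max\{B,c\}/\log n\le 2$ (the paper writes this as $\frac{\vl B}{\hl c}\le\frac{\log n+B}{\log n}$, which is the identical estimate). Your added remarks — excluding the infinite-capacity sink edges from item~(3) and checking evenness of $\hl,\vl$ — are sensible clarifications but not departures from the paper's argument.
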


\begin{proof}
  The first part of the proposition follows from the assumption that $B,c\in[1,\log
  n]$.  The capacity $c(e)$ of a horizontal edge $e$ in the sketch graph is $\vl\cdot
  B$.
If $Bc\geq \log n$, then $c(e)= 2Bc > \log n$ and all the sketch edges have the same capacity.
If $Bc< \log n$, then $c(e)\geq 2\frac{\log n}{B} \cdot B = 2\log n$.
Moreover, the ratio of maximum capacity to minimum capacity is bounded
by $2$. Indeed,
      \begin{eqnarray*}
        \frac {\vl \cdot B}{\hl \cdot c} & \leq & \frac {2 \cdot (1+\log n /B)\cdot B} {2\cdot (\log n/c) \cdot c}\\
        & = & \frac {\log n + B}{\log n} \leq 2\:.
    \end{eqnarray*}
Similarly, the ratio $\frac {\hl c}{\vl B} \leq 2$, and the proposition follows.
\end{proof}

To simplify the presentation, we assume that $\hl c=\vl B$
(we can obtain this by reducing the capacities by a factor
of at most $2$, which affects the competitive ratio only by
a factor of $2$). Let $c^S$ denote the capacity of the
sketch edges to the neighboring tiles.

\begin{proposition} \label{prop:class}
  If the phase shifts $\phi_{\hl}$ and $\phi_{\vl}$ are
  chosen independently and uniformly at random, then $E(|\opt (R^+)|)
  = \frac 14 \cdot |\opt(R)|$.  By a reverse Markov inequality, $$\Pr \left[|\opt
  (R^+)| \geq \frac 18 \cdot |\opt(R)|\right] \geq \frac 17.$$
\end{proposition}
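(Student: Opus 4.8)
The plan is to lower bound $E(|\opt(R^+)|)$ by \emph{restricting} a fixed optimal routing of the whole request set to those requests whose source happens to land in an SW-quadrant, and then to convert the resulting expectation into the stated high-probability bound via a reverse Markov inequality. The randomness enters only through the phase shifts $\phi_{\hl},\phi_{\vl}$, so the strategy is to condition on a routing that is itself independent of these shifts.

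First I would fix an optimal routing for $R$ and let $\opt(R)$ denote the deterministic set of requests it delivers; this set does not depend on $\phi_{\hl},\phi_{\vl}$. For a single request $r_i\in\opt(R)$ with source $(a_i,t_i)$ in the untilted space-time graph, I would compute the probability that this source lies in the SW-quadrant of its tile. Since $\hl$ and $\vl$ are even, the SW-quadrant spans exactly $\hl/2$ columns and $\vl/2$ rows of each tile. As $\phi_{\hl}$ is uniform on $\{0,\dots,\hl-1\}$, the horizontal offset of the source within its tile is uniform, so the source lies in the western half with probability $1/2$; independently, uniformity of $\phi_{\vl}$ places it in the southern half with probability $1/2$. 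Because $\phi_{\hl}$ and $\phi_{\vl}$ are chosen independently, these two events factor and $\Pr[r_i\in R^+]=\tfrac12\cdot\tfrac12=\tfrac14$.

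Next I would observe that the sub-collection of $\opt(R)$ consisting of requests whose source fell in an SW-quadrant is itself a feasible routing (a subset of a feasible packing is feasible) of a subset of $R^+$; hence its throughput is at most $|\opt(R^+)|$. By linearity of expectation over the requests of $\opt(R)$, the expected number of such requests is exactly $\tfrac14\cdot|\opt(R)|$, which gives $E(|\opt(R^+)|)\ge \tfrac14\cdot|\opt(R)|$ (this lower bound is precisely what the remainder of the analysis uses). For the second statement I would apply the reverse Markov inequality, whose only prerequisite beyond the expectation is the deterministic bound $|\opt(R^+)|\le|\opt(R)|$, valid because $R^+\subseteq R$ and any feasible routing of $R^+$ is also feasible for $R$. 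Writing $X\eqdf|\opt(R^+)|$ and $M\eqdf|\opt(R)|$, we have $0\le X\le M$ and $E(X)\ge M/4$, so the bound $\Pr[X\ge a]\ge (E(X)-a)/(M-a)$ with $a=M/8$ yields $\Pr[X\ge\tfrac18 M]\ge (M/4-M/8)/(M-M/8)=\tfrac17$, as claimed.

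The main obstacle is the exact evaluation of the $1/4$ probability: it hinges on $\hl,\vl$ being even (so each half-tile contains an equal number of coordinate values), on the shifts being uniform over a full period of length $\hl$ (resp.\ $\vl$), and on the independence of the two shifts so that the horizontal and vertical placement events multiply. Everything else, namely the feasibility of the restricted routing, the use of linearity of expectation, and the deterministic inequality $|\opt(R^+)|\le|\opt(R)|$ needed to license the reverse Markov step, is routine.
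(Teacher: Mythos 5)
Your proof is correct and follows essentially the same route as the paper's: the per-request probability of $\frac14$ coming from the independent uniform phase shifts, linearity of expectation, and the reverse Markov inequality instantiated with exactly the same parameters (threshold $\frac18\cdot|\opt(R)|$, upper bound $|\opt(R)|$). If anything, your write-up is slightly more careful than the paper's, which asserts the equality $E(|\opt(R^+)|) = \frac14\cdot|\opt(R)|$; linearity of expectation applied to a fixed optimal routing restricted to $R^+$ only yields the lower bound $E(|\opt(R^+)|) \geq \frac14\cdot|\opt(R)|$ (the optimum over $R^+$ can exceed the restriction), and, as you note, that lower bound is all the reverse Markov step and the rest of the analysis require.
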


\begin{proof}
  Since the phase shifts $\phi_{\hl}$ and $\phi_{\vl}$ are independent
  and uniformly distributed, the probability that a request $r_i\in R$
  is also in $R^+$ is $1/4$.  By linearity of expectation, $E(|\opt
  (R^+)|)= \frac 14 \cdot |\opt(R)|$.

  Plugging $X=|\opt (R^+)|$, $d=\frac 18 \cdot |\opt(R)|$ and
  $a=|\opt(R)|$ in Lemma~\ref{lemma:revMarkov} (See
  Appendix~\ref{sec:RevMarkovproof}) yields the second part of the
  proposition, i.e., $\Pr \left[|\opt(R^+)| \geq \frac 18 \cdot
    |\opt(R)|\right] \geq \frac 17$.
\end{proof}

\subsection{Algorithm for Requests in \far$^+$}\label{sec:far}
In this section we present an online algorithm for the
requests in the subset $\far^+$. Similarly to the
deterministic algorithm in Section~\ref{sec:outline}, the
$\far^+$-Algorithm invokes the \route\ algorithm (in
Step~\ref{line:IPP}) and applies detailed routing (in
Step~\ref{line:I}). The additional randomized steps are
employed in Step~\ref{line:toss}, and Step~\ref{line:load}.
Note that randomized algorithm is non-preemptive, that is,
if a packet is not rejected then it is guaranteed to arrive
to its destination.

\subsubsection{Description of The $\far^+$-Algorithm}
\paragraph{Parameters.}
Set the maximal path length in the sketch graph to be $\pmax\eqdf 4n$. We set
the probability $\lambda$ of the biased coin in step~\ref{line:toss} of
$\algf$ to be $\lambda=1/(200k)$, where $k=\lceil \log (1+3\pmax) \rceil$.

\begin{algorithm}
    \textbf{Upon arrival} of a packet request $r_i = (a_i,b_i,t_i)$ in $\far^+$
    proceeds as follows (if $r_i$ is rejected in any step, then the algorithm does not continue with the next steps):
\begin{enumerate}
\item
\label{line:IPP}

Reduce the packet requests to an online integral path
packing over the sketch graph with paths of length at
most $\pmax$. Execute the \route\ algorithm with respect
to these path requests. If the path request is rejected
by the \route\ algorithm then \textbf{reject} $r_i$.
Otherwise, let $\hat p_i$ denote the sketch path assigned
to request $r_i$.

\item \label{line:toss} Toss a biased $0$-$1$ coin $X_i$ such that
  $\Pr (X_i=1)=\lambda$. If $X_i=0$, then \textbf{reject} $r_i$.
\item
\label{line:load}\label{item:quarter}
If the addition of $\hat p_i$ causes the load of any sketch edge to be at least
$1/4$, then \textbf{reject} $r_i$.
\item\label{line:I} Apply $I$-routing to $r_i$.  If $I$-routing fails,
  then \textbf{reject} $r_i$. Otherwise, \textbf{inject} $r_i$ with
  the sketch path $\hat p_i$ and alternate between $T$-routing and
  $X$-routing.
\end{enumerate}
\caption{The $\far^+$-Algorithm. The input to the algorithm is a sequence of packet requests in $\far^+$ and it either rejects or injects.}\label{alg:algRand}
\end{algorithm}

The listing of the randomized algorithm appears in
Algorithm~\ref{alg:algRand}.
The input to the algorithm is the sequence of requests in $\far^+$ which is processed as follows:
\begin{inparaenum}[(1)]
\item The \route\ algorithm computes an integral packing of paths over the sketch graph $S$
  under the constraint that the length of a path is at most
  $\pmax$. In Proposition~\ref{lemma:nB}, we show that this
  constraint reduces the optimal fractional throughput by a factor of
  at most two. Algorithm \IPP\ remembers all accepted requests, even
  those that are rejected in subsequent steps. By
  Theorem~\ref{thm:IPP}, the computed paths constitute an
  $(O(1),k)$-competitive packing, for $k=O(\log n)$.
\item The probability $\lambda$ is set to $\frac{1}{\Theta (k)}$.
\item We maintain the invariant that after line~\ref{line:load}, the
  load of every sketch edge is at most $1/4$.
\item $I$-routing deals with routing the request out of the initial
  SW-quadrant and is described in Sec.~\ref{sec:detail}.  The rest of
  the path is computed based on the sketch path $\hat p_i$. This
  computation is performed locally and on-the-fly by alternating between two routing
  algorithms called $T$-routing and $X$-routing (described in
  Section~\ref{sec:detail}).
\end{inparaenum}

\paragraph{Remark.}
One may consider applying random sparsification before the \route\ algorithm is
  invoked. The motivation for such a variation is to avoid congesting the network
  with requests destined to be rejected. Apart from reducing the load of sketch
  edges, random sparsification facilitates successful $I$-routing (see
  Lemma~\ref{lemma:sparse}). This means that sparsification needs to be applied after
  the online path packing algorithm.

\subsubsection{Detailed Routing}\label{sec:detail}
The \IPP\ Algorithm computes a sketch path $\hat p_i$. If we wish to
route the packet, we need to compute a path in $G^{st}$. We refer
to this path as the \emph{detailed path}.
Three routing algorithms are employed for computing different parts the detailed path (see Fig.~\ref{fig:detail}):
\begin{inparaenum}[(1)]
  \item $I$-routing: from $(a_i, t_i')$ to the north or east boundaries of the SW-quadrant.
  \item $T$-routing: deals with routing in the north-west quadrant (NW-quadrant) and  the south-east quadrant (SE-quadrant) of a tile.
  \item $X$-routing: $X$-routing deals with routing in the north-east quadrant (NE-quadrant).
\end{inparaenum}
Let $\algfar\subseteq R^+$ denote the subset of requests that were successfully routed by $I$-routing.
Let $p_i$ denote the detailed path of a request $r_i\in \algfar$.
The packing $\{p_j \mid r_j\in \algfar\}$ satisfies the following
invariants:
    \begin{figure}[h]
      \centering
        \includegraphics[width=0.35\textwidth]{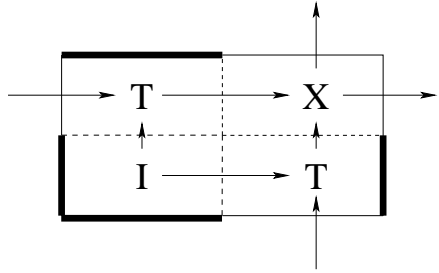}
      \caption{Allowed detailed routes in tile quadrants. Paths may not
        cross the thick lines.}
      \label{fig:detail}
    \end{figure}
\begin{enumerate}
\item The source of $p_j$ is in the SW-quadrant of a rectangle.
\item The prefix of $p_j$ till it exits the SW-quadrant is straight.
\item For every tile, $p_j$ may enter the tile only through the right half of the south side or the upper half of the west side.
\item For every tile, $p_j$ may exit the tile only through the right half of the north side  or the upper half of the east side.
\item Except for the first bend of $p_j$, every bend
    corresponds to a bend in the sketch path $\hat p_j$.
\item At most $c^S/4$ paths are routed out of the
    SW-quadrant.
\item The load of every edge in $G^{st}$ is at most one (i.e., all capacity constraints are satisfied).
\end{enumerate}

\paragraph{$I$-Routing.}
The goal of $I$-routing is simply to exit the SW-quadrant
either from its east side or its north side. $I$-routing
deals with routing paths that start in the SW-quadrant of a
tile till the north or east side of the SW-quadrant.
$I$-routing uses only straight paths.

By Proposition~\ref{prop:filter}, at most $B+c$ requests
are input at each node of $G^{st}$ to Algorithm \route.
These requests are ordered arbitrarily.  We therefore
consider each SW-quadrant as a three dimensional cube of
dimensions $\frac {\vl}{2} \times \frac {\hl}{2} \times
(B+c)$ where each node in the quadrant has $B+c$ copies.
The $i$th request that arrives to node $(v,t)$ is input to
node $(v,t,i)$ in the cube.  We refer to each copy of the
quadrant in the cube as a \emph{plane}.  Namely, the $i$th
plane is the set of nodes $(v,t,i)$ in the cube.
$I$-routing deals with each $\frac {\vl}{2} \times \frac
{\hl}{2}$ plane separately,

$I$-routing tries to route horizontally the first $B$ requests that
start at a node. Similarly, $I$-routing tries to vertically route the
 requests that arrive after that.  By trying to route a request,
we mean that if the corresponding row or column in the plane is free,
then the request is routed (and that row or column in the plane is
marked as occupied); otherwise the request is rejected.

Finally, we  limit the number of paths that emanate from each
side of the SW-quadrant by $c^S/4$, where $c^S$ denotes the capacity
of the sketch edges to the neighboring tiles. Thus after $c^S/4$ requests have been successfully
$I$-routed out of the SW-quadrant, all subsequent requests from this
SW-quadrants fail.

 Note that $I$-routing is computed
before the packet is injected and does not preempt packets (after they are injected) since
precedence is given to existing paths.

\paragraph{$T$-routing.}
The NW-quadrant and the SE-quadrant have a ``blocked'' side
that is depicted by a thick link in
Figure~\ref{fig:detail}. Paths may not traverse the blocked
side. $T$-routing deals with routing in these two
quadrants. Paths may enter these quadrants from two sides
but must exit through a third side (unless they reach a
copy of their destination).  We show that $T$-routing is
always successful (in fact, $T$-routing is similar to
detailed routing in internal segments described in
Sec.~\ref{sec:detailed internal}).

Consider a SE-quadrant: each path enters through the south or west sides of the
quadrant, and should be routed to the north side of the quadrant. The detailed paths
of south-to-north paths are simply vertical paths without bends (such paths are given
precedence). The detailed paths of west-to-north paths are obtained by traveling
eastward until a bend can be made, namely, the vertical path to the north side is not
saturated.  Since both path types contain at most $c^S/4$ paths, and since $c^S/2$
paths can cross the north side of the quadrant, $T$-routing never fails.

\paragraph{$X$-routing.}
$X$-routing deals with routing in the NE-quadrant. Note
that a path may enter the  NE-quadrant from its west side
or from its south side. Moreover, a path may exit the
NE-quadrant from its east or north side. We show that
$X$-routing is always successful.

$X$-routing is implemented by super-positioning two
instances of $T$-routing (see Fig.~\ref{fig:X-routing}). We
partition the traffic in a NE-quadrant to two parts based
on the side from which the path exits the quadrant. As in
$T$-routing, precedence is given to straight traffic.  A
bend takes place whenever a free path is available.
Clearly, a straight path is successfully routed. Paths that
needs to turn are blocked by at most $c^S/4$ paths from the
other part. There are at most $c^S/4$ paths that need to
turn, and the capacity of the side of the quadrant is
$c^S/2$, hence $X$-routing is always successful. (Note that
knock-knee bends are not required, although they could be
incorporated.)

    \begin{figure}[h]
      \centering
        \includegraphics[width=0.36\textwidth]{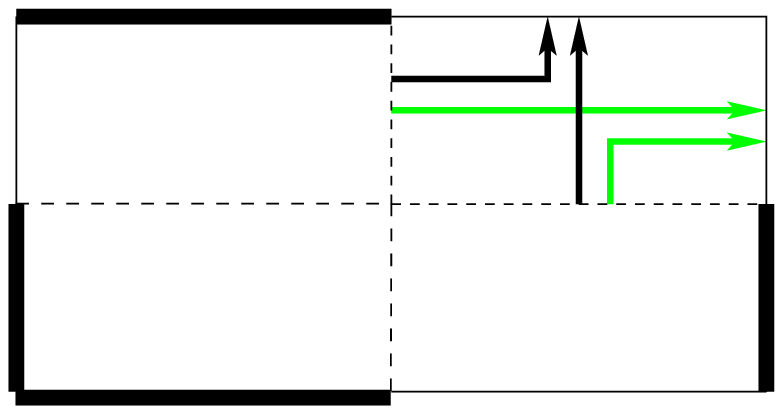}
      \caption{$X$-routing is implemented by super-positioning two instances of $T$-routing depicted by black and grey arrow.}
      \label{fig:X-routing}
    \end{figure}

\paragraph{Last Tile.}
%
Detailed routing in the last tile employs greedy shortest path routing.  If a packet
enters the last tile from the south side, then it simply continues north until it
reaches its destination. Note that no such packet may enter the last tile from the
west side. Indeed, if a sketch path enters $s$ from the west side and $s$ is the last
tile in the sketch path, then the neighboring tile from the west contains a copy of the
destination, and hence $s$ is not the last tile in the sketch path.

\subsubsection{Analysis}

\paragraph{Notation.}
We define the following chain subsets of requests
$$\algfar \subseteq \Rinj \subseteq \RIPPt \subseteq
\RIPP\subseteq \far^+\:,$$ as follows.  $\RIPP$ is the
subset of requests accepted by the $\IPP$ algorithm in
Line~\ref{line:IPP}. $\RIPPt\subseteq \RIPP$ is the subset
of requests for which the biased coin flip $X_i$ equals $1$
in Line~\ref{line:toss}.  $\Rinj \subseteq \RIPPt$ is the
subset of requests whose addition did not cause a sketch
edge to be at least $1/4$ loaded in Line~\ref{line:load}.
$\algfar\subseteq \Rinj$ is the subset of requests for
which detailed routing is successful in Line~\ref{line:I}
(recall, that only $I$-routing may fail).

Let $\opt_f(R)$ (respectively, $\opt(R)$) denote an optimal fractional
(respectively, integral) packing of paths in $G^{st}$ with respect to the
requests $R$. An optimal packing of paths in the space-time graph $G^{st}$ in
which the length of the paths in the packing is bounded by $\pmax^{st}$ is
denoted by $\opt_f(R\mid \pmax^{st})$.

The following theorem states that the invocation of the \IPP\ algorithm
assigns routes to a constant fraction of an optimal solution.
\begin{theorem}\label{thm:IPP rand}
  \begin{align*}
   | \IPP(\far^+ \mid \pmax) |& \geq \frac 14 \cdot| \opt(\far^+)|.
  \end{align*}
\end{theorem}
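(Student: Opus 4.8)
The plan is to prove the bound as a short chain of inequalities in which the only losses are two factors of $2$, exactly producing the constant $\frac14$. It is worth noting first that, despite its placement, this theorem is purely a statement about the competitiveness of $\IPP$ on the input set $\far^+$: the SW-quadrant restriction and the later randomized steps (Lines~\ref{line:toss}--\ref{line:I}) play no role here, so $\far^+$ may be treated as an arbitrary request set and we are really asserting that $\IPP$, run over the (non-downscaled) sketch graph with path-length bound $\pmax=4n$, recovers a quarter of an optimal \emph{integral} packing of the same requests in $G^{st}$.

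First I would pass from the integral to the fractional optimum in $G^{st}$ via the trivial bound $|\opt(\far^+)| \le |\opt_f(\far^+)|$. Second, I would invoke the path-length lemma (Lemma~\ref{lemma:nB}, in its line-specific form) to bound the length of the fractional paths in $G^{st}$ at the cost of a single factor of two, giving $|\opt_f(\far^+\mid \pmaxst)| \ge \frac12\,|\opt_f(\far^+)|$. Third comes the projection of this length-bounded packing onto the sketch graph $S$, exactly as in Proposition~\ref{prop:opt}: for each sketch edge, sum the $G^{st}$-flow over the coalesced edges. In the randomized algorithm the sketch capacities are precisely the coalesced capacities (horizontal edges $\vl\cdot B$, vertical edges $\hl\cdot c$, and infinite-capacity edges into the sinks) and there are no downscaled node capacities to worry about, so linearity immediately yields a feasible sketch flow of the \emph{same} throughput; moreover coalescing can only shorten paths, so the projected paths respect the sketch bound $\pmax=4n$. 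Writing $f^*$ for the optimal fractional flow in $S$ with paths of length at most $\pmax$, this step is lossless: $|f^*|\ge |\opt_f(\far^+\mid \pmaxst)|$. Finally, Theorem~\ref{thm:IPP}, whose guarantee is that $\IPP$ attains at least half the maximum fractional throughput, gives $|\RIPP|\ge \frac12\,|f^*|$. Composing the four steps,
\begin{align*}
|\RIPP| \;\ge\; \frac12\,|f^*| \;\ge\; \frac12\,|\opt_f(\far^+\mid \pmaxst)| \;\ge\; \frac12\cdot\frac12\,|\opt_f(\far^+)| \;\ge\; \frac14\,|\opt(\far^+)|.
\end{align*}

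The step I expect to require the most care is the path-length reconciliation between $G^{st}$ and $S$. As literally stated, Lemma~\ref{lemma:nB} only yields the weaker factor $\frac12(1-\tfrac1e)$, so to obtain a clean factor of two I would need the sharper line-specific estimate, and I would have to verify that the $G^{st}$-bound $\pmaxst$ is chosen large enough for the factor-two loss while still projecting to sketch paths of length at most $4n$ (coalescing shortens lengths by roughly a tile dimension, so this is a routine arithmetic check against $\hl$ and $\vl$). The projection step, by contrast, is essentially Proposition~\ref{prop:opt} but simpler, and the only subtlety there is that far requests terminate at the added sinks; since those sink edges have infinite capacity, no capacity constraint can be violated, and feasibility is immediate.
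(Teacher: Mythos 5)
Your proposal is correct and is essentially the paper's own proof: the same chain of four steps (integral $\le$ fractional in $G^{st}$; the line-specific length lemma, Lemma~\ref{lemma:nBline}, for one factor of $2$; lossless projection of the length-bounded fractional packing onto the sketch graph; and the $(2,k)$-competitiveness of \route\ for the second factor of $2$), composing to exactly $\frac14$. One caution: your in-line justification that ``coalescing can only shorten paths'' does not by itself give sketch paths of length at most $4n$, since $\pmaxst=2(n-1)(1+B/c)$ can be of order $n\log n$; the argument that is actually needed is the one you defer to at the end --- bounding vertical sketch edges by $n$ and counting horizontal sketch edges row by row against the tile width $\hl\ge 2B$ --- which is precisely the content of the paper's Proposition~\ref{prop:4n}.
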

\begin{proof}
  The proof of the theorem is divided into three parts (summarized by
  Equations~\eqref{eq:part 1}-\eqref{eq:part 3}).  The first part states that a fractional
  packing is not smaller than an integral one.
  \begin{align}\label{eq:part 1}
    | \opt_f(\far^+)|& \geq | \opt(\far^+)|.
  \end{align}

  The second part shows that bounding the path lengths reduces the throughput only by
  a factor of $2$.
  \begin{lemma}[\protect{\cite[Claim 4.5]{AZ}}]\label{lemma:nBline} Let
    $\pmaxst\triangleq 2\cdot (n-1)\cdot (1+B/c)$.  Then,
    \begin{align}
      |\opt_f(\far^+ \mid \pmaxst)| \geq \frac 12 \cdot |\opt_f(\far^+)|.
      \label{eq:part 2}
    \end{align}\end{lemma}

  The third part shows that paths of length at most $\pmaxst$ in the space-time graph
  are mapped to paths of length at most $4n$ in the sketch graph.
  \begin{proposition}\label{prop:4n}
    Every path $p$ in $G^{st}$ of length at most $\pmaxst$ is mapped to a path $\hat
    p$ in the sketch graph $S$ of length at most $4n$. Hence, by the
    $(2,k)$-competitiveness of the \IPP\ Algorithm, it follows that:
    \begin{align}
      \label{eq:part 3}
      | \IPP (\far^+ \mid \pmax)| \geq \frac 12 \cdot |\opt_f(\far^+ \mid \pmaxst)|.
    \end{align}
  \end{proposition}

\begin{proof}
  Let $p$ denote a path of length at most $\pmaxst\eqdf 2\cdot (n-1)\cdot (1+B/c)$ in
  $G^{st}$. We partition the edges of $\hat p$ into horizontal edges and vertical
  edges in $\hat p$.  The number of vertical edges in $p$ is bounded is $n$ and the
  same holds also for $\hat p$.

  We now prove that the number of horizontal edges in $\hat p$ is at most $3n$. For
  every row $i$ in $G^{st}$, let $n_i$ denote the number of horizontal edges of $p$
  in the $i$th row.
  Similarly, for every row $i$ in the sketch graph, let $\hat{n}_i$
  denote the length of the intersection of $\hat p$ with the $i$th row of the sketch graph.  Let $[\alpha_{i}, \beta_{i}]$ denote the interval of rows of
  $G^{st}$ that are mapped to the $i$th row of the sketch graph (note that
  $\beta_i-\alpha_i$ is simply the height of a tile).

  By Def.~\ref{def:xy}, the length of every tile is at least $2B$. Indeed, if $B\cdot
  c> \log n$, then the length $\hl$ equals $2B$.  If $B\cdot c \leq \log n$, then the length
  $\hl \geq 2\log n/c \geq 2B$.  It follows that
  \begin{align*}
    \hat n_i &\leq \left\lceil{\frac{\sum_{j= \alpha_{ i}}^{\beta_{i}}
          n_j}{2B}}\right\rceil \leq \frac{1}{2B} \cdot \sum_{j= \alpha_{
        i}}^{\beta_{i}} n_j +1.
  \end{align*}
  Hence, $\sum_{i} \hat n_{i} \leq \frac{\pmax^{st}}{2B} + n \leq
  3n$. We conclude that the length of the path $\hat p$ is at most $4n$, as required.
\end{proof}
Equations~\eqref{eq:part 1}-~\eqref{eq:part 3} completes the proof of Theorem~\ref{thm:IPP rand}
\end{proof}

The following proposition shows that, in expectation  over the biased coins
tosses in Line~\ref{line:toss}, at most a quarter of the sketch paths are
rejected due to ``$\frac 14$-loaded''  edges in line~\ref{line:load} of the
$\far^+$-Algorithm.
\begin{lemma}\label{lemma:Rinj}
  If $n > 16$, then
$$E(|\Rinj|) \geq \frac 34 \cdot E(|\RIPPt|)\:.$$
\end{lemma}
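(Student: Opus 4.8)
The plan is to prove the equivalent statement $E(|\RIPPt \setminus \Rinj|) \le \tfrac14\, E(|\RIPPt|)$, which gives the lemma since $\Rinj \subseteq \RIPPt$ and hence $E(|\Rinj|) = E(|\RIPPt|) - E(|\RIPPt\setminus\Rinj|)$. Throughout I would condition on the outcome of the phase shifts, so that $\far^+$ and the set $\RIPP$ output by \IPP\ in Line~\ref{line:IPP} are fixed; the only remaining randomness is the family of independent coins $\{X_i\}$ tossed in Line~\ref{line:toss}, with $\Pr(X_i = 1) = \lambda = 1/(200k)$. By linearity of expectation, $E(|\RIPPt\setminus\Rinj|) = \sum_{r_i \in \RIPP} \Pr\bigl(X_i = 1 \text{ and } r_i \text{ is rejected in Line~\ref{line:load}}\bigr)$, and since $E(|\RIPPt|) = \lambda\,|\RIPP|$, it suffices to establish the per-request bound $\Pr(r_i \text{ rejected in Line~\ref{line:load}} \mid X_i = 1) \le \tfrac14$ for every $r_i \in \RIPP$.

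Fix such an $r_i$ with sketch path $\hat p_i$. First I would characterize the rejection event: conditioned on $X_i = 1$, request $r_i$ is rejected in Line~\ref{line:load} precisely when adding $\hat p_i$ raises the load of some sketch edge $e \in \hat p_i$ to at least $\tfrac14$; equivalently, some $e \in \hat p_i$ already carries at least $\tfrac{c(e)}4 - 1$ paths of those $\Rinj$-requests processed before $r_i$. This event depends only on the coins $X_j$ with $j \neq i$, so it is independent of $X_i$ and the conditioning may be dropped. For a fixed edge $e$, the number of injected paths through $e$ is at most the number of $\RIPPt$-requests (other than $r_i$) whose sketch path uses $e$, namely $\sum_{j \neq i,\ e \in \hat p_j} X_j$, a sum of independent Bernoulli$(\lambda)$ variables. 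By the $(2,k)$-competitiveness of \IPP\ (Theorem~\ref{thm:IPP}) the load of $\RIPP$ on $e$ is at most $k$, so this sum has at most $k\cdot c(e)$ terms and mean at most $\lambda k\, c(e) = c(e)/200$.

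The heart of the proof is a large-deviation estimate. Since the mean $c(e)/200$ sits far below the rejection threshold $\tfrac{c(e)}4 - 1$, a Chernoff bound of the form $\Pr[S \ge t] \le (e\mu/t)^t$ shows that $q_e \triangleq \Pr\bigl[\sum_{j\neq i,\,e\in\hat p_j} X_j \ge \tfrac{c(e)}4 - 1\bigr]$ decays like a constant (well below $1$, thanks to the factor $\lambda = 1/(200k)$) raised to the power $\Theta(c(e))$. Using that every sketch edge has logarithmic capacity, $c(e) = \Omega(\log n)$ (Proposition~\ref{prop:tiling}), this makes $q_e$ an inverse polynomial in $n$ with exponent strictly larger than $1$; with the chosen constant $\lambda=1/(200k)$ this is $q_e \le \tfrac{1}{16\,n}$ once $n > 16$. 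A union bound over the at most $\pmax = 4n$ edges of $\hat p_i$ then gives $\Pr(r_i \text{ rejected} \mid X_i = 1) \le \pmax \cdot \max_e q_e \le 4n \cdot \tfrac{1}{16n} = \tfrac14$. Summing over $r_i \in \RIPP$ yields $E(|\RIPPt\setminus\Rinj|) \le \tfrac14\,\lambda\,|\RIPP| = \tfrac14\,E(|\RIPPt|)$, and the lemma follows.

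I expect the main obstacle to be exactly this constant-chasing in the last step. The sketch paths have length $\Theta(n)$ while the sketch-edge capacities are only $\Theta(\log n)$, so a crude per-edge Markov inequality (which would give only $q_e = O(1/\log n)$) is hopelessly too weak to survive a union bound over $\Theta(n)$ edges; one is forced to use the sharper Chernoff tail and to exploit the smallness of the sparsification probability $\lambda$ to push each $q_e$ below $1/(16n)$. Verifying that the specific value $\lambda = 1/(200k)$ together with $c(e) = \Omega(\log n)$ makes the exponent exceed $1$ for all $n > 16$ is the one genuinely delicate point; everything else (linearity, the independence of the rejection event from $X_i$, and the reduction to a Bernoulli tail) is routine.
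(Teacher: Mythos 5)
Your proposal matches the paper's proof essentially step for step: the paper likewise fixes a sketch edge and bounds the probability that the sparsified load exceeds $c(e)/4$ via a multiplicative Chernoff bound (using $|P(e)|\le k\cdot c(e)$ from $(2,k)$-competitiveness, mean at most $c(e)/200$, and $c(e)\ge 2\log n$ from Proposition~\ref{prop:tiling}), obtains the per-edge bound $1/(16n)$, applies a union bound over the at most $\pmax=4n$ edges of the sketch path, and concludes by linearity of expectation. The one caveat is the Chernoff form: the weaker tail $(e\mu/t)^{t}$ you cite is slightly too lossy to deliver $1/(16n)$ for all $n>16$ (it needs $n$ somewhat larger), whereas the paper uses the sharper bound $\bigl(e^{\delta}/(1+\delta)^{1+\delta}\bigr)^{\mu}$, which is exactly what closes the constants at the stated threshold.
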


\begin{proof}
The idea it to show that, after random sparsification, the load of every sketch edge is at
most $1/4$ with high probability. This implies that few requests are rejected
as a result of causing the load of an edge to be greater than $1/4$.

Let $\hat p_i$ denote the sketch path of $r_i$.  Given a sketch edge $e$, let $P(e)
\triangleq \{ \hat p_i : r_i\in \RIPP, e \in \hat p_i\}$ denote the set of sketch
paths that traverse $e$.  Similarly, let $P^{\lambda}(e) \triangleq \{ \hat p_i : r_i\in
\RIPPt, e \in \hat p_i\}$ denote the set of paths that traverse $e$ after random
sparsification.  We first claim that, for a constant $\gamma>200$, for $n > 24$, and
for every sketch edge $e$, 
\begin{align}
  \label{eq:1}
  \Pr \left(|P^{\lambda}(e)| > \frac{c(e)}{4} \right) < \frac 1 {16n}.
\end{align}
We now prove Equation~\eqref{eq:1}.  Since $\RIPP$ is $(2,k)$-competitive, it follows
that $$|P(e)| \leq k\cdot c(e)\:.$$ The tossing of the biased coins with $\lambda =
1/(\gamma k)$ with $\gamma=200$, implies that
\begin{align*}
  E (|P^{\lambda}(e)|) = \lambda \cdot |P(e)| \leq \lambda k \cdot c(e) = \frac {1}{\gamma}\cdot c(e).
\end{align*}

\noindent
The following sequence of equations is explained below.
    \begin{eqnarray*}
        \Pr \left(|P^{\lambda}(e)| > \frac{c(e)}{4} \right)
        & = & \Pr \left(|P^{\lambda}(e)| \geq (1+\delta)\frac{c(e)}{\gamma} \right) \\
        & < &    \left( \frac {e^{\delta}}{(1+\delta)^{(1+\delta)}} \right)^{\frac{c(e)}{\gamma}}\\
        & \leq & \left( \frac {e^{\delta/\gamma}}{(1+\delta)^{(1+\delta)/\gamma}} \right)^{2\cdot\log n} \\
        & = & \left(\frac {e^{\frac14 - \frac{1}{\gamma}}}{(\frac {\gamma}{4})^{\frac 14}} \right)^{2\cdot\log n},
    \end{eqnarray*}
    The first line holds if $\delta$ satisfies $\frac
    {1+\delta}{\gamma} = \frac {1}{4}$.
    The second line is due to a multiplicative Chernoff bound~\cite{MU}.
    The third line is implied by Proposition~\ref{prop:tiling} since
    $c(e) \geq 2\cdot\log n$.
The last line follows by the definition of $\delta$.

 Since $\gamma=200$ and $n > 16$, then $\left(\frac {e^{\frac14 - \frac{1}{\gamma}}}{(\frac
      {\gamma}{4})^{\frac 14}}\right)^2 < 2^{-2} < 2^{-\frac {\log 16n}{\log
        n}}$ and therefore, $\Pr \left(|P^{\lambda}(e)| > \frac{c(e)}{4} \right) < 2^{-\log 16n}$ and Equation~\eqref{eq:1}
    holds.

    Since $\pmax=4n$, the length of each sketch path is at most $4n$.  By
    Equation~\eqref{eq:1} and by applying a union bound it follows that
\begin{align*}
  \Pr \left(r_{i} \not\in \Rinj ~|~ r_i \in \RIPPt\right) & \leq \Pr \left( \exists~ e \in \hat p_i : P^{\lambda}(e) > \frac 14 \cdot c (e)\right) \\
  &\leq 4n \cdot \frac 1{16n} = \frac 14\:.
\end{align*}
The lemma follows by linearity of expectation.
\end{proof}

The following theorem states that, in expectation, a $1/\Theta(k)$ fraction
of the requests that are accepted by the \route\ algorithm are successfully routed.
\begin{theorem}\label{theorem:Rinjt}
  $E(|\algfar|)\geq \frac \lambda 4 \cdot |\RIPP|$.
\end{theorem}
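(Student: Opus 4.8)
The plan is to chain the three filtering stages $\RIPP \to \RIPPt \to \Rinj \to \algfar$ and track the surviving throughput at each one, multiplying the three fractions to reach $\lambda/4$. First I would dispose of the coin-toss stage in Line~\ref{line:toss}. Because the \IPP\ algorithm records every request it accepts independently of the subsequent biased coins, the set $\RIPP$ is fixed once the phase shifts are fixed, and each $r_i\in\RIPP$ enters $\RIPPt$ independently with probability $\lambda$. Linearity of expectation then gives the exact identity
$$E(|\RIPPt|) = \lambda\cdot|\RIPP|.$$

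Next I would invoke Lemma~\ref{lemma:Rinj}, which already accounts for the loss caused by the ``$\tfrac14$-loaded edge'' rejection in Line~\ref{line:load}, namely $E(|\Rinj|)\geq \tfrac34\,E(|\RIPPt|)$. Combined with the identity above this yields $E(|\Rinj|)\geq \tfrac34\,\lambda\,|\RIPP|$. The point of passing through $\Rinj$ is that its defining property is precisely a sparsity guarantee: the load of every sketch edge is at most $1/4$, so at most $c^S/4$ of the surviving sketch paths leave any tile through any single side.

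The remaining, and hardest, step is the deterministic statement that $I$-routing loses only a constant fraction, i.e. $|\algfar|\geq \tfrac13|\Rinj|$ for every realization of the coins. This is a geometric claim internal to a single SW-quadrant, and the key counting input is the $1/4$ load bound. Since every far$^+$ request originating in the SW-quadrant of a tile $s$ must leave $s$ either to the north or to the east, and each of those two sketch edges carries at most $c^S/4$ surviving paths, at most $c^S/2$ requests of $\Rinj$ start in that SW-quadrant. $I$-routing routes each such request along a straight line in its own plane of the $\tfrac{\vl}{2}\times\tfrac{\hl}{2}\times(B+c)$ cube, capping each of the two sides of the SW-quadrant at $c^S/4$ outgoing paths, for a total outgoing capacity of $c^S/2$. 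I would argue that, since the number of contending requests per quadrant ($\le c^S/2$) already matches this outgoing capacity and the requests are spread thinly across the planes, at most two-thirds of them can be lost to in-plane collisions or to the per-side caps, so at least a third are routed out. This is exactly the kind of sparsification guarantee flagged in the Remark, and it is the main obstacle: one must turn the \emph{global} sketch-edge load bound of $1/4$ into a \emph{local}, plane-by-plane success guarantee, ensuring that the two failure modes of $I$-routing do not compound.

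Finally I would assemble the three factors. Taking expectations in the pointwise bound $|\algfar|\geq \tfrac13|\Rinj|$ and chaining the two earlier estimates gives
$$E(|\algfar|)\geq \tfrac13\,E(|\Rinj|)\geq \tfrac13\cdot\tfrac34\,\lambda\,|\RIPP| = \tfrac{\lambda}{4}\,|\RIPP|,$$
which is the claim. The constant $1/3$ in the $I$-routing step is forced by this arithmetic, so the geometric argument must be tight enough to deliver it.
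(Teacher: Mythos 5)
Your first two stages are fine and match the paper: $E(|\RIPPt|)=\lambda\cdot|\RIPP|$ by linearity, and Lemma~\ref{lemma:Rinj} controls the losses at Line~\ref{line:load}. The gap is your third stage, the pointwise claim $|\algfar|\geq \tfrac13|\Rinj|$. This claim is not only unproven in your sketch --- it is false as a statement that holds ``for every realization of the coins.'' The $1/4$ load bound controls only the \emph{number} of requests of $\Rinj$ that originate in a SW-quadrant (at most $c^S/2$), not their \emph{placement} inside the quadrant. Consider $c^S/2$ requests, each being the first (and only) request arriving at its space-time node, all at the same spatial node $v$ over consecutive time steps. All of them lie in plane $1$ of the cube, in the single row indexed by $v$, and all are horizontal candidates; $I$-routing routes at most one request per row of a plane, so exactly one of them survives and $|\algfar|/|\Rinj| = 2/c^S = O(1/\log n)$, far below $1/3$. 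Nothing in the definition of $\Rinj$ excludes this configuration when the coins happen to come up $1$ on a row-concentrated set, so no counting argument based on quadrant capacities alone can deliver your constant.

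The missing idea is that the constant-fraction success of $I$-routing comes from the \emph{randomness} of the sparsification, not from the reduced cardinality. The paper's Lemma~\ref{lemma:sparse} works row by row: since $\lambda\leq 2/\hl$, each row of a plane has in expectation at most one surviving request, so the probability a row contributes a successfully $I$-routed request is at least $\tfrac{\lambda}{2}$ times the number of pre-sparsification requests in that row; summing gives $E(|I_{\IPP^{\lambda}}|)\geq\tfrac{\lambda}{2}|\RIPP|=\tfrac12 E(|\RIPPt|)$, \emph{bypassing} $\Rinj$ entirely. The second missing piece is how to reconcile this with the Line~\ref{line:load} rejections: the paper does this by subtraction rather than by multiplying per-stage fractions. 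Proposition~\ref{proposition:dom} (a per-row dominance argument) gives the deterministic inequality $|\algfar|\geq |I_{\IPP^{\lambda}}| - |\RIPPt\setminus\Rinj|$ (Lemma~\ref{lemma:combine}), and then
\begin{align*}
E(|\algfar|) \;\geq\; \tfrac12 E(|\RIPPt|) - \tfrac14 E(|\RIPPt|) \;=\; \tfrac14 E(|\RIPPt|) \;=\; \tfrac{\lambda}{4}\cdot|\RIPP|\:.
\end{align*}
So the correct proof couples the sparsification randomness to the geometry of the planes (which your argument never uses), and handles the interaction between the load check and $I$-routing additively; your multiplicative chaining through a pointwise $1/3$ cannot be repaired without these two ingredients.
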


\begin{proof}
  We first prove a Lemma and a Proposition. Lemma~\ref{lemma:sparse} deals with a
  projection of a random sparsification of a $0$-$1$ matrix. This
  lemma helps estimate the number of requests from $\RIPPt$ for which
  $I$-routing is successful in each plane (ignoring the effect of
  line~\ref{line:load} in the algorithm).
  Proposition~\ref{proposition:dom} helps analyze the effect of
  line~\ref{line:load} on the number of requests for which
  $I$-routing is successful.

\paragraph{Definitions.}

Let $I(\cdot)$ be an operator over $0$-$1$ matrices defined as follows. $I(X)$ is all
zeros except for the first nonzero entry in each row of $X$. Namely,
\[
I(X)_{i,j} \eqdf
\begin{cases}
  1 &\text{if $X_{i,j}=1$ and $X_{i,\ell}=0$, for every $\ell<j$}\\
0&\text{otherwise.}
\end{cases}
\]
The motivation for this definition is as follows. Suppose
that the matrix $X$ indicates the existence of packets in a
plane of a SW-quadrant in which packets are routed by
$I$-routing along rows out of the quadrant. The only
packets for which $I$-routing succeeds in this plane are
the packets that correspond to ones in $I(X)$.

Let $L\wedge B$ denote the matrix obtained by the coordinate-wise
conjunction of $L$ and $B$. For a matrix $X$, let $w(X)$ denote the number of
$1$'s in $X$.

In the following lemma we analyze the effect of random sparsification on $I$-routing along the rows of the SW-quadrant.
A similar effect occurs when considering $I$-routing along the columns of the SW-quadrant.
\begin{lemma}\label{lemma:sparse}
  Let $A$ and $Z$ be $0$-$1$ matrices whose dimensions are $\frac{\vl}{2} \times \frac{\hl}{2}$. Assume that the entries of $Z$ are i.i.d. $0$-$1$
  random variables with $E(z_{ij})=\lambda\:$.  Let $\lambda < \frac
  {2}{\hl}$. Then,
  $$E\Big(w\big(I(A \wedge Z)\big)\Big) \geq \frac{\lambda}{2} \cdot w(A)\:.$$
\end{lemma}
\begin{proof}
  Consider each row $A_i$ of $A$ and $Z_i$ of $Z$ separately.  The expectation of the
  $0$-$1$ random variable $w(I(A_i \wedge Z_i))$ equals the probability that it equals $1$.
  Note that
\begin{align*}
  \Pr(w(A_i\wedge Z_i)=0)&=(1-\lambda)^{w(A_i)}\\
&\leq e^{-\lambda \cdot w(A_i)}.
\end{align*}
Since $\lambda \cdot \tau/2 \leq 1$, it follows that $\lambda \cdot w(A_i)\leq 1$,
and hence
\begin{align*}
  \Pr(w(A_i\wedge Z_i)=1)&\geq 1- e^{-\lambda \cdot w(A_i)}\\
&\geq \frac{\lambda}{2} \cdot w(A_i).
\end{align*}
The lemma follows by linearity of expectation.
\end{proof}

We now return to the proof of Theorem~\ref{theorem:Rinjt}.  For every tile
consider its SW-quadrant as a three dimensional cube of dimensions
$\frac {\hl}{2} \times \frac {\vl}{2} \times (B+c)$. Recall that $I$-routing deals with
each $\frac {\hl}{2} \times \frac {\vl}{2}$ plane separately.

The lengths $\hl$ and $\vl$ of each tile are at most $2\log n$.  Recall that
$\lambda=\frac {1}{\gamma \cdot k}$ where $k\geq \log (1+3\cdot 4n)$. Hence,
if $\gamma=200$, then $1/\lambda = \gamma \cdot k \geq \frac{\tau}{2}$.

Assume that we skip Step~\ref{item:quarter} of the algorithm (namely, we do not check
that the load is bounded by $1/4$), and apply directly $I$-routing to the requests in
$\RIPPt$. Let $I_{\IPP^{\lambda}}$ denote the set $\{r_i\in \RIPPt : \text{$I$-routing
  succeeds in routing $r_i$}\}$.  We consider each of the $(B+c)$ planes separately, and by
Lemma~\ref{lemma:sparse} and linearity of expectation, we obtain
\begin{align}\label{eq:A}
    \nonumber
  E(|I_{\IPP^{\lambda}}|) &\geq \frac \lambda 2 \cdot |\RIPP|\\
&= \frac 12 \cdot E(|\RIPPt|)\:.
\end{align}

Furthermore, Lemma~\ref{lemma:Rinj} implies that:
\begin{align}\label{eq:B}
  E(|\RIPPt \setminus \Rinj|) &\leq \frac 14 \cdot E(|\RIPPt|)\:.
\end{align}

Hence,
\begin{align}\label{eq:almostqed}
  E(|I_{\IPP^{\lambda}}|)-E(|\RIPPt \setminus \Rinj|) &\geq \frac 14 \cdot E(|\RIPPt|)\:.
\end{align}

\paragraph{Notations.}
For a $0$-$1$ matrix $L$, let $\bar{L}$ denote negated matrix
$\bar{L}_{i,j}\eqdf1-L_{i,j}$.  For matrices $L$ and $B$, let $L\leq B$
denote $L_{i,j}\leq B_{i,j}$, for every $i$ and $j$.

\begin{proposition}\label{proposition:dom}
If $L\leq B$  then:
  \begin{eqnarray*}
  w(I(L))&\geq w(I(B))-w(B \wedge \bar{L}).
  \end{eqnarray*}
\end{proposition}

\begin{proof}
  It suffices to deal with each row separately. Let $B_i$ denote the $i$th row
  of the matrix $B$. We claim that if $w(I(B_i))=1$, then $w(I(L_i))=1$ or
  $w(B_i\wedge \bar{L}_i)\geq 1$.  Indeed, assume that $w(I(B_i))=1$ and
  $w(I(L_i))=0$. Then, $L_i$ is all zeros. Hence, $B_i \wedge \bar{L}_i = B_i$, and
  the proposition follows.
\end{proof}

\noindent
We now prove the following lemma.
\begin{lemma}\label{lemma:combine}
    For every outcome of the random biased coins:
  $$|\algfar| \geq |I_{\IPP^{\lambda}}| - |\RIPPt \setminus \Rinj|\:.$$
\end{lemma}
\begin{proof}
  Consider a specific tile $s$ and its SW-quadrant.  Fix an $i$-plane used by
  $I$-routing. W.l.o.g this $i$-plane corresponds to a horizontal $I$-routing. Define
  three $0$-$1$ matrices $A,Z$ and $L$ with dimensions $\frac{\vl}{2}\times \frac{\hl}{2}$, as
  follows:
  \begin{enumerate}
  \item Let $A$ be the matrix whose entries indicate the existence of a request $r\in
    \RIPP$ whose source vertex is in the $i$th plane of the SW-quadrant of the tile
    $s$. Namely, $A_{v,t}=1$ iff node $(v,t)$ receives at least $i$ requests in
    $\RIPP$.
  \item Let $Z$ denote a random matrix in which the entries are i.i.d. Bernoulli
    random variables with $Pr(Z_{v,t}=1)=\lambda$. These Bernoulli random variables
    correspond to the outcomes of the biased coin tosses in Step~\ref{line:toss} of
    the algorithm.
  \item Let $L$ be the matrix whose entries indicate the existence of a request $r\in
    \Rinj$ whose source vertex is in the $i$th plane of the SW-quadrant of the
    tile $s$.
  \end{enumerate}

  For a subset $W$ of requests, a tile $s$, and a plane index $i$, let
  $W(s,i)\subseteq W$ denote the subset of requests in $W$ whose source vertex is in
  the $i$th plane of the tile $s$.  Let $\bar L$ denote the negation of $L$.  By
  definition the following identities hold:
  \begin{enumerate}[(i)]
    \item $|\algfar(s,i)| = w(I(L))$,
    \item $|\RIPPt(s,i)|= w(A\wedge Z)$,
    \item $|I_{\IPP^{\lambda}}(s,i)| = w(I(A \wedge Z))$,
    \item $|\RIPPt(s,i) \setminus \Rinj| = w(A \wedge Z \wedge \bar{L})$.
  \end{enumerate}

It suffices to prove that
\begin{align}
  \label{eq:suffice}
  w(I(L))&\geq w(I(A\wedge Z))-w(A\wedge Z \wedge \bar{L}).
\end{align}
Since $L \leq (A \wedge Z)$, Equation~\eqref{eq:suffice} follows from Proposition~\ref{proposition:dom}, and the
lemma follows.
\end{proof}

We now complete the proof of Theorem~\ref{theorem:Rinjt}. By
Lemma~\ref{lemma:combine} and Equation~\eqref{eq:almostqed}, it follows that
$E(|\algfar|)\geq \frac 14 \cdot E(|\RIPPt|)$. Theorem~\ref{theorem:Rinjt}
follows since $E(|\RIPPt|)= \lambda \cdot |\RIPP|$.
\end{proof}

\begin{theorem}\label{thm:far}
  $E(|\algf|) \geq \Omega( \frac{1}{\log n})\cdot |\opt (\far^+)|$.
  \end{theorem}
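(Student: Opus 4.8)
The plan is to chain the two immediately preceding theorems and then substitute the value of $\lambda$; no new machinery is needed. The key is to keep careful track of the two independent sources of randomness. First I would fix the phase shifts $\phi_{\hl}$ and $\phi_{\vl}$, so that the subset $\far^+$ is completely determined and the only remaining randomness is the sequence of biased coin flips in Line~\ref{line:toss}. For this fixed $\far^+$, the set $\RIPP$ accepted by the \route\ algorithm in Line~\ref{line:IPP} is a deterministic quantity, and Theorem~\ref{thm:IPP rand} gives $|\RIPP| \geq \frac14 \cdot |\opt(\far^+)|$.

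Next I would apply Theorem~\ref{theorem:Rinjt}, whose expectation is taken over the biased coins with $\far^+$ (and hence $\RIPP$) held fixed, to get $E(|\algf|) \geq \frac{\lambda}{4} \cdot |\RIPP|$. Combining the two inequalities yields $E(|\algf|) \geq \frac{\lambda}{16} \cdot |\opt(\far^+)|$. Finally I would unwind the parameters: since $\lambda = 1/(200k)$ with $k = \lceil \log(1+3\pmax) \rceil$ and $\pmax = 4n$, we have $k = \Theta(\log n)$, so $\lambda = \Omega(1/\log n)$. Substituting gives $E(|\algf|) \geq \frac{1}{3200\,k} \cdot |\opt(\far^+)| = \Omega\!\left(\frac{1}{\log n}\right) \cdot |\opt(\far^+)|$, which is exactly the claim. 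As this bound holds for every fixed realization of the phase shifts, it continues to hold after averaging over them.

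The computation itself is essentially a single line; all the substance lives in Theorems~\ref{thm:IPP rand} and~\ref{theorem:Rinjt}. Consequently the only point that demands care — and the one I would flag as the main pitfall rather than a genuine obstacle — is the bookkeeping of the randomness. One must check that conditioning on the phase shifts leaves Theorem~\ref{theorem:Rinjt} applicable verbatim; this holds precisely because that theorem's proof treats $|\RIPP|$ as a fixed number (it even writes $E(|\RIPPt|) = \lambda \cdot |\RIPP|$) and averages only over the sparsification coins. Thus the two bounds can be composed without any loss, and the final $\Omega(1/\log n)$ factor comes entirely from the sparsification probability $\lambda$.
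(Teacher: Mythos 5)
Your proposal is correct and follows essentially the same route as the paper: the paper's proof likewise chains Theorem~\ref{theorem:Rinjt} ($E(|\algf|) \geq \Omega(\lambda)\cdot|\RIPP|$) with Theorem~\ref{thm:IPP rand} ($|\RIPP| \geq \Omega(|\opt(\far^+)|)$) and concludes from $\lambda = 1/\Theta(k) = 1/\Theta(\log n)$. Your additional bookkeeping --- fixing the phase shifts so that $\far^+$ and $\RIPP$ are deterministic before averaging over the sparsification coins --- is a correct and slightly more careful account of the randomness than the paper makes explicit.
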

  \begin{proof}
 By Theorem~\ref{theorem:Rinjt}, it follows that
    $E(|\algf|) \geq \Omega(\lambda) \cdot |\RIPP|$. By
    Theorem~\ref{thm:IPP rand}, $|\RIPP|\geq
    \Omega(|\opt(\far^+)|)$. The theorem follows since $\lambda=1/\Theta(k)=1/\Theta(\log
    n)$.
  \end{proof}
\subsection{Algorithm for Requests in \near}\label{sec:near}
In this section we present an online algorithm for the requests in the
subset $\near$.  The algorithm is a straightforward greedy vertical
routing algorithm.
Given a request $r_i\in \near$, the algorithm
attempts to routs the request vertically.

We emphasize that an optimal routing is not restricted to routing a
request $r_i\in \near$ within the tile.

\paragraph{Notations.}
Let $\algn$ denote the set of requests successfully routed by the \near-Algorithm
with respect to the requests in $\near$.  Let $\algn (s)$ denote the set of requests
routed by the \near-Algorithm within the tile $s$. Let $\near_s$ denote the set of
requests in $\near$ whose starting node is in the tile $s$. We abuse notation and
refer to the set of routed packets in an optimal routing with respect to $\near_s$\
also by $|\opt(\near_s)|$.

\begin{theorem} \label{thm:near}
For every tile $s$,
  $|\algn(s)| \geq \Omega(\frac {1}{\log n}) \cdot |\opt({\near}_s)|$.
\end{theorem}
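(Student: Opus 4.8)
The plan is to compare the confined, single-bend greedy routing performed by $\algn$ against an unrestricted optimum, and to show that the only loss incurred is the $\Theta(\log n)$ ``area-to-perimeter'' ratio of a tile. First I would fix the geometry: after untilting, forwarding a packet is a vertical (upward) step and buffering is a horizontal (rightward) step, so $\algn$ buffers a near request rightward along its source row until it finds a column whose vertical segment $[a_i,b_i)$ is unsaturated, then forwards it straight up, and rejects the request if no such column exists inside the tile $s$. Since every copy of a near destination lies at a node-level strictly below the top of $s$ and all paths are monotone up-and-right, a packet can leave $s$ only through its right boundary, which is crossed exclusively by buffering edges. I would record the two facts used throughout: the tile satisfies $\hl,\vl=O(\log n)$ and $\hl\cdot c=\vl\cdot B=c^S$ (Def.~\ref{def:xy}), and, by Proposition~\ref{prop:filter}, at most $B+c$ requests of $\near_s$ are injected at any single space-time node.

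Next I would bound $|\opt(\near_s)|$ by splitting the routed requests into those whose optimal path stays inside $s$ (internal) and those that exit $s$ (external). Every external path crosses the right boundary of $s$ along some buffering edge, and the total buffering capacity across that boundary is $\vl\cdot B=c^S$; hence there are at most $c^S$ external requests. For the internal requests I would argue column-by-column: restricted to one column, forwarding disjoint node-intervals $[a_i,b_i)$ with capacity $c$ is exactly interval packing on a line, for which the greedy closest-endpoint rule (analyzed in Section~\ref{sec:intervalp}) is optimal. The remaining discrepancy is that the optimum may reach a free column by buffering across several tiles while $\algn$ may only buffer within $s$; this is precisely the point at which the unrestricted freedom of the optimum must be paid for.

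Finally I would close the gap with a charging argument. Whenever $\algn$ rejects a request that the optimum routes, it is because in every column of $s$ reachable from the source either the vertical segment $[a_i,b_i)$ was saturated by already-accepted paths or the buffering edges leading to that column were saturated; in either case I charge the rejected request to the accepted paths occupying its footprint. Since an accepted path has vertical extent at most $\vl$ and horizontal extent at most $\hl$, and each edge it uses has capacity at most $\max\{B,c\}$, each accepted path receives at most $O(\hl+\vl)=O(\log n)$ charges; together with the $c^S$ external requests (charged to the $\Theta(c^S)$ paths that $\algn$ itself forces across the same boundary) this yields $|\opt(\near_s)|\le O(\log n)\cdot|\algn(s)|$. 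The main obstacle I anticipate is exactly this last bookkeeping: making the footprint and charging precise so that the unrestricted, multiply-bent optimal paths — especially those that buffer far to the right of $s$ — are each charged without over-counting, and verifying that the per-path charge is genuinely $O(\log n)$ and not $O(\hl\cdot\vl)$.
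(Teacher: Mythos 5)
Your high-level intuition matches the paper's: a rejection by $\algn$ can only be caused by saturated edges inside the tile, so rejected-but-optimally-routed requests can be charged to accepted paths, and the tile dimensions $\hl,\vl=O(\log n)$ should cap the charge per accepted path. But the execution has genuine gaps, and the one you flag at the end is exactly the one that sinks the argument. Your charging scheme assigns each rejected request to the accepted paths on its \emph{blocking} (saturated) edges, and then claims each accepted path receives $O(\hl+\vl)$ charges because ``each edge it uses has capacity at most $\max\{B,c\}$''. Capacity bounds limit how many \emph{capacity-respecting} paths cross an edge; they do not limit how many \emph{rejected} requests can name the same edge as their blocker, because rejected requests consume no capacity (e.g., all requests injected at nodes below one saturated vertical edge of a column can be blocked by that single edge). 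So the per-path charge is not $O(\hl+\vl)$ under your scheme; to cap it you must count conflicts against the capacity-respecting \emph{optimal} paths of the rejected requests, not against their failed greedy attempts --- precisely the step you postpone. Two auxiliary pieces are also wrong: (i) external optimal paths cannot be ``charged to the $\Theta(c^S)$ paths that $\algn$ itself forces across the same boundary,'' because the \near-algorithm routes only inside the tile, so $\algn$ has no paths crossing the tile boundary at all; (ii) $\algn$ is not the closest-endpoint interval-packing rule of Section~\ref{sec:intervalp} (that rule needs preemption and left-to-right arrival order, and it is used for special segments in the deterministic algorithm), and optimal internal paths may bend many times, so the internal case is not ``exactly interval packing'' column by column.

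The paper closes the count with a two-sided degree argument on a bipartite conflict graph between $\algn(s)$ and $\opt(\near_s)\setminus\algn(s)$, which avoids your bookkeeping problem entirely and needs no internal/external split. There is a conflict edge $(r,r')$ when the vertical path of the accepted request $r$ shares an edge with the path of $r'$. Since at most $c$ capacity-respecting paths traverse any vertical edge and the vertical path of $r$ has at most $\vl$ edges, $\deg(r)\le \vl\cdot c$; since a rejected $r'$ must have encountered a saturated edge --- in particular a saturated vertical edge carrying $c$ accepted paths --- we get $\deg(r')\ge c$. Counting the edges of the bipartite graph from both sides gives $|\opt(\near_s)\setminus\algn(s)|\,/\,|\algn(s)|\le \vl\cdot c/c=\vl\le 2\log n$, and the theorem follows by adding back $|\opt(\near_s)\cap\algn(s)|\le|\algn(s)|$. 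Note where the $O(\log n)$ really comes from: the $\le\vl$ vertical edges of one accepted path, each shared by at most $c$ capacity-respecting conflicting paths --- the upper bound you were missing --- rather than from the perimeter $O(\hl+\vl)$ of a footprint; and the lower bound $\deg(r')\ge c$ uses only the fact that $r'$ was rejected, not where the optimum routes it, which is why externally-routed requests need no separate treatment.
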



\begin{proof}
    It suffices to prove that
\begin{align*}
  |\algn(s)| > \Omega \left(\frac {1}{\log n}\right) \cdot | \opt(\near_s)\setminus \algn(s)|
\end{align*}
%
We consider a bipartite conflict graph between requests in $\algn(s)$
and $\opt(\near_s)\setminus \algn(s)$. There is an edge $(r,r')\in \algn(s)\times \opt(\near_s)\setminus \algn(s)$ if
the vertical path of $r$ shares an edge with the path of $r'$ in $\opt(\near_s)\setminus \algn(s)$.

Since at most $c$ requests  can traverse the same vertical
edge, it follows that a route of a request in $\algn(s)$
conflicts with at most
\begin{align*}
  deg(r) & \leq \vl\cdot c\:.
\end{align*}

If $r' \not \in \algn(s)$, then it either encountered a saturated
horizontal edge or a saturated vertical edge. Hence, the degree of
$r'\in \opt(\near_s)\setminus \algn(s)$ is at least
\begin{align*}
  deg(r') &\geq c\:.
\end{align*}
By counting edges on each side we conclude that
\begin{align*}
  \frac{ | \opt(\near_s)\setminus \algn(s)|}{ |\algn(s)|} & \leq
  \frac{\max deg(r)}{\min deg(r')}\\
  &\leq \frac{\vl\cdot c}{c}.
\end{align*}
By Definition~\ref{def:xy}, $\vl \leq 2 \cdot \log n$, and
the theorem follows.
\end{proof}

\begin{coro}\label{coro:near}
$|\algn| \geq \Omega\left( \frac{1}{\log n}\right)\cdot |\opt (\near)|$.
\end{coro}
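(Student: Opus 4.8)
The plan is to deduce the corollary from the per-tile bound of Theorem~\ref{thm:near} by summing over all tiles. First I would observe that the sets $\{\algn(s)\}_s$ form a partition of $\algn$: the \near-Algorithm routes each accepted near request vertically within the single tile containing its source, so every routed request is counted in exactly one $\algn(s)$, giving $|\algn| = \sum_s |\algn(s)|$. Applying Theorem~\ref{thm:near} tile-by-tile and summing then yields $|\algn| \geq \Omega(1/\log n) \cdot \sum_s |\opt(\near_s)|$. It therefore remains to relate the sum of the per-tile optima to the single global quantity $|\opt(\near)|$.

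The heart of the argument is the inequality $\sum_s |\opt(\near_s)| \geq |\opt(\near)|$. To prove it, I would fix an optimal routing attaining $\opt(\near)$ and decompose the requests it serves according to their source tile: since every request $r_i\in\near$ has its source copy $(a_i,t_i)$ in a unique tile $s$, the served requests split into disjoint groups $G_s$, one per tile, with $\sum_s |G_s| = |\opt(\near)|$. For each tile $s$, the restriction of the optimal routing to the paths of the requests in $G_s$ is a sub-collection of the original capacity-respecting paths, hence still a feasible packing, and it serves exactly the requests in $G_s \subseteq \near_s$. Consequently $|\opt(\near_s)| \geq |G_s|$, and summing over $s$ gives the claimed inequality. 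Chaining this with the previous display produces $|\algn| \geq \Omega(1/\log n)\cdot |\opt(\near)|$, as required.

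The one point that needs care — and the only real obstacle — is that an optimal routing of $\near$ may route a request whose source lies in tile $s$ along a path that leaves $s$ (the text explicitly emphasizes that near requests are not forced to stay inside their tile). This means the groups $G_s$ must be defined by the \emph{source} tile rather than by where the paths travel, and the restriction argument must rely only on the fact that discarding paths from a feasible packing preserves feasibility, never on any confinement of $\opt(\near)$ within a tile. Once this is handled correctly, the restricted routing is legal for the request set $\near_s$, the bound $|\opt(\near_s)| \geq |G_s|$ follows immediately, and the corollary is established.
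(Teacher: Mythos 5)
Your proof is correct and matches the paper's (implicit) argument: the corollary is stated without proof precisely because it follows from Theorem~\ref{thm:near} by the summation you describe, using that $\{\algn(s)\}_s$ partitions $\algn$, that $\{\near_s\}_s$ partitions $\near$, and that restricting an optimal routing of $\near$ to the requests with source in tile $s$ gives a feasible routing witnessing $|\opt(\near_s)| \geq |G_s|$. Your attention to the fact that $\opt(\near)$ may route near requests outside their source tile is exactly the right point of care, and your handling of it is sound.
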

\subsection{Putting Things Together}\label{sec:together}
The online randomized algorithm \alg\ for packet routing on a directed
line proceeds as follows.
\begin{enumerate}
\item Choose the tiling parameters $\hl,\vl$ according to Definition~\ref{def:xy}.
\item Choose the phase shifts $\phi_{\hl}\in[0,\hl-1],\phi_{\vl}\in[0,\vl-1]$ of
  tiling independently and uniformly at random.
\item Flip a random fair coin $b\in \{0,1\}$.
\item If $b=1$, then consider only requests in $\far^+$, and apply the
  $\far^+$-algorithm to these requests.
\item If $b=0$, then consider only requests in $\near$, and apply the
  $\near$-algorithm to these requests.
\end{enumerate}

\begin{theorem}\label{thm:algrand}
  If $B,c \in [1,\log n]$, then the competitive ratio of \alg\ is $O(\log n)$.
\end{theorem}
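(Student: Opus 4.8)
The plan is to condition on the fair coin $b$ and combine the two per-category guarantees, with the crucial observation that the near/far classification—and hence $\opt(\near)$ and $\far^+$—is itself random, determined by the phase shifts $\phi_{\hl},\phi_{\vl}$. Writing the expectation over all the randomness (phase shifts, the fair coin, and the internal coins of the $\far^+$-algorithm), and using that $\alg=\algf$ when $b=1$ and $\alg=\algn$ when $b=0$ with $b$ independent and fair, I would start from
$$ E(|\alg|) = \tfrac12\, E(|\algf|) + \tfrac12\, E(|\algn|). $$

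Next I would set up the key structural inequality relating the optimum of the sparsified request set $R^+$ to the two categories. For any fixed choice of phase shifts, partition the requests served by an optimal routing of $R^+$ into their near and far parts. Restricting that routing to its near part yields a feasible routing of a subset of $\near$, of size at most $|\opt(\near)|$; restricting it to its far part yields a feasible routing of a subset of $R^+\cap\far=\far^+$, of size at most $|\opt(\far^+)|$. Hence, for every outcome of the phase shifts,
$$ |\opt(R^+)| \leq |\opt(\near)| + |\opt(\far^+)|, \qquad\text{equivalently}\qquad |\opt(\far^+)| \geq |\opt(R^+)| - |\opt(\near)|. $$

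I would then take expectations over the phase shifts. By Theorem~\ref{thm:far} (applied conditionally on the phase shifts, then averaged via the law of total expectation) there is a constant $c_1$ with $E(|\algf|)\geq \frac{c_1}{\log n}\,E(|\opt(\far^+)|)$, and by Corollary~\ref{coro:near} a constant $c_2$ with $E(|\algn|)\geq \frac{c_2}{\log n}\,E(|\opt(\near)|)$. Replacing both constants by $c\eqdf\min\{c_1,c_2\}$ (a weaker bound is still valid), substituting the structural inequality, and using $E(|\opt(R^+)|)=\frac14|\opt(R)|$ from Proposition~\ref{prop:class}, the plan is to compute
$$ E(|\alg|) \geq \frac{c}{2\log n}\bigl(E(|\opt(R^+)|) - E(|\opt(\near)|)\bigr) + \frac{c}{2\log n}\,E(|\opt(\near)|) = \frac{c}{2\log n}\cdot\frac14\,|\opt(R)|, $$
so that the $E(|\opt(\near)|)$ terms cancel and $E(|\alg|)=\Omega(1/\log n)\cdot|\opt(R)|$, which is the claim.

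The main obstacle is conceptual rather than computational: because the partition of requests into \near\ and \far\ depends on the random tiling, the quantities $|\opt(\near)|$ and $|\opt(\far^+)|$ are random, so the two per-category guarantees cannot simply be summed against a fixed target. The device that makes everything fit is the cancellation trick—forcing the common constant $c=\min\{c_1,c_2\}$ so that the unknown random contribution $E(|\opt(\near)|)$ drops out—together with applying the clean expectation identity of Proposition~\ref{prop:class} to $R^+$ rather than to \near\ or \far\ separately. I would also double-check the subset-monotonicity step used above, namely that $X\subseteq R'$ implies $|\opt(X)|\geq |\opt(R')\cap X|$, which holds in the space-time model because restricting a feasible routing to any subset of its served requests remains feasible.
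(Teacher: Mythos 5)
Your proposal is correct and follows essentially the same route as the paper's (very terse) proof sketch: split the expectation on the fair coin $b$, invoke Theorem~\ref{thm:far} and Corollary~\ref{coro:near} for the two classes, decompose $|\opt(R^+)|$ into its near and far parts, and relate $E(|\opt(R^+)|)$ to $|\opt(R)|$ via Proposition~\ref{prop:class}. Your linear-combination/cancellation formulation with $c=\min\{c_1,c_2\}$ is just an algebraically equivalent (and arguably cleaner) rendering of the paper's ``with probability $\frac12$ the coin chooses the bigger class'' step, and you correctly handle the detail the sketch glosses over, namely that the \near/$\far^+$ classification is itself random so the per-class guarantees must be applied conditionally on the phase shifts and then averaged.
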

\begin{proof sketch}{Theorem~\ref{thm:algrand}}
    The chosen tiling parameters and phase shifts induce a classification of the requests to two classes: $\near$ and $\far+$.
    With probability $\frac 12$ the random fair coin $b$ chooses the bigger class.
    Theorem~\ref{thm:far} and Corollary~\ref{coro:near} state that $\algf$ and $\algn$ are $O(\log n)$ competitive, and the theorem follows.
\end{proof sketch}

\subsection{Large Buffers} \label{sec:largeB}

In this section we consider a special setting in which the buffers are large.
Note that the Algorithm fails if $B=\omega(\log n)$ both with near and far requests.
Formally, assume that $\log n \leq B/c \leq n^{O(1)}$.

We briefly mention the required modifications.  The tiling parameters are $\hl=B/c$
and $\vl=1$.  This implies that there are no near requests and all requests are
classified as far.  Each tiles is partitioned in to a left half and a right half.
The algorithm considers only requests whose source vertex is in the left half of a
tile; such requests are denoted by $R^+$. Note that random shifting is employed so
that on the average $R^+$ contains half the requests.

The north and south side of the left half of each tile are ``blocked'' so that
detailed routing does not traverse these sides.  This means that $I$-routing is only
along horizontal edges.  In the right half of each tile, three $T$-routing are super
imposed. The first $T$-routing is for the paths that enter the tile from the west
side. These paths traverse the left half horizontally and then in the right half
undergo $T$-routing (so that they exit from the east or north side of the right
half).  The second $T$-routing is for the paths that enter the tile from the south
side of the right half. Finally, the third $T$-routing is for continuing the paths of
the $I$-routing from the border between the halves to the north and east sides of
the right half of the tile.

Path lengths are bounded as before (this is why we require
that $B/c$ is polynomial).  In addition the random
sparsification parameter $\lambda$ is the same.

The algorithm proceeds as follows:
\begin{enumerate}
\item Execute the \route\ algorithm with respect to the path requests in $R^+$ over
  the sketch graph.

\item \label{line:toss 2} Toss a biased $0$-$1$ coin $X_i$ such that $\Pr
  (X_i=1)=\lambda$. If $X_i=0$, then \textbf{reject} $r_i$.

\item
\label{line:load 2}\label{item:quarter 2}
If the addition of $\hat p_i$ causes the load of any sketch edge to be at least
$1/4$, then \textbf{reject} $r_i$.

\item\label{line:I 2} Apply $I$-routing to $r_i$.  If $I$-routing fails, then
  \textbf{reject} $r_i$. Otherwise, \textbf{inject} $r_i$ with the sketch path $\hat
  p_i$ and apply $T$-routing till the destination is reached.
\end{enumerate}

In this setting, the ratio between the capacity of the sketch edges that emanate from
a tile to the number of requests whose source vertex is in the tile is constant. This
constant ratio simplifies the proof of the following theorem compared to the proof of
Theorem~\ref{thm:algrand}.

\begin{theorem}
  If $\log n\leq B/c\leq n^{O(1)}$, then there exists a randomized online algorithm
  that achieves a logarithmic competitive ratio for packet routing in a
  uni-directional line.
\end{theorem}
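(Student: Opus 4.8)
The plan is to follow the same pipeline as the analysis of the $\far^+$-Algorithm (Theorems~\ref{thm:IPP rand}--\ref{thm:far}), specialized to the tiling $\hl=B/c$, $\vl=1$, and exploiting that here \emph{every} request is far. First I would argue that the random horizontal phase shift already suffices. Since each tile is split into a left half and a right half and only $\phi_{\hl}$ is relevant (the tiles have height $\vl=1$), the probability that a fixed source vertex lands in the left half is $\tfrac12$. Hence, exactly as in Proposition~\ref{prop:class}, $E(|\opt(R^+)|)=\tfrac12|\opt(R)|$, and a reverse Markov inequality gives $|\opt(R^+)|=\Omega(|\opt(R)|)$ with constant probability. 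Because there are no near requests, neither a fair coin nor the \near-Algorithm is needed, so it suffices to show that the modified $\far^+$-Algorithm routes an $\Omega(1/\log n)$ fraction of $|\opt(R^+)|$.

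Second I would push $R^+$ through the four stages of the algorithm. Bounding sketch-path lengths by $\pmax$ costs only a constant factor: since $B/c=n^{O(1)}$, the quantity $\pmaxst$ of Lemma~\ref{lemma:nBline} is polynomial, so $k=\lceil\log(1+3\pmax)\rceil=\Theta(\log n)$ and the argument of Proposition~\ref{prop:4n} carries over. By the $(2,k)$-competitiveness of \route\ (Theorem~\ref{thm:IPP}) we get $|\RIPP|\geq\Omega(|\opt(R^+)|)$. Random sparsification with $\lambda=1/\Theta(k)$ yields $E(|\RIPPt|)=\lambda|\RIPP|$, and the Chernoff/union-bound argument of Lemma~\ref{lemma:Rinj} applies verbatim: it only used $c(e)=\Omega(\log n)$, which still holds since here $c(e)=\hl\cdot c=\vl\cdot B=\Theta(B)$ and $B\ge\log n$. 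This gives $E(|\Rinj|)\geq\tfrac34 E(|\RIPPt|)$, so up to this point the algorithm retains an $\Omega(\lambda)=\Omega(1/\log n)$ fraction of $|\opt(R^+)|$.

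Third, and this is where the special structure does the work, I would show that $I$-routing followed by $T$-routing succeeds for a \emph{constant} fraction of $\Rinj$, rather than only a $1/\Theta(k)$ fraction as in the general case. The point is that both sketch edges leaving a tile have capacity $c^S=\Theta(B)$, while the $1/4$-load invariant established in line~\ref{line:load 2} caps the number of requests of $\Rinj$ that must leave any tile at $2\cdot(c^S/4)=\Theta(B)$ as well. This is the constant ratio between exit capacity and number of source requests noted after the statement, and it removes the perimeter-versus-area deficit that forced random sparsification to absorb a factor of $k$ in the $\vl=\Theta(\log n)$ regime. Concretely, $I$-routing need only move these $\Theta(B)$ requests horizontally through the (blocked) left half to the dividing line, using the $B{+}c$ copies/planes of Proposition~\ref{prop:filter}; the three superimposed instances of $T$-routing in the right half then carry them to the east or north exit. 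Since at each exit side the number of competing paths ($\le c^S/4$ per part) never exceeds the side capacity ($c^S/2$), $T$-routing never fails, whence $|\algf|=\Omega(|\Rinj|)$.

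I expect the main obstacle to be exactly this third step: verifying that the three $T$-routings superimposed in the right half, together with the horizontal-only $I$-routing in the left half, jointly respect all edge capacities, and that the $1/4$-load bound indeed translates into at most $\Theta(c^S)$ requests competing at each exit side. Once that capacity accounting is in place, chaining the estimates gives $E(|\algf|)\geq\Omega(\lambda)\cdot|\RIPP|\geq\Omega(1/\log n)\cdot|\opt(R^+)|\geq\Omega(1/\log n)\cdot|\opt(R)|$, which is the claimed $O(\log n)$ competitive ratio.
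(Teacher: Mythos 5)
Your proposal is correct and takes essentially the same route as the paper: the paper's own treatment of this case consists of exactly the modifications you describe --- tiling $\hl=B/c$, $\vl=1$, left/right halves with random shifting, and the same four-stage pipeline (\route, sparsification with the same $\lambda$, the $1/4$-load check, then horizontal $I$-routing followed by superimposed $T$-routings) --- and its only stated justification is the ``constant ratio'' remark, which your third step makes precise via the observation that the $1/4$-load invariant caps the per-tile demand at $\Theta(c^S)$ while each exit side has capacity $\Theta(c^S)$. Your write-up is in fact more detailed than the paper's proof, which is only a sketch.
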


Recall that for the case where $B,c \in [\Omega(\log
n),\infty)$ and $B/c = n^{O(1)}$, there is an even simpler
and \emph{deterministic} online algorithm with $O(\log n)$
competitive ratio, as stated in Theorem~\ref{thm:largeBc}.

\subsection{Small Buffers \& Large Link Capacities}\label{sec:smallBlargec}
The case $B\in [1,\log n]$ and $c \in [\log n, \infty)$ is dealt with
by simplifying the algorithm. We briefly mention the required
modifications. The tile size is $\hl=1$ and $\vl=\log n/B$.  The maximum
path length is set to $2(n-1)(1+B/c)$ which is polynomial (i.e.,
tiling is not needed to reduce the path length).  Instead of
partitioning a tile into quadrants, we partition each tile into an
upper half and a lower half. The set $R^+$ is defined to the set of
requests whose origin is in the lower half of a tile.

The set $\near$ is dealt by a vertical path. Since in every
tile $s$, $|\algn(s)| \geq \min \{c,|\opt(\near_s)|\}$ and
since $|\opt(\near_s)| \leq \frac {\log n }{B} \cdot (B +
c)$, it follows that $\frac{|\algn(s)|}{|\opt(\near_s)|}
\geq \frac {1}{\log n}$.

The set $\far^+$ is dealt by invoking a variation of the
$\far^+$-Algorithm. The modified invariants for detailed
routing are that paths may not enter or exit horizontally
through the lower half of a tile (but, of course, may
traverse the tile vertically).  $I$-routing simply routes
the first $\frac 34 \cdot c$ requests vertically. The
remaining capacity of $\frac c4$ is reserved for incoming
paths from the south side. In the upper half of each tile,
$X$-routing on a single column is employed.

We conclude with the following theorem.
\begin{theorem}
  If $B\in [1,\log n]$ and $c \in [\log n, \infty)$, then there exists a randomized
  online algorithm that achieves a logarithmic competitive ratio for packet routing
  in a uni-directional line.
\end{theorem}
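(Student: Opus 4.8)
The plan is to establish this theorem exactly as Theorem~\ref{thm:algrand} was established for the balanced case, replacing the quadrant-based tiling and detailed routing by the half-based tiling ($\hl=1$, $\vl=\log n/B$) and the simplified vertical detailed routing described in this section. As in Section~\ref{sec:together}, the algorithm first fixes the random phase shift of the tiling---which classifies each request as \near\ or \far\ and determines the set $R^+$ of requests originating in the lower half of their tile---and then flips a fair coin to run either the \near-subroutine or the $\far^+$-subroutine. Since only the vertical phase shift $\phi_{\vl}$ is relevant when $\hl=1$, the analog of Proposition~\ref{prop:class} gives $E(|\opt(R^+)|)=\frac{1}{2}\cdot|\opt(R)|$. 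It therefore suffices to show that the \near-subroutine is $O(\log n)$-competitive against $|\opt(\near)|$ and that the $\far^+$-subroutine is $O(\log n)$-competitive against $|\opt(\far^+)|$; the two bounds then combine through the fair coin and a reverse Markov inequality exactly as before.

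For the near requests I would use the short counting argument indicated in the section. Each tile contains $\hl\cdot\vl=\log n/B$ space-time nodes, and by Proposition~\ref{prop:filter} each node injects at most $B+c$ requests, so $|\opt(\near_s)|\leq \frac{\log n}{B}\cdot(B+c)$ for every tile $s$. Because $\hl=1$, greedy vertical routing packs the near requests of a tile onto a single column of capacity $c$, which gives $|\algn(s)|\geq \min\{c,|\opt(\near_s)|\}$. Using $c\geq \log n\geq B$ (hence $B+c\leq 2c$), one checks in both regimes of the minimum that $|\algn(s)|\geq \frac{1}{2\log n}\cdot|\opt(\near_s)|$, and summing over the tiles yields $|\algn|\geq \Omega(\frac{1}{\log n})\cdot|\opt(\near)|$.

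For the $\far^+$ requests the plan is to reproduce the chain $\algf\subseteq \Rinj\subseteq \RIPPt\subseteq \RIPP\subseteq \far^+$ together with the three results that govern it. First, since $\pmaxst=2(n-1)(1+B/c)$ is polynomial (here $B/c\leq 1$), no tiling is needed to shorten paths, and with $\hl=1$ the progress dimension is uncompressed so sketch paths still have length $O(n)$ and \route\ remains $(2,k)$-competitive with $k=O(\log n)$; the argument of Theorem~\ref{thm:IPP rand} then carries over to give $|\RIPP|\geq \Omega(|\opt(\far^+)|)$. Second, random sparsification with $\lambda=1/\Theta(k)$ keeps every sketch-edge load below $1/4$ with high probability: Proposition~\ref{prop:tiling} guarantees that each sketch edge has capacity at least $\log n$, which powers the Chernoff bound in the analog of Lemma~\ref{lemma:Rinj}, while $\vl\leq 2\log n$ keeps $\lambda\leq 2/\vl$ so that the sparsification estimate of Lemma~\ref{lemma:sparse} still applies; this gives $E(|\Rinj|)\geq \frac{3}{4}\cdot E(|\RIPPt|)$. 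Third, once the load bound holds, the modified detailed routing never drops an injected packet: $I$-routing sends the first $\frac{3}{4}c$ requests of a lower half straight up its single column, reserves $\frac{1}{4}c$ of the column capacity for paths entering from the south, and a single-column instance of $X$-routing carries all of these through the upper half to a copy of the destination. Combining these three facts reproduces Theorem~\ref{theorem:Rinjt}, $E(|\algf|)\geq \frac{\lambda}{4}\cdot|\RIPP|$, and hence $E(|\algf|)\geq \Omega(\frac{1}{\log n})\cdot|\opt(\far^+)|$ as in Theorem~\ref{thm:far}.

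The hard part will be the third step: for the degenerate single-column geometry with halves in place of quadrants, one must re-verify that the capacity split ($\frac{3}{4}c$ for $I$-routed paths and $\frac{1}{4}c$ for south-entering paths) is simultaneously large enough to route a constant fraction of the sparsified requests out of each lower half and small enough that every injected path crosses the upper half without being blocked, so that $I$-routing remains the only place a $\far^+$ request can be dropped. Once this routing invariant and the sparsification estimate are confirmed in the new dimensions, the \near\ and $\far^+$ bounds combine through the fair coin exactly as in the proof of Theorem~\ref{thm:algrand}, giving the claimed $O(\log n)$ competitive ratio.
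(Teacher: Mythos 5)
Your proposal follows the paper's own route essentially verbatim: the same tiling parameters ($\hl=1$, $\vl=\log n/B$), the same lower-half definition of $R^+$ with a fair coin between the \near\ and $\far^+$ subroutines, the same counting bound $|\algn(s)|\geq\min\{c,|\opt(\near_s)|\}$ with $|\opt(\near_s)|\leq\frac{\log n}{B}(B+c)$ for near requests, and the same chain $\algf\subseteq\Rinj\subseteq\RIPPt\subseteq\RIPP\subseteq\far^+$ for far requests; the paper gives only this sketch, and your reconstruction of it is sound in all essentials.

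One concrete correction to your third step: Lemma~\ref{lemma:sparse} is not the right tool there, and in fact cannot be applied. It is a statement about $\frac{\vl}{2}\times\frac{\hl}{2}$ matrices in which $I$-routing accepts the first surviving request of each \emph{row}; with $\hl=1$ the lower half degenerates to a single column and $I$-routing is instead capped at $\frac34 c$ requests per tile, so the lemma's geometry does not map onto this setting. (Relatedly, the bound $E(|\Rinj|)\geq\frac34 E(|\RIPPt|)$ is purely the Chernoff/union-bound argument of Lemma~\ref{lemma:Rinj}; Lemma~\ref{lemma:sparse} plays no role in it.) The replacement is simpler and settles your flagged ``hard part'' outright: every request that reaches $I$-routing is a far request originating in its tile, so its sketch path must cross that tile's north sketch edge (capacity $c\cdot\hl=c$) or east sketch edge (capacity $B\cdot\vl=\log n$); the load-check step (the analogue of Line~\ref{line:load}) therefore limits the number of such requests per tile to less than $\frac{c}{4}+\frac{\log n}{4}\leq\frac{c}{2}<\frac34 c$. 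Hence $I$-routing never rejects a request that passes the load check, the split of $\frac34 c$ for injected paths versus $\frac{c}{4}$ for south-entering through-traffic keeps every vertical edge within capacity, the upper half's horizontal capacity $B\cdot\frac{\vl}{2}=\frac{\log n}{2}$ exceeds the at most $\frac{\log n}{4}$ east-exiting paths so $T$/$X$-routing is collision-free, and one gets $E(|\algf|)\geq E(|\Rinj|)\geq\frac34\lambda\cdot|\RIPP|$, slightly stronger than the $\frac{\lambda}{4}|\RIPP|$ you aimed to reproduce.
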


\paragraph{Remark.}
The space-time graph seems to assign symmetric roles to the time axis and the space
axis.  Such a symmetry would imply that one could reduce the case of large buffers to
the case of large link capacities. However, this is not true due to the definition of
a destination.  A destination (in the space-time graph) is a row of
vertices (namely, the set of copies of an original vertex).
This implies that one cannot simply transpose the graph and exchange the roles of
space and time.

\section{Open Problems}
Two basic problems related to the design and analysis of
online packet routing remain open even for uni-directional
lines.
\begin{inparaenum}[(i)]
\item Achieve a constant competitive ratio or prove a lower
    bound that rules out a
  constant competitive ratio.
\item Achieve a logarithmic competitive ratio by a distributed algorithm (as opposed
  to a centralized algorithm).
\end{inparaenum}

\subsection*{Acknowledgments}
We thank  Niv Buchbinder, Boaz Patt-Shamir and Adi Ros{\'e}n for useful discussions.

\bibliographystyle{alpha}
\bibliography{bib_sicomp}

\begin{thebibliography}{AKOR03}

\bibitem[AAF96]{AAF}
Baruch Awerbuch, Yossi Azar, and Amos Fiat.
\newblock Packet routing via min-cost circuit routing.
\newblock In {\em ISTCS}, pages 37--42, 1996.

\bibitem[AAP93]{AAP}
B.~Awerbuch, Y.~Azar, and S.~Plotkin.
\newblock Throughput-competitive on-line routing.
\newblock In {\em FOCS '93: Proceedings of the 1993 IEEE 34th Annual
  Foundations of Computer Science}, pages 32--40, Washington, DC, USA, 1993.
  IEEE Computer Society.

\bibitem[AKK09]{AKK}
Stanislav Angelov, Sanjeev Khanna, and Keshav Kunal.
\newblock The network as a storage device: Dynamic routing with bounded
  buffers.
\newblock {\em Algorithmica}, 55(1):71--94, 2009.
\newblock (Appeared in APPROX-05).

\bibitem[AKOR03]{AKOR}
William Aiello, Eyal Kushilevitz, Rafail Ostrovsky, and Adi Ros{\'e}n.
\newblock Dynamic routing on networks with fixed-size buffers.
\newblock In {\em SODA}, pages 771--780, 2003.

\bibitem[AKRR03]{AKRR}
Micah Adler, Sanjeev Khanna, Rajmohan Rajaraman, and Adi Ros{\'e}n.
\newblock Time-constrained scheduling of weighted packets on trees and meshes.
\newblock {\em Algorithmica}, 36(2):123--152, 2003.

\bibitem[ARSU02]{ARSU}
Micah Adler, Arnold~L. Rosenberg, Ramesh~K. Sitaraman, and Walter Unger.
\newblock Scheduling time-constrained communication in linear networks.
\newblock {\em Theory Comput. Syst.}, 35(6):599--623, 2002.

\bibitem[AZ05]{AZ}
Yossi Azar and Rafi Zachut.
\newblock Packet routing and information gathering in lines, rings and trees.
\newblock In {\em ESA}, pages 484--495, 2005.
\newblock (See also manuscript in \url{http://www.cs.tau.ac.il/~azar/}).

\bibitem[BEY98]{be}
Allan Borodin and Ran El-Yaniv.
\newblock {\em Online computation and competitive analysis}.
\newblock Cambridge University Press, New York, NY, USA, 1998.

\bibitem[BL97]{BL}
Yair Bartal and Stefano Leonardi.
\newblock On-line routing in all-optical networks.
\newblock In {\em ICALP}, pages 516--526, 1997.

\bibitem[BN06]{BN06}
Niv Buchbinder and Joseph~(Seffi) Naor.
\newblock Improved bounds for online routing and packing via a primal-dual
  approach.
\newblock {\em Foundations of Computer Science, Annual IEEE Symposium on},
  0:293--304, 2006.

\bibitem[BN09a]{BNsurvey}
Niv Buchbinder and Joseph~(Seffi) Naor.
\newblock The design of competitive online algorithms via a primal-dual
  approach.
\newblock {\em Foundations and Trends in Theoretical Computer Science},
  3(2-3):99--263, 2009.

\bibitem[BN09b]{BN09}
Niv Buchbinder and Joseph~(Seffi) Naor.
\newblock Online primal-dual algorithms for covering and packing.
\newblock {\em Math. Oper. Res.}, 34(2):270--286, 2009.

\bibitem[EM10]{DBLP:conf/icalp/EvenM10}
Guy Even and Moti Medina.
\newblock An {\it o}(log{\it n})-competitive online centralized randomized
  packet-routing algorithm for lines.
\newblock In Samson Abramsky, Cyril Gavoille, Claude Kirchner, Friedhelm {Meyer
  auf der Heide}, and Paul~G. Spirakis, editors, {\em ICALP (2)}, volume 6199
  of {\em Lecture Notes in Computer Science}, pages 139--150. Springer, 2010.

\bibitem[EM11]{DBLP:conf/spaa/EvenM11}
Guy Even and Moti Medina.
\newblock Online packet-routing in grids with bounded buffers.
\newblock In Rajmohan Rajaraman and Friedhelm {Meyer auf der Heide}, editors,
  {\em SPAA}, pages 215--224. ACM, 2011.

\bibitem[GLL82]{GLL}
U.~I. Gupta, D.~T. Lee, and J.~Y.-T. Leung.
\newblock Efficient algorithms for interval graphs and circular-arc graphs.
\newblock {\em Networks}, 12(4):459--467, 1982.

\bibitem[KT95]{KT}
Jon~M. Kleinberg and {\'E}va Tardos.
\newblock Disjoint paths in densely embedded graphs.
\newblock In {\em FOCS}, pages 52--61, 1995.
\newblock (See also manuscript in
  \url{http://www.cs.cornell.edu/home/kleinber/}).

\bibitem[LMR94]{leighton1994packet}
FT~Leighton, B.M. Maggs, and S.B. Rao.
\newblock {Packet routing and job-shop scheduling in $O(congestion+ dilation)$
  steps}.
\newblock {\em Combinatorica}, 14(2):167--186, 1994.

\bibitem[LMR99]{leighton1999fast}
T.~Leighton, B.~Maggs, and A.W. Richa.
\newblock {Fast algorithms for finding $O(congestion+ dilation)$ packet routing
  schedules}.
\newblock {\em Combinatorica}, 19(3):375--401, 1999.

\bibitem[MU05]{MU}
M.~Mitzenmacher and E.~Upfal.
\newblock {\em {Probability and computing: Randomized algorithms and
  probabilistic analysis}}.
\newblock Cambridge Univ Pr, 2005.

\bibitem[RR09]{RR}
Harald R{\"a}cke and Adi Ros{\'e}n.
\newblock Approximation algorithms for time-constrained scheduling on line
  networks.
\newblock In {\em SPAA}, pages 337--346, 2009.

\bibitem[RT96]{RT}
Yuval Rabani and \'{E}va Tardos.
\newblock Distributed packet switching in arbitrary networks.
\newblock In {\em STOC '96: Proceedings of the twenty-eighth annual ACM
  symposium on Theory of computing}, pages 366--375, New York, NY, USA, 1996.
  ACM.

\bibitem[ST97]{srinivasan1997constant}
A.~Srinivasan and C.P. Teo.
\newblock {A constant-factor approximation algorithm for packet routing, and
  balancing local vs. global criteria}.
\newblock In {\em Proceedings of the twenty-ninth annual ACM symposium on
  Theory of computing}, pages 636--643. ACM, 1997.

\bibitem[Tur09]{T}
Jonathan~S. Turner.
\newblock Strong performance guarantees for asynchronous buffered crossbar
  scheduler.
\newblock {\em IEEE/ACM Trans. Netw.}, 17(4):1017--1028, 2009.

\end{thebibliography}

\appendix

\section{Proof of Lemma~\ref{lemma:nB}}\label{sec:proofnB}

The following lemma shows that bounding path lengths in a fractional
path packing problem over a space-time graph to a polynomial
length decreases the fractional throughput only by a constant factor.  The
lemma is an extension of a similar lemma from~\cite{AZ}.

Consider a directed graph $G=(V,E)$ with edge capacities $c(e)$ and
buffer size $B$ in each vertex.  Let $G^{st}$ denote the space-time
graph of $G$ (see Section~\ref{sec:spacetime}).  Let $c_{\min} =
\min\{c(e) \mid e\in E\}$.  Let
$\dist_G(u,v)$ denote the length of a shortest path from $u$ to $v$ in
$G$. Let $\diam(G)$ denote the diameter of $G$ defined as follows
\begin{align*}
  \diam(G) &\eqdf \max \{ \dist_{G}(u,v) \mid \text{there is a path from $u$ to $v$ in $G$}\}.
\end{align*}

\textbf{Lemma~\ref{lemma:nB}} \emph{Let $\alpha\eqdf\frac{c_{\min}}{2\cdot
    (\sum_{e\in E} c(e)+n\cdot B)}$, $\nu\eqdf1/\alpha$, and $\pmax\geq
  (\nu+2)\cdot \diam(G)$.  Then, $$|\opt_f(R \mid \pmax)| \geq \frac
  1{2} \cdot \left(1-\frac 1e\right) \cdot |\opt_f(R)|\:.$$}

\begin{proof}
  Let $f^*$ denote an optimal fractional path packing in $G^{st}$ with respect to a set of flow
  requests $R$, that is $f^* = \opt_f(R)$.   The flow $f^*/2$ has a
  throughput that is half the throughput of $f^*$ and the load of each edge
  is at most $\frac {1}{2}$.

  Consider the following pipelining scheme. The time dimension is partitioned into
  intervals at multiples of $\diam(G)$. Let $\cut_i$ denote the set of edges in $G^{st}$ defined by
  \begin{align*}
    \cut_i &\triangleq \{ ((u,t),(v,t+1)) \mid u=v \text { or } (u,v)\in E, t=i\cdot \diam(G)\}.
  \end{align*}
  The fractional flow we construct is the sum of two flows $g$ and $h$
  defined as follows.  The flow $g$ is based on the flow $f^*/2$,
  where flow paths that traverse $\cut_i$, for $i>0$, are modified as
  follows:
  \begin{inparaenum}[(i)]
  \item If a flow path $p_j$ in $g$ traverses $\cut_i$, then split the
    path $p_j$ as follows: keep a fraction $(1-\alpha)$ of $p_j$ in
    the flow $g$ and transfer an $\alpha$-fraction of $p_j$ to $h$.
  \item Cancel what is left of flow paths in $g$ that traverse
    $\cut_i$ if they started at time $t_j\leq (i-\nu) \cdot \diam(G)$.
    Such flow paths correspond to packets that have been buffered
    (instead of forwarded) during many time steps.
  \end{inparaenum}
  Note that path flows of $f^*/2$ that start between cuts are added
  to $g$. The flow $h$ is a simple routing along shortest paths in which incoming
  flow (that needs to be further routed) is forwarded towards its
  destination without any buffering. (Note that $h$ does not use edges in $E_1$.)

  We claim that the choice of parameters implies that $h$ is a legal
  flow that succeeds in shipping all the flow that is transferred to
  it.  Consider all the flow that is transferred to $h$ in $\cut_i$.
  We show all this flow reaches its destination without any need to
  cross the next cut $\cut_{i+1}$.  Moreover, $h$ incurs a load of at
  most $1/2$ on each edge. Thus, the flow $h$ can be viewed as
  separate flows between consecutive cuts.

  Note that the time that elapses between two consecutive cuts equals
  the diameter of $G$.  This means that every flow path can be
  augmented by a shortest path to its destination before the next cut.

  To show that the load incurred by $h$ on each edge is at most $1/2$,
  suppose that all the flow that is transferred to $h$ in $\cut_i$
  traverses the same edge in $E_0$.  The amount of flow
  transferred to $h$ is bounded by $\alpha\cdot (\sum_{e\in E} c(e)+n\cdot B)\leq
  c_{\min}/2$, and hence the load in $h$ is bounded by $1/2$ as required.

  We claim that the throughput of $g+h$ is at least $(1-1/e)$ times
  the throughput of $f^*/2$. Indeed, flow is lost only when a residue
  of a flow path is canceled.  This happens only after a flow path
  traverses $\nu$ cuts.  By this time, the flow along this
  path has been decimated to a fraction of $(1-\alpha)^{\nu} \leq 1/e$ of its initial amount.

To complete the proof, note that the length of each flow path in $g+h$
is at most $(\nu+2)\cdot \diam(G)$. The number of edges of a flow path
in $g+h$ that are in $E_0$ is at most $\diam(G)$. The number of edges of a flow path in $g$
that are in $E_1$ is less than $(\nu+1)\cdot \diam(G)$, and flow paths in $h$ lack edges in $E_1$.
\end{proof}

\section{Proof sketch of Theorem~\ref{thm:algd}}\label{sec:proofdd}

The proof of Theorem~\ref{thm:algd} follows the proof of Theorem~\ref{thm:alg}.
The proof of the propositions below follows the analogous proofs in Section~\ref{sec:analysis}.

\begin{proposition}\label{prop:optd}
    $|f^*(R \mid \pmax)| \geq |\opt_f(R \mid \pmax)|$.
\end{proposition}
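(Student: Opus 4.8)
The plan is to replay the argument of Proposition~\ref{prop:opt} with the parameters of the $(d+1)$-dimensional space-time graph substituted throughout. Fix an arbitrary fractional packing $h$ of paths in $G^{st}$ all of whose flow paths have length at most $\pmax$; it suffices to manufacture from $h$ a feasible fractional flow $g$ in the sketch graph $S$ with the same throughput and path lengths at most $\pmax$, and then to instantiate $h = \opt_f(R\mid\pmax)$. I would define $g$ exactly as in the one-dimensional case: for each edge $e$ of $S$, set $g(e)$ to be the sum of the flow of $h$ over all edges of $G^{st}$ that coalesce to $e$. Since coalescing a whole tile (side length $k$) to a single node collapses every straight run through a tile into a single hop, each flow path of $h$ maps to a flow path of $g$ of no greater length (in fact shorter by a factor of about $k$); in particular $g$ respects the bound $\pmax$. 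Because flow is conserved and nothing is discarded, $|g| = |h|$.

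It then remains to verify that $g$ satisfies the capacities of $S$. For a sketch edge $e$ between two tiles this is immediate: $g(e)$ is a sum of $h$-flows across the $G^{st}$-edges coalesced into $e$, each respecting its own capacity, so by linearity $g(e) \le c(e)$. The only substantive point is the node (interior) capacity of a tile $s$. Here I would bound the total $h$-flow passing through $s$ by the total capacity of the edges emanating from the vertices of $s$. In the $(d+1)$-dimensional space-time graph each vertex has $d$ outgoing edges of capacity $c$ (the spatial $E_0$-edges) and one outgoing edge of capacity $B$ (the buffer $E_1$-edge), for a per-vertex total of $B + d\cdot c$; a tile contains $k^{d+1}$ vertices, so the flow through $s$ is at most $(B + d\cdot c)\cdot k^{d+1}$. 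By the $d$-dimensional capacity assignment this is dominated by the node capacity $c(s) = (d+1)\cdot k^{d+1}\cdot (B + d\cdot c)$, with a slack factor of $d+1$ to spare. Hence $g$ is feasible in $S$.

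Combining the two paragraphs, $g$ is a feasible length-bounded flow in $S$ with $|g| = |h|$, so $|f^*(R\mid\pmax)| \geq |g| = |h|$; choosing $h = \opt_f(R\mid\pmax)$ yields the claim. The projection step and the between-tiles edge-capacity bound are identical to the one-dimensional case up to notation, so those are routine. The step I expect to require genuine care is the node-capacity check: I must count the per-vertex outgoing capacity correctly as $B + d\cdot c$ (rather than $B+c$) and confirm that the tile really contains $k^{d+1}$ vertices, so that $(B+d\cdot c)\,k^{d+1} \le (d+1)\,k^{d+1}\,(B+d\cdot c)$ is precisely the inequality that makes $g$ admissible. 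This is the crux, although it remains only a counting argument.
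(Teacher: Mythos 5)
Your proof is correct and follows essentially the same route as the paper: the paper proves Proposition~\ref{prop:optd} simply by noting it follows the analogous one-dimensional argument (Proposition~\ref{prop:opt}), which is precisely what you replay — coalescing the flow onto the sketch graph, invoking linearity for inter-tile edges, and checking node capacities. Your counting of the per-vertex outgoing capacity as $B+d\cdot c$ over $k^{d+1}$ tile vertices, against the node capacity $(d+1)\cdot k^{d+1}\cdot(B+d\cdot c)$, is exactly the right $d$-dimensional substitution.
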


\begin{proposition}\label{prop:scaledd}
    $$\frac{1}{d+1}\cdot k^{d+1}\cdot (B+d\cdot c)\cdot |f^*_{\{1,d+1,\infty\}}(R \mid \pmax)| \geq |f^*(R \mid \pmax)| \geq |f^*_{\{1,d+1,\infty\}}(R \mid \pmax)|$$
\end{proposition}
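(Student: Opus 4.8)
The plan is to imitate the proof of Proposition~\ref{prop:scaled} verbatim, replacing its two-dimensional bookkeeping by the $(d+1)$-dimensional one. The point is that $f^*$ and $f^*_{\{1,d+1,\infty\}}$ are maximum flows over the \emph{same} underlying graph (the sketch graph with each tile-node split into an in-copy, an out-copy, and an interior edge); they differ only in the capacities. Hence the proposition reduces to an edge-by-edge comparison of the capacity functions $c$ on $S$ and $\hat c$ on $\hat S$, and I write $F\eqdf\frac{1}{d+1}\cdot k^{d+1}\cdot(B+d\cdot c)$ for the claimed factor.

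First I would record the two capacity bounds from the geometry of a $(d+1)$-dimensional tile, which is a cube of side $k$ with $k^{d+1}$ vertices. Such a tile has one facet orthogonal to the time axis (crossed by $k^{d}$ buffering edges of capacity $B$) and $d$ facets orthogonal to spatial axes (each crossed by $k^{d}$ forwarding edges of capacity $c$). Thus every between-tile edge $e$ present in both $S$ and $\hat S$ satisfies $\hat c(e)=1$ and $c(e)\in\{B\cdot k^{d},\,c\cdot k^{d}\}$, so that
\[
  F\cdot\hat c(e)\ \geq\ c(e)\ \geq\ \hat c(e).
\]
The right inequality holds because $B,c,k\ge 1$; the left inequality holds because $\max\{B,c\}\cdot k^{d}\le (B+d\cdot c)\cdot k^{d}\le F$, where the last step uses $k\ge d+1$ (valid since $k=\Theta(\log n)$ while $d=O(1)$). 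For the interior edge of a tile, $\hat c=d+1$ and the node capacity $c(s)$ equals the total edge capacity inside the tile, namely $k^{d+1}(B+d\cdot c)$ (this is precisely the bound used in Proposition~\ref{prop:optd}), whence
\[
  c(s)\ =\ F\cdot\hat c(\text{interior}).
\]

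Given these bounds, both inequalities follow exactly as in Proposition~\ref{prop:scaled}. For the right inequality $|f^*(R\mid\pmax)|\ge |f^*_{\{1,d+1,\infty\}}(R\mid\pmax)|$, every capacity of $\hat S$ is at most the corresponding capacity of $S$, so any flow feasible in $\hat S$ is feasible in $S$ and the optimal throughput can only increase. For the left inequality, I would take the optimal flow $f^*$ in $S$ and divide it by $F$; the upper bounds above show that the scaled flow respects all capacities of $\hat S$ --- the unit between-tile capacities and the interior capacity $d+1$ --- so it is feasible in $\hat S$ with throughput $|f^*|/F$, which rearranges to $F\cdot|f^*_{\{1,d+1,\infty\}}(R\mid\pmax)|\ge|f^*(R\mid\pmax)|$.

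The only real work, and the step most prone to error, is the $(d+1)$-dimensional capacity accounting: counting the $d+1$ facet orientations correctly, confirming that the between-tile capacities $\max\{B,c\}\,k^{d}$ are absorbed by $F$ (this is where the hypothesis $k\ge d+1$ is needed), and checking that the tile node capacity equals $F$ times the interior-edge capacity $d+1$, equivalently $c(s)=k^{d+1}(B+d\cdot c)$, so that the node equation closes with the \emph{same} factor $F$ as the edge bound. Once these are verified, the two feasibility/scaling arguments are immediate.
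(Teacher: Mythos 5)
Your overall strategy is the paper's own: the paper proves this proposition only by reference to the proof of Proposition~\ref{prop:scaled}, i.e., an edge-by-edge comparison of the capacity functions of $S$ and $\hat S$ followed by the two scaling/feasibility observations. Your treatment of the between-tile edges (capacities $Bk^{d}$ and $ck^{d}$ versus $1$, absorbed by $F$ once $k\geq d+1$) and of the right-hand inequality is correct.

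The gap is in the interior-edge accounting, the step you yourself single out. You take the tile node capacity to be $c(s)=k^{d+1}(B+d\cdot c)$, so that $c(s)=F\cdot(d+1)$. But Section~\ref{sec:generalizations} defines $c(s)=(d+1)\cdot k^{d+1}\cdot(B+d\cdot c)$ (the analogue of $2k^{2}(B+c)$ for $d=1$); the quantity $k^{d+1}(B+d\cdot c)$ you cite is the bound on the flow that can cross a tile, used \emph{inside} the proof of Proposition~\ref{prop:optd}, not the capacity assigned to the tile, which is deliberately padded by the factor $d+1$. With the paper's $c(s)$, the interior ratio is $c(s)/(d+1)=k^{d+1}(B+d\cdot c)=(d+1)\cdot F$, so dividing $f^*$ by $F$ can leave $(d+1)^{2}$ units of flow on an interior edge of capacity $d+1$; your scaling argument then yields only the weaker factor $k^{d+1}(B+d\cdot c)$. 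In fact, under the paper's definition the left inequality is not merely unproved but false: if arbitrarily many requests start and end inside one tile, then $|f^*(R\mid\pmax)|=c(s)=(d+1)k^{d+1}(B+d\cdot c)$ while $|f^*_{\{1,d+1,\infty\}}(R\mid\pmax)|=d+1$, a ratio of $(d+1)\cdot F$; note also that at $d=1$ the stated factor $\tfrac12 k^{2}(B+c)$ contradicts Proposition~\ref{prop:scaled}. So the real issue is an inconsistency in the paper between Proposition~\ref{prop:scaledd} and its definition of $c(s)$; your proof resolves it silently by redefining $c(s)$, which you should state explicitly (and then Proposition~\ref{prop:optd} must be rechecked against the smaller node capacity, since the flow through a tile can slightly exceed $k^{d+1}(B+d\cdot c)$ due to paths entering through its facets). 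Either repair costs only the constant factor $d+1=O(1)$ and leaves Theorem~\ref{thm:algd} intact.
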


\begin{proposition}\label{prop:fippd}
    $|\IPP| \geq \Omega \left(\frac{d+1}{k^{d+1}\cdot(B+d\cdot c)}\right) \cdot |f^*(R \mid \pmax)|$
\end{proposition}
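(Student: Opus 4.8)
The plan is to mirror the one-dimensional argument of Proposition~\ref{prop:fipp}, chaining together the competitiveness guarantee of the path-packing algorithm with the $d$-dimensional capacity-downscaling bound already recorded in Proposition~\ref{prop:scaledd}.

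First I would invoke Theorem~\ref{thm:IPP}. Since the $\{1,d+1,\infty\}$-sketch graph $\hat S$ has all edge capacities at least $1$, and the \IPP\ algorithm is run with the legal-path-length bound $\pmax$ (recall that splitting nodes forces $\pmax \leftarrow 2\pmax+1$, and $k=\lceil\log(1+3\pmax)\rceil$ is chosen accordingly so that the load bound equals $k$), Theorem~\ref{thm:IPP} guarantees that the integral packing returned by \IPP\ is $(2,k)$-competitive against the optimal \emph{fractional} packing over $\hat S$ subject to the same length constraint. This yields
$$|\IPP| \geq \frac{1}{2}\cdot |f^*_{\{1,d+1,\infty\}}(R \mid \pmax)|\:.$$
Next I would apply the left inequality of Proposition~\ref{prop:scaledd}, which relates the optimal fractional throughput on $\hat S$ to that on the (un-downscaled) sketch graph $S$. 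Rearranging that inequality gives
$$|f^*_{\{1,d+1,\infty\}}(R \mid \pmax)| \geq \frac{d+1}{k^{d+1}\cdot(B+d\cdot c)}\cdot |f^*(R\mid \pmax)|\:.$$
Substituting this into the previous display produces the claimed bound, with the constant $1/2$ absorbed into the $\Omega(\cdot)$.

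The main obstacle I anticipate is not the two-line chain itself but making sure the inputs line up exactly. In particular, the competitiveness in Theorem~\ref{thm:IPP} must be read against fractional packings (so the intermediate quantity is $f^*_{\{1,d+1,\infty\}}(R\mid\pmax)$ and not an integral optimum), and the downscaling factor $k^{d+1}\cdot(B+d\cdot c)/(d+1)$ in Proposition~\ref{prop:scaledd} must be verified to be the correct $d$-dimensional analogue of the planar factor $k^{2}\cdot(B+c)$ appearing in Proposition~\ref{prop:fipp}. This is where the $d$-dependence of both the tile volume $k^{d+1}$ and the per-tile node capacity $(d+1)\cdot k^{d+1}\cdot(B+d\cdot c)$ enters, and it is precisely the content that Proposition~\ref{prop:scaledd} isolates. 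Provided Proposition~\ref{prop:scaledd} is in hand, the argument is immediate.
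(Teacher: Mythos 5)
Your proof is correct and matches the paper's approach: the paper proves this proposition exactly as its one-dimensional analogue (Proposition~\ref{prop:fipp}), namely by chaining the $(2,k)$-competitiveness of \IPP\ from Theorem~\ref{thm:IPP} against the fractional optimum on the downscaled sketch graph with the left inequality of Proposition~\ref{prop:scaledd}, absorbing the factor $1/2$ into the $\Omega(\cdot)$. Your reading of the downscaling factor $k^{d+1}\cdot(B+d\cdot c)/(d+1)$ as the $d$-dimensional analogue of $k^{2}\cdot(B+c)$ is exactly right.
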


\begin{proposition}\label{prop:preemptionsd}
    $|\IPP'| \geq \frac{1}{2k} \cdot |\IPP|$
\end{proposition}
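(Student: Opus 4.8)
The plan is to reproduce the proof of Proposition~\ref{prop:preemptions} almost verbatim, since every ingredient it uses is independent of the dimension $d$. The $d$-dimensional algorithm of Section~\ref{sec:algd} still routes the special (first and last) segments of detailed paths, track by track, by the distributed simulation of optimal interval packing from Section~\ref{sec:intervalp}; the only change is that a maximal axis-parallel line $L$ of the untilted space-time grid may now run along any of the $d+1$ axis directions rather than just two. Since distinct lines are edge-disjoint, each line $L$ still induces its own independent one-dimensional interval-packing instance, and a special segment competes only with other special segments lying on the same line.

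First I would fix a line $L$ and recall the structural property of the simulation: if interval $I_i=(a_i,b_i)$ preempts interval $I_j=(a_j,b_j)$ on $L$, then the two overlap and $b_i\le b_j$, so the edge $(b_i-1,b_i)$ lies in $I_j$. Forming the forest of preemptions on the intervals of $L$ (each preempted interval being a child of the interval that preempted it) and inducting on forest-distance exactly as in Proposition~\ref{prop:preemptions}, every proper descendant $I_j$ of $I_i$ contains the edge $(b_i-1,b_i)$. Hence all these descendants' detailed segments traverse one fixed space-time edge, so their sketch paths all traverse the single corresponding sketch edge; by Theorem~\ref{thm:IPP} the load of $\IPP$ on that edge is at most $k=\lceil\log(1+3\pmax)\rceil$, and therefore $I_i$ has at most $k-1$ proper descendants in the forest of $L$.

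Next I would set up the bipartite graph of preemptions on $\IPP'\cup(\IPP\setminus\IPP')$, with an edge $(r_i,r_j)$ whenever $r_i\in\IPP'$ is an ancestor of $r_j\in\IPP\setminus\IPP'$ in the combined preemption forest. A preempted request is preempted only once, so it lies in a single tree whose root---never preempted, hence reaching its last tile---is its unique ancestor in $\IPP'$; thus every node of $\IPP\setminus\IPP'$ has degree one. Using that a path has at most two special segments (a fact that does not depend on $d$) together with the $(k-1)$-descendant bound of the previous paragraph, a node of $\IPP'$ has degree at most $2(k-1)$. Counting the edges from both sides gives $|\IPP\setminus\IPP'|\le 2(k-1)\cdot|\IPP'|$, so $|\IPP|\le(2k-1)\cdot|\IPP'|$ and hence $|\IPP'|\ge\frac{1}{2k}\cdot|\IPP|$.

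The step demanding the most care is the second half of the degree count, namely verifying that a surviving request serves as an ancestor for at most $2(k-1)$ preempted requests once first and last segments of all $d+1$ orientations are considered simultaneously; this is exactly the combinatorial heart of Proposition~\ref{prop:preemptions}, and it transfers unchanged because the bound rests only on the per-line descendant count ($\le k-1$, from the load bound) and on the dimension-free fact that each path contributes at most two special segments. I would also check the two supporting structural facts that are genuinely about the $d$-dimensional algorithm: that special-segment routing is indeed the per-line interval packing (so the closer-endpoint preemption rule applies) and that the transition from a completed special segment into internal-segment routing never causes a preemption, so that membership in $\IPP'$ is governed solely by special-segment conflicts.
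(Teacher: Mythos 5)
Your proposal is correct and matches the paper's approach: for Proposition~\ref{prop:preemptionsd} the paper simply states that the proof follows the analogous Proposition~\ref{prop:preemptions} in Section~\ref{sec:analysis}, which is exactly the per-line interval-packing forest argument, the $(k-1)$ descendant bound from the unit sketch-edge capacities and the load bound of Theorem~\ref{thm:IPP}, and the bipartite degree count with at most $2$ special segments per path that you reproduce. Your added checks that each ingredient is dimension-independent (lines along any of the $d+1$ axes, edge-disjointness of distinct lines, and preemption-free transition into internal-segment routing) are precisely what justifies the paper's ``follows analogously'' claim.
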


\begin{proposition}\label{prop:lastd}\label{prop:Rsd}
    $|\alg| \geq \frac{1}{(d+1)\cdot k} \cdot |\IPP'|$
\end{proposition}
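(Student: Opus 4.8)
The plan is to mirror the one-dimensional argument of Proposition~\ref{prop:last}, with the single change that the interior-edge capacity $2$ is replaced by $d+1$, reflecting the definition of the $\{1,d+1,\infty\}$-sketch graph. Since $\{\IPP'_s\}_{s\in V(S)}$ is a partition of $\IPP'$ and $\{\alg_s\}_{s\in V(S)}$ is a partition of $\alg$, it suffices to establish the per-tile bound $|\alg_s| \geq \frac{1}{(d+1)\cdot k}\cdot |\IPP'_s|$ for every tile $s$ and then sum over all tiles.

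Fix a tile $s$. First I would bound $|\IPP'_s|$ from above. Every request in $\IPP'_s$ reaches the entry side of its last tile $s$, so its sketch path traverses the interior edge of $s$ in $\hat S$; this interior edge has capacity $d+1$. By Theorem~\ref{thm:IPP}, the \IPP\ algorithm overloads each edge by a factor of at most $k$, so at most $(d+1)\cdot k$ of the accepted sketch paths can traverse this interior edge. Hence $|\IPP'_s| \leq (d+1)\cdot k$.

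Next I would show that $|\alg_s| \geq 1$ whenever $\IPP'_s \neq \emptyset$. Detailed routing in the last tile routes each surviving request greedily toward a copy of its destination, resolving contentions by preempting the path whose destination is farther in favor of the one whose destination is closest, exactly as in the $d=1$ case. This guarantees that the request whose destination is globally closest within the tile is never preempted, so at least one request from $\IPP'_s$ reaches its destination. Combining the two estimates yields $|\alg_s| \geq 1 \geq \frac{1}{(d+1)\cdot k}\cdot |\IPP'_s|$, and summing over all tiles gives the claim.

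The main obstacle is the second step: verifying that $d$-dimensional detailed routing in the last tile still guarantees at least one surviving request. Unlike the line, the last tile is a $(d+1)$-dimensional box and last-tile routing is the higher-dimensional analogue of the straight ``closest-destination-wins'' routing. I would need to confirm that the preemption/priority rule induces a well-defined total order by distance-to-destination with no cyclic preemptions, so that a global minimizer of this distance necessarily survives. Once this structural property of last-tile routing is in place, the counting argument is identical to the one-dimensional proof of Proposition~\ref{prop:last}.
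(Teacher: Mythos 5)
Your proposal matches the paper's own treatment: the paper proves this proposition simply by analogy to the one-dimensional Proposition~\ref{prop:last}, i.e., partition over tiles, bound $|\IPP'_s| \leq (d+1)\cdot k$ via the interior-edge capacity $d+1$ of the $\{1,d+1,\infty\}$-sketch graph and the factor-$k$ overload from Theorem~\ref{thm:IPP}, then observe that last-tile detailed routing saves at least one request per nonempty tile. The structural issue you flag --- that the closest-destination priority rule in a $(d+1)$-dimensional last tile must be acyclic so a global minimizer survives --- is likewise left implicit in the paper's proof sketch, so your argument is at the same level of rigor and takes essentially the same route.
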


\noindent \textbf{Theorem~\ref{thm:algd}.} \emph{The competitive ratio of the
algorithm for $d$-dimensional grid networks is $$O\left(k^{d+3}
\cdot(B+d\cdot c) \right)=O\left(\log^{d+4} n \right)$$ provided that
$B,c \in [3,\log n]$.}
\medskip

\begin{proof sketch}{Theorem~\ref{thm:algd}}
    Bounding path lengths incurs a constant loss to the competitive
    ratio. Algorithm \IPP\ incurs an additional constant loss to the
    competitive ratio. The capacity assignment of $\{1,d+1\}$ reduces
    the throughput by a factor of $\frac{1}{d+1}\cdot k^{d+1} \cdot
    (B+d\cdot c)$.
    Similarly to the uni-dimensional case, a fraction of at most $(1-\frac{1}{2k})$ of the requests in $\IPP$ are preempted before they reach their last cube.
    Finally, a fraction of at least $\frac{1}{(d+1)\cdot k}$ of the requests that reach their last tile are successfully routed, i.e.,  by detailed routing in the last tile.
    Hence, the total fraction of requests that are successful routed is $\Omega \left(\frac{1}{k^{d+3}\cdot(B+d\cdot c)}\right)$.
    The theorem follows since $B,c \in [3,\log
    n]$.
\end{proof sketch}

\section{Proof of Theorem~\ref{thm:bufferless}}\label{sec:proofs}

\paragraph{Theorem~\ref{thm:bufferless}.}
  \emph{There exists an online deterministic preemptive algorithm for packet
  routing in bufferless $d$-dimensional grids with a competitive ratio of $O(\log ^{d+2}
  n)$.}
\medskip

\begin{proof}
    Since $B=0$, the space-time graph $G^{st}$ after untilting
    consists of unconnected $d$-dimensional grids.
    Within each such $d$-dimensional grid, we apply a version of our algorithm.  Note that since $B=0$, trivially $\pmax \leq
    \sum_i\ell_i$ (i.e., the diameter of the grid) and does not depend
    on $c$. Note also that the destination is a single node
    $(b_i,t')$, where $t'=t_i+\|a_i-b_i\|_1$. Thus we need not
    introduce sink nodes. The edge capacities are $d \cdot c$ to every interior edge (instead of
    $(d+1)$). Hence, the capacity assignment reduces the throughput by
    a factor of $k^{d}$ (instead of $k^{d+1} \cdot
    (B+d\cdot c)$).
  \end{proof}

\section{Proof of Lemma~\ref{lemma:revMarkov}}\label{sec:RevMarkovproof}
\begin{lemma}[\protect{\textbf{A Reverse Markov Inequality}}]\label{lemma:revMarkov}
  Let $X$ be a nonnegative bounded random variable attaining values in $[0,a]$. For every $d < a$,
  \begin{eqnarray*}
    \Pr \left(X \geq d \right) & \geq & \frac {E(X)- d}{a- d}\:.
  \end{eqnarray*}
\end{lemma}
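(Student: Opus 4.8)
The plan is to reduce the reverse inequality to the ordinary Markov inequality by considering the complementary nonnegative random variable $Y \triangleq a - X$. Since $X$ takes values in $[0,a]$, we have $Y \geq 0$ almost surely, so Markov's inequality applies to $Y$ at any positive threshold.

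First I would observe that the event $\{X \leq d\}$ coincides with the event $\{Y \geq a-d\}$, and that $a-d>0$ by the hypothesis $d<a$. Applying Markov's inequality to $Y$ at threshold $a-d$ then yields
$$\Pr(X \leq d) = \Pr(Y \geq a-d) \leq \frac{E(Y)}{a-d} = \frac{a - E(X)}{a-d},$$
where the last equality uses linearity of expectation, namely $E(Y)=a-E(X)$.

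Next I would complement. Since $\{X > d\} \subseteq \{X \geq d\}$, monotonicity of probability gives $\Pr(X \geq d) \geq \Pr(X > d) = 1 - \Pr(X \leq d)$, and substituting the bound above gives
$$\Pr(X \geq d) \geq 1 - \frac{a - E(X)}{a-d} = \frac{(a-d) - (a - E(X))}{a-d} = \frac{E(X) - d}{a-d},$$
which is exactly the claimed inequality.

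The argument is entirely routine; the only point requiring mild care is the transition between strict and non-strict inequalities. Markov's inequality naturally controls $\Pr(Y \geq a-d) = \Pr(X \leq d)$, whose complement is $\Pr(X > d)$ rather than $\Pr(X \geq d)$, and monotonicity of probability closes this gap without incurring any loss. An alternative one-line derivation would bound $E(X) = \int_0^a \Pr(X > t)\,dt \leq d + (a-d)\cdot\Pr(X \geq d)$ directly and rearrange, but the reduction to Markov is cleaner and avoids the integral representation altogether.
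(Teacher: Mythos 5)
Your proof is correct and follows essentially the same route as the paper's: both reduce the claim to the ordinary Markov inequality applied to the complementary variable $Y \triangleq a - X$ and then take complements. The only cosmetic difference is the handling of strict versus non-strict inequalities --- the paper works with $\Pr(X < d) = \Pr(Y > a-d)$ so that complementation is exact, while you bound $\Pr(X \geq d) \geq \Pr(X > d) = 1 - \Pr(X \leq d)$; both are valid and lose nothing.
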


\begin{proof}
    We prove that $\Pr \left(X <  d \right)  \leq  1-\frac {E(X)- d}{a- d}$. Let Y be a random variable such that $Y \triangleq a - X$. Note that, $Y$ is also a nonnegative bounded random variable attaining values in $[0,a]$. Hence, $X < d$ if and only if $Y > a-d$. The expected value of $Y$ is $E(Y)=a-E(x)$. The lemma follows by applying Markov Inequality~\cite{MU}, as follows:
    \begin{eqnarray*}
        \Pr \left(X < d \right) & = & \Pr \left(Y > a - d \right) \\
        & \leq & \frac {E(Y)}{a - d}\\
        & = & \frac {a-E(x)}{a - d}\\
        & = & 1- \frac {E(x)- d}{a - d}\:.
  \end{eqnarray*}
\end{proof}

\section{Online Integral Path Packing Algorithm \route}
\label{sect:routealg} \label{sec:IPP} In this section we present algorithm \route\
and prove Theorem~\ref{thm:IPP}. The presentation follows the framework
of~\cite{BN06,BNsurvey}. The presentation emphasizes two points: (1)~The graph over
which the requests arrive may be infinite. (2)~There is an upper bound $\pmax$ on the length
of a path that may serve a request.

\paragraph{Linear Programming Formulation.}
Fractional path packing is a multi-commodity flow problem, and is formulated by a
linear program (LP). In Figure~\ref{fig:LP}, the dual LP corresponds to the
fractional path packing problem as well as the corresponding primal LP are listed.

The notation in the LPs is as follows. For each request $i$, let $P_i$ denote the set
of paths in $G$ that can serve the request $r_i=(a_i,b_i)$. The length of every path
$p \in P_i$ is at most $\pmax$.  The variables $f(i,p)$ denote the amount of flow
allocated to request $i$ along the path $p$. The demand constraint in the dual LP
states that at most one unit of flow can be jointly allocated to all the paths in
$P_i$. The capacity constraint states that at most $c(e)$ units of flow can traverse
an edge $e$. The objective is to maximize the flow amount.

The primal LP has two types of variables: one variable $z_i$ per request $r_i$ and
one variable $x_e$ per edge $e$. The variable $x_e$ can be interpreted as a weight
assigned to the edge $e$. The covering constraint states that for every request $r_i$
and every path $p\in P_i$, the weight of the path $p$ plus $z_i$ should be at least
$1$. The objective is to minimize the sum of edge weights times their capacities plus
the sum of the variables $z_i$.

\begin{figure}
\centering
  \begin{tabular}{c}
  \centerline{\fbox{\begin{minipage}{0.7\textwidth}
 \begin{center}
        \begin{eqnarray*}
        \min\sum_{e\in E}x_{e}\cdot c(e)+\sum_{i}z_i ~~~s.t. &&\\
\forall i~\forall p\in P_{i} ~: ~~~ \sum_{e\in p}x_{e}+z_{i} && \geq 1
~~~\text{(covering const.)}
\\
x,z &&\geq 0
\end{eqnarray*}
 \end{center}
    \end{minipage}}}
    \\ (I) \\
    \centerline{\fbox{\begin{minipage}{0.7\textwidth}
\begin{center}
        \begin{eqnarray*}\max \sum_{i}\sum_{p\in P_{i}} f(i,p) & s.t.\\
        \text{(demand const.)}&\forall i  & \sum_{p\in P_{i}}f(i,p)\leq 1\\
        \text{(capacity const.)}& \forall e\in E &  \flow(e) \leq c(e)\\
&&f\geq 0
\end{eqnarray*}
\end{center}
    \end{minipage}}}
 \\ (II) \\
  \end{tabular}
  \caption{
(I) The Primal linear program.
    (II) The Dual linear program.}
   \label{fig:LP}
\end{figure}

\paragraph{The Online Algorithm for Integral Packing of Paths.}
The listing of algorithm \route\ appears in
Figure~\ref{fig:route}.  Note that the graph $G=(V,E)$ may
be infinite.  This implies that the primal LP has an
infinite number of variables (however, all but a finite
subset of the primal LP variables are zero).  We assume
that there exists a lightest path oracle that, given edge
weights $x_e$ and a request $r_i$, finds a lightest path $p
\in P_i$.

\begin{algorithm}
    Input: $G=(V,E)$ (possibly infinite), sequence of requests $\{r_i\}_{i=1}^\infty$ where $r_i\triangleq (a_i,b_i)$.
    \\
    \textbf{Upon arrival} of request $r_i$:
        \begin{enumerate}
            \item Let $\alpha(p,i) \triangleq \sum_{e\in p} x_{e}$.
            \item $p \leftarrow \textrm{argmin} \{\alpha(p',i) : p'\in
                  P_i\: \}$ (find a lightest path from $a_i$ to $b_i$ using an oracle).
            \item If $\alpha(p,i) < 1$ then, \textbf{route} $r_i$ along $p$:
            \begin{enumerate}
                    \item $f(i,p) \leftarrow 1$.
                    \item \label{step:xupdate}For each $e\in p$ do
                  \begin{align*}
                        x_{e} \gets&
                          x_{e} \cdot 2^{1/c(e)}+\frac{1}{\pmax}\cdot (2^{1/c(e)}-1)\:.
                  \end{align*}
                  \item $z_{i} \leftarrow 1 - \alpha(p,i)$.
            \end{enumerate}
            \item Else, \textbf{reject} $r_i$.
            \begin{enumerate}
              \item$z_{i} \leftarrow 0$.
            \end{enumerate}
        \end{enumerate}
\caption{The \route\ algorithm.  We assume that all the
  variables are initialized to zero using lazy initialization. We assume that given edge variables $x_e$, there exist an oracle that returns a lightest path in $P_i$. }
  \label{fig:route}
\end{algorithm}

For a given sequence $\sigma$ of requests let $F^*(\sigma)$
denote the maximum flow of the dual LP.  An online integral
path packing algorithm is said to be
\emph{$(\alpha,\beta)$-competitive} if for every sequence
$\sigma$ of requests (1)~its total throughput is at least
$F^*(\sigma)/\alpha$, and (2)~the load of every edge is at
most $\beta$.

\medskip
\noindent
The proof of the following theorem follows the framework of~\cite{BN09,BNsurvey}.

\paragraph{Theorem~\ref{thm:IPP}.}
\emph{ Algorithm \route\ is a $(2,\log(1+ 3\cdot \pmax))$-competitive online integral
  path packing algorithm under the following assumptions: (1)~ $\min_{e} c(e) \geq
  1$.  (2)~A path is legal if it contains at most $\pmax$ edges.
(3)~There is an oracle, that given edge weights and a
  request, finds a
  lightest legal path from the source to the destination.
}

\begin{proof}
  Let us denote by $\Delta_i P$ (respectively, $\Delta_i D$) the change in the primal
  (respectively, dual) cost function after request $r_i$ is processed.
We claim that $\Delta_i P\leq 2\cdot \Delta_i D$.

If $r_i$ is rejected, then $\Delta_i P=\Delta_i D=0$.  If $r_i$ is accepted, then
$\Delta_i D = 1$ and $\Delta_i P = \sum_{e \in p} \Delta_i x_{e}\cdot c(e) + \Delta_i
z_{i}$.  Step (3b) increases the cost $\sum_{e} x_{e}\cdot c(e)$ as follows:
    \begin{eqnarray}
    \label{eq:deltax}
    \sum_{e} \Delta_i x_{e}\cdot c(e)
    & = &
    \sum_{e \in p}
    \left[x_{e} \cdot( 2^{1/c(e)}-1)+\frac{1}{\pmax}\cdot (2^{1/c(e)}-1)\right]
    \cdot c(e)  \nonumber\\
    & = &
    \sum_{e\in p} \left(x_{e}+\frac{1}{\pmax}\right) \cdot
    ( 2^{1/c(e)}-1)\cdot c(e)\nonumber\\
    & \leq &
    c_{\min} \cdot (2^{1/c_{\min}}-1)
    \sum_{e\in p} \left(x_{e}+\frac{1}{\pmax}\right)\nonumber\\
    & \leq &
    1 \cdot (2^{1}-1)\sum_{e\in p} \left(x_{e}+\frac{1}{\pmax}\right) \nonumber\\
    & \leq &
    \sum_{e\in p} x_{e}+\sum_{e\in p} \frac{1}{\pmax}\nonumber\\
    & \leq & \alpha(p,i)+1\:.
\end{eqnarray}

Hence after step (3c):

\begin{eqnarray}
    \Delta_i P
    & = & \sum_{e \in p} \Delta_i x_{e}\cdot c(e) + \Delta_i  z_{i}\nonumber\\
    & \leq & (\alpha(p,i) +1) + (1- \alpha(p,i))\nonumber\\
    & = & 2\:.
\end{eqnarray}
Since $\Delta_i D = 1$ it follows that $\Delta_i P \leq 2 \cdot \Delta_i D$, as
required.

After dealing with each request, the primal variables $\{x_{e}\}_{e} \cup
\{z_{i}\}_{i}$ constitute a feasible primal solution. Given a dual solution
$\{f(i,p)\}$, let $|f|\triangleq\sum_{i}\sum_{p\in P_{i}} f(i,p)$. Let $\{f^*(i,p)\}$
denote an optimal dual solution.  Using weak duality and since $\Delta_i P \leq
2\cdot \Delta_i D$ it follows that:
        \begin{eqnarray}
        |f^*| \leq & \sum_{e\in E} x_{e}\cdot c(e)+\sum_{i}z_{i} \leq & 2\cdot|f|\:,
        \end{eqnarray}
    which proves $2$-competitiveness; namely $|f|\geq \frac{1}{2}\cdot|f^*|$.

    We now prove $\log (1+3 \cdot \pmax)$-feasibility of the dual solution, i.e. for each $e\in E$, $\flow(e) \leq \log (1+3 \cdot \pmax)$.
    The update rule of the primal variables $\{x_e\}_e$ in Step~\ref{step:xupdate} implies,
     \begin{eqnarray}
\nonumber        x_{e} & = & \frac{1}{\pmax}(2^{1/c(e)}-1)\cdot \sum_{j=0}^{\flow(e)-1}(2^{1/c(e)})^j \\
\nonumber              & = & \frac{1}{\pmax}(2^{1/c(e)}-1)\cdot
              \frac{2^{\flow(e)/c(e)}-1}{2^{1/c(e)}-1}\\
\label{eq:xe}              & = & \frac{2^{\flow(e)/c(e)}-1}{\pmax}\:.
     \end{eqnarray} 
     The update rule requires that $\alpha(p,i)<1$ for every $p$.  Hence, before the
     update $x_e<1$, and after the update $x_e < 2^{1/c(e)}+\frac{1}{\pmax}\cdot
     (2^{1/c(e)}-1)$.  Since $c_{\min} \geq 1$, it follows that $x_e<3$.

By Equation~\eqref{eq:xe} it follows that
    \begin{eqnarray*}
    \frac{2^{\flow(e)/c(e)}-1}{\pmax}&<& 3\:.
    \end{eqnarray*}
    Implying that $\flow(e) \leq \log (1+3 \cdot \pmax)\cdot c(e)$, as required.
\end{proof}

\section{Two Models For Nodes in Store-and-Forward Networks}\label{sec:model}
\newcommand{\comb}{\text{\emph{comb}}} The literature contains two
different models of node functionality. In an effort to make the
comparison concrete and perhaps clearer, we present schematic
implementations of the nodes in each model.

To simplify the discussion, we use two type of packets: regular packets and ghost
packets. A regular packet contributes a unit to the throughput (if delivered) and a
ghost packet does not contribute to the throughput and acts as a ``place holder''.
We therefore may treat a buffer as if it always contains $B$ packets.
If a buffer contains only ghost packets, then it is empty in reality.
A reasonable policy does not drop a regular packet while keeping a ghost packet.

\paragraph{Model 1.}
This model is used by~\cite{ARSU,RR}. Figure~\ref{fig:node1}
depicts a block diagram of a node. A node contains a combinational
circuit \comb, a buffer consisting of $B$ flip-flops, and $c$
flip-flops on each  link that emanates the node.

In each clock cycle, the combinational circuit $\comb$ receives $c$
packets from each incoming link, $B$ packets from its buffer, and
$B+c$ packets from its local inputs. It outputs $B$ packets to the
buffer and $c$ packets along each outgoing link. Packets that were
input but not output are considered dropped packets unless the node is
their destination.

\paragraph{Model 2.}
This model is used by~\cite{AKK,AZ}. Figure~\ref{fig:node2} depicts a
block diagram of a node. A node contains two combinational circuits
$\comb_0$ and $\comb_1$, two sets of $B$ latches, and one latch
on the link that emanates the node. Note that this implementation uses a
two-phase clock. The phases are denoted by $\phi_0$ and $\phi_1$.

In the first phase of each clock cycle, the combinational circuit $\comb_0$ receives
one packet from the incoming link, $B$ packets from its buffer, and $B$ packets from its
local input. In total $2B+1$ packets (either regular or ghost packets) are fed to the
$\comb_0$ circuit. The $\comb_0$ circuit outputs $B$ packets and the rest are dropped
unless this is their destination.  In the
second clock phase of each clock cycle, the combinational circuit $\comb_1$ outputs
one packet along the outgoing link and $B$ packets are sent back to $\comb_0$.

\paragraph{Remarks:}
\begin{enumerate}
\item The setting $B=c=1$ in Model 1 is strictly stronger than $B=1$ in Model 2.
  Indeed, in Model 1, if a node receives a regular packet from its neighbor and is
  also input a regular packet locally, then it may store one packet and forward the
  other one.  On the other hand, in Model 2, one of the packets must be dropped.

\item We could also allow for more injected packets in each node. In
  this case, the node must drop some of them. Of course, the online
  algorithm has to decide which packets should be dropped.
\item The linear lower bounds for $B=1$ in~\cite{AZ,AKK} hold only with
  respect to Model 2.
\item It is not clear how to extend Model 2 for the case that $c>1$ or $B=0$.
\item Under the common assumption that the cost of a flip-flop is
  roughly twice the cost of a latch, the hardware needed for the
  latches of a node in Model 2 is roughly the same as the cost of
  flip-flops of a node in Model 1 (with $c=1$).
\end{enumerate}

\begin{figure}[h!]%
  \centering
  \subfloat[]{
\includegraphics[width=0.9\textwidth]{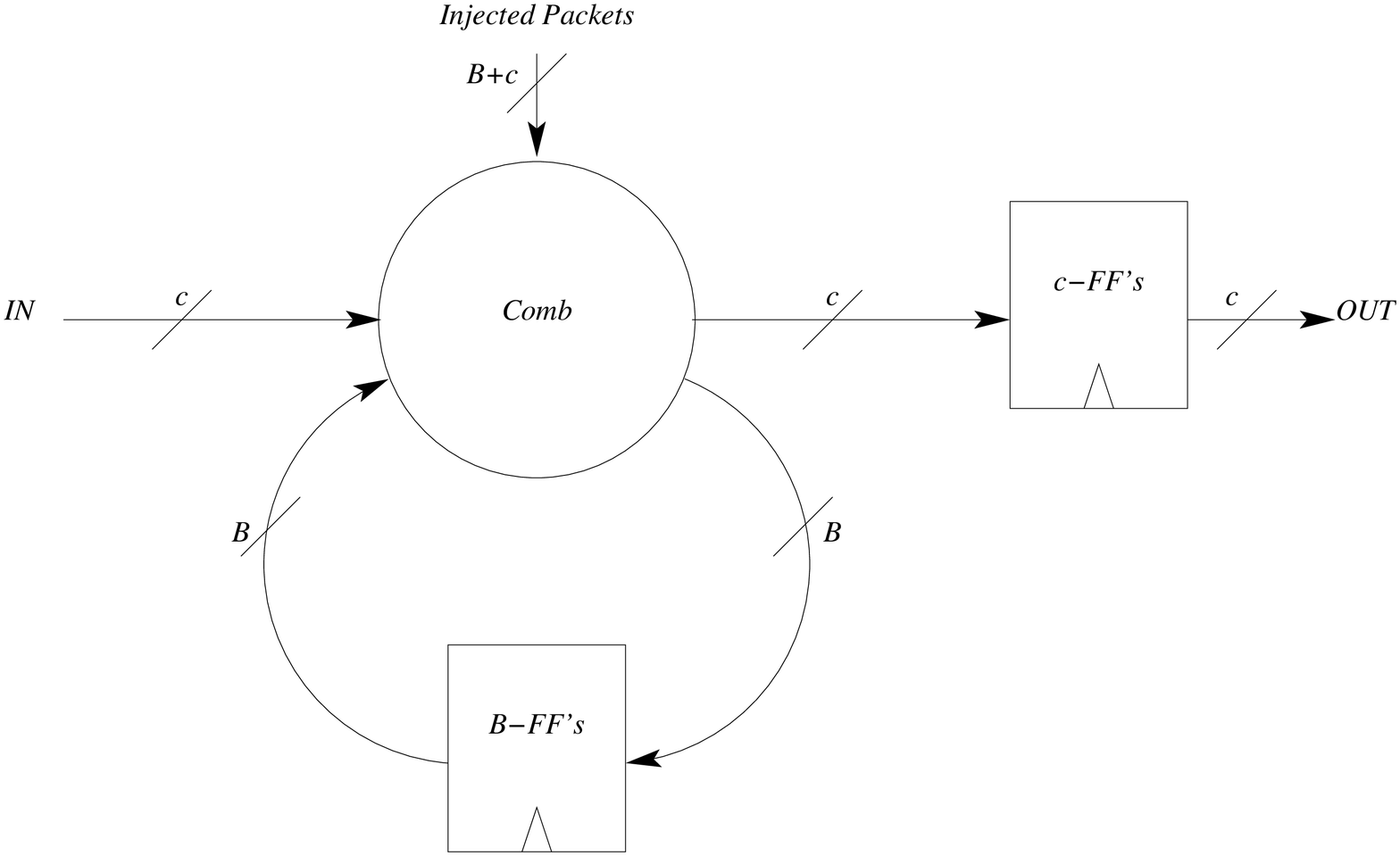}
\label{fig:node1}
}%
\\
  \subfloat[]{
\includegraphics[width=0.9\textwidth]{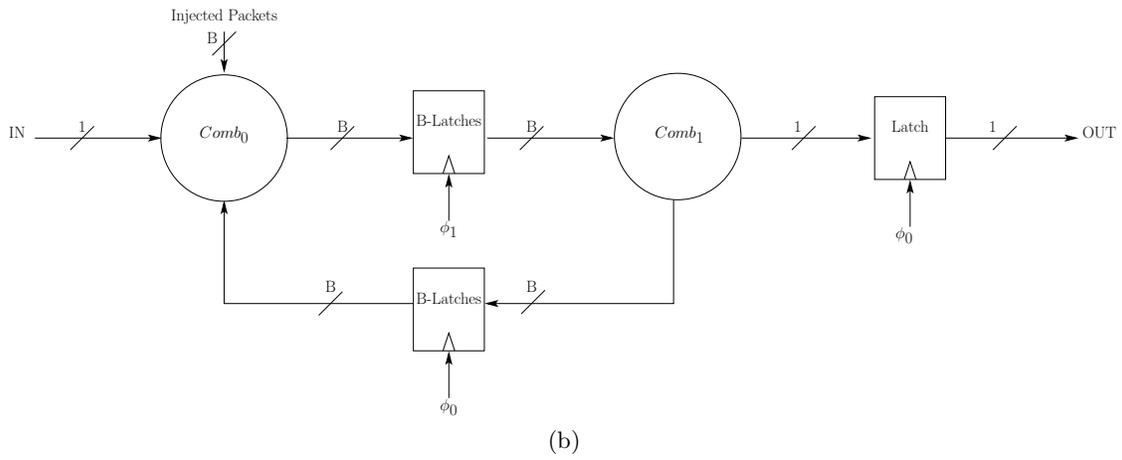}
\label{fig:node2}
}%
%
\caption{(a) A schematic of a node in Model-1. (b) A schematic of a node in Model-2.}
\end{figure}
\end{document}